\documentclass[10pt,a4paper]{article}
\usepackage[margin=1in]{geometry}
\usepackage{graphicx}
\usepackage[colorlinks=true, allcolors=black]{hyperref}
\usepackage{amsmath,amsthm,amsfonts,amssymb,amscd,mathrsfs}
\usepackage{enumerate}
\usepackage{mathrsfs}
\usepackage{xcolor}
\usepackage{graphicx}
\usepackage{hyperref}
\usepackage{enumitem}
\usepackage{bbm}
\usepackage{float}
\usepackage{fancyvrb}
\usepackage{booktabs}
\usepackage{subcaption}
\usepackage{algpseudocode,algorithmicx,algorithm}
\usepackage{etoolbox}
\patchcmd{\abstract}{\small}{}{}{}
\usepackage{arydshln}

\def\algbackskip{\hskip-\ALG@thistlm}
\newtheorem{theorem}{Theorem}[section]
\newtheorem{definition}[theorem]{Definition}

\newtheorem{proposition}[theorem]{Proposition}
\newtheorem{lemma}[theorem]{Lemma}

\newtheorem{assumption}[theorem]{Assumption}
\newtheorem{problem}[theorem]{Problem}

\newtheorem*{theorem*}{Problem}
\newtheorem{remark}[theorem]{Remark}

\DeclareMathOperator{\rank}{rank}
\usepackage{mathrsfs}
\usepackage{mathtools}

\allowdisplaybreaks[4]
\title{Data-Enabled Policy and Value Iteration for Continuous-Time Linear Quadratic Output Feedback Control}
\date{\today}
\author{Jun Xie\footnote{College of Artificial Intelligence, Nankai University, China. Email: xiejun@mail.nankai.edu.cn}~~~~  Yuan-Hua Ni\footnote{College of Artificial Intelligence, Nankai University, China. Email: yhni@nankai.edu.cn}~~~~Yiqin Yang\footnote{The Key Laboratory of Cognition and Decision Intelligence for Complex Systems, Institute of Automation, Chinese Academy of Sciences, China. Email: yiqin.yang@ia.ac.cn}~~~~Bo Xu\footnote{The Key Laboratory of Cognition and Decision Intelligence for Complex Systems, Institute of Automation, Chinese Academy of Sciences, China. Email: xubo@ia.ac.cn}}

\begin{document}
	
	\maketitle
	
	\begin{abstract}

    This paper proposes efficient policy iteration and value iteration algorithms for the continuous-time linear quadratic regulator problem with unmeasurable states and unknown system dynamics, from the perspective of direct data-driven control.
    Specifically, by re-examining the data characteristics of input-output filtered vectors and introducing QR decomposition, an improved substitute state construction method is presented that further eliminates redundant information, ensures a full row rank data matrix, and enables a complete parameterized representation of the feedback controller.
    Furthermore, the original problem is transformed into an equivalent linear quadratic regulator problem defined on the substitute state with a known input matrix, verifying the stabilizability and detectability of the transformed system.
    Consequently, model-free policy iteration and value iteration algorithms are designed that fully exploit the full row rank substitute state data matrix. 
    The proposed algorithms offer distinct advantages: they avoid the need for prior knowledge of the system order or the calculation of signal derivatives and integrals; the iterative equations can be solved directly without relying on the traditional least-squares paradigm, guaranteeing feasibility in both single-output and multi-output settings; and they demonstrate superior numerical stability, reduced data demand, and higher computational efficiency. 
    Moreover, the heuristic results regarding trajectory generation for continuous-time systems are discussed, circumventing potential failure modes associated with existing approaches.

		\textbf{Keywords:} direct data-driven control, continuous-time linear quadratic optimal control, substitute state, policy iteration, value iteration.
	\end{abstract}

	
	\section{Introduction}\label{sec1}	
    Optimal control is a cornerstone of modern control theory, and the Linear Quadratic Regulator (LQR) problem, where the system is linear and the cost functional is quadratic, occupies a central position. Featuring a canonical mathematical form, an explicit solution via the Riccati equation, and broad utility as an approximation for many nonlinear optimal control problems, LQR has been widely applied in robotics \cite{Robot}, autonomous driving \cite{Autonomous}, and other fields.  
    However, system states are often rendered unmeasurable by cost-effectiveness and engineering practicality \cite{Chen-LQG}, whereas system parameters are frequently unknown; these two factors substantially limit the direct applicability of conventional LQR approaches.
    Accordingly, this paper focuses on synthesizing the optimal feedback controller for the continuous-time LQR problem with unmeasurable states and unknown dynamics. 
    Given that the lack of model information poses difficulties for estimating unknown states using model-free methods \cite{Deng-CT-output},
    utilizing limited accessible information to solve this problem is quite challenging.

    In the discrete-time setting, the LQR problem with unmeasurable states and unknown parameter matrices has been extensively studied.
    Policy gradient methods are often used to solve for optimal static or dynamic output feedback controllers \cite{Sof,dLQR}; however, such methods face complex optimization structures, and the resulting controllers may not achieve the optimal performance of the LQR problem \cite{Deng-DT-output}.
    An alternative approach relies on finite-length history stacks \cite{Lewis-IOH} or input-output filtered vectors \cite{Deng-DT-output} to construct an injective representation of the unmeasurable states. Controllers derived from this approach yield performance equivalent to that of state feedback controllers \cite{State-Para}, which enables the development of direct data-driven policy iteration (PI) and value iteration (VI) algorithms. 
    However, the feasibility conditions of this approach are overly restrictive for multi-output problems, thereby limiting its scope of application.
    The work in \cite{DD-PIVI} leverages the powerful representational capability of data matrices under Willems' Fundamental Lemma \cite{Willems,De-Formulas} and develops a data-enabled framework extending \cite{Deng-DT-output}. This method eliminates the need for numerical solutions via least squares (LS), guaranteeing feasibility in multi-output scenarios.

   Since most physical systems inherently possess continuous-time dynamics, developing data-driven representations for continuous-time linear systems and solving the continuous-time LQR problem without relying on system models or state measurements is of greater significance. The traditional paradigm of discretizing the system before controller design often leads to suboptimal performance \cite{SAAA-CT-output}; accordingly, this paper aims to design optimal controllers directly for the continuous-time LQR problem.   
   If the states are directly measurable, the work in \cite{Jiang-2012} proposes model-free integral reinforcement learning to simultaneously estimate the cost parameter matrix and state feedback gain. By leveraging the continuous-time extension of Willems' Fundamental Lemma \cite{CT-PCPE}, the research in \cite{CT-DD-PI} converts the Hamilton-Jacobi-Bellman (HJB) equation into a Sylvester transpose equation, and proposes a more efficient integral PI algorithm. However, the absence of an input-output matrix construction method that serves as a substitute state matrix while preserving key properties prevents the direct extension of this approach to unmeasurable state scenarios.
   When states are unmeasurable, the work in \cite{Jiang-2016} proposes a model-free off-policy reinforcement learning algorithm with a discounted cost function to suppress exploration noise-induced bias. However, an inappropriately chosen discount factor may jeopardize system stability \cite{IRL-bias} and yield suboptimal LQR solutions; hence, methods based on discounted cost functions should be avoided. Drawing inspiration from discrete-time methods, the research in \cite{SAAA-CT-output} employs input-output filtered vectors as substitute states and derives a new output feedback HJB equation, upon which model-free integral PI and VI algorithms are established. Building on this, the work in \cite{Deng-CT-output} further introduces the replication of observation errors into the substitute states, enhancing the transient performance of the algorithm while guaranteeing closed-loop stability and solution optimality. Additionally, adopting the substitute states from \cite{SAAA-CT-output}, the work in \cite{Huang-newLQR} reformulates the original LQR problem into a steady-state equivalent one defined on the substitute states, and the iterative algorithms designed for the transformed LQR problem exhibit lower computational complexity.  
   It is important to note that the substitute states adopted in \cite{SAAA-CT-output,Huang-newLQR} are equivalent to the true states only at steady state; consequently, the corresponding feedback controller merely achieves steady-state performance equivalence. Therefore, it is necessary to use a stabilizing feedback controller with superimposed exploration noise to collect data over an extended period to obtain approximate steady-state data for algorithm implementation.
   Furthermore, the aforementioned algorithms \cite{SAAA-CT-output,Deng-CT-output,Huang-newLQR} all rely on the LS method for numerical solutions. To ensure solvability, the corresponding regression matrices are required to be full row rank, which is difficult to achieve for multi-output problems. Additionally, the introduction of Kronecker products significantly increases computational complexity and data requirements.

   To this end, this paper focuses on the continuous-time LQR problem with unmeasurable states and unknown dynamics. 
   To address the limitations of existing research, this paper investigates  three core issues, corresponding to Sections \ref{sec3}, \ref{sec4}, and \ref{sec5}, respectively:
   \begin{itemize}
    \item  Construct model-free effective substitute states such that the resulting data matrix is guaranteed to be full row rank and can be used to parametrically characterize the closed-loop systems and the controllers.
    
    \item Directly develop model-free, LS-free PI and VI algorithms that are feasible for both single-output and multi-output scenarios, while guaranteeing the optimality of the derived controller.  
    	
   	\item Provide a more universal model-free continuous-time trajectory generation result that corresponds to Willems' Fundamental Lemma, offering a new perspective for data-driven continuous-time system simulation. 	
   \end{itemize}
   	However, due to the instantaneous dynamics of continuous-time systems, the lack of a universal continuous-time counterpart to Willems' Fundamental Lemma, and the absence of an independent Q-function (whose role is subsumed by the V-function), these tasks are not straightforward extensions of discrete-time results \cite{DD-PIVI} to the continuous-time setting. 
	This paper revisits the substitute states proposed in \cite{Deng-CT-output} and reveals that, under the controllability assumption, the mapping matrix between substitute states and true states is guaranteed to be of full row rank. This not only ensures the unique reconstruction of true states from substitute states but also lays a critical foundation for establishing the equivalence between the proposed iterative algorithms and model-based state feedback iterative algorithms \cite{model-based-PI,Jiang-2016}.
	Furthermore, this paper analyzes the row rank property of the data matrix formed by the original substitute states under persistently exciting (PE) inputs, showing that it is inevitably rank-deficient in multi-output scenarios. To resolve this issue, we perform a QR decomposition on this data matrix and use the orthonormal basis matrix of the column space of the orthogonal matrix to project the original substitute states. The resulting improved substitute state data matrix is guaranteed to be of full row rank, a condition essential for ensuring the convergence of the data-enabled iterative algorithms.	
    This paper further proves that the improved substitute states satisfy a new linear system with the same input as the original system and a known input matrix. By reformulating the original quadratic cost functional into one involving the improved substitute states and inputs, an equivalent new LQR problem is constructed, whose optimal feedback controller achieves performance equivalent to that of the optimal state feedback controller for the original problem.
    Moreover, this paper demonstrates that this equivalent new LQR problem satisfies the stabilizability and detectability conditions required for iterative algorithms \cite{Huang-newLQR}. Accordingly, utilizing the full row rank improved substitute state data matrix, we develop directly solvable model-free HJB iterative equations (tractable generalized Sylvester equations) and model-free cost parameter matrix update equations, serving as the core modules for the data-enabled PI and VI algorithms, respectively.
    In addition, this paper shows that the full row rank improved substitute state data matrix can be used to construct an ordinary differential equation for the weighting vector in trajectory generation. This provides key insights for establishing the trajectory generation properties of the continuous-time counterpart to Willems' Fundamental Lemma.

	The main contributions of this paper are summarized as follows:
	\begin{itemize}
		\item 
		The proposed effective substitute state construction method that guarantees the resulting data matrix is of full row rank is applicable to various optimal control scenarios with unmeasurable states and provides an effective new tool for input-output-based data-driven control.
	    Furthermore, this paper transforms the LQR problem with unmeasurable states and unknown dynamics into an equivalent new LQR problem with known states and a known input matrix. This transformation not only ensures the optimality of the resulting controller but also reduces the design complexity of model-free algorithms. Compared with \cite{Huang-newLQR}, the proposed equivalent LQR problem further accounts for the influence of observation errors, yielding an optimal solution that is not confined to steady-state optimality.
		
	   \item This paper proposes data-enabled PI and VI algorithms for directly solving the continuous-time LQR problem with unmeasurable states and unknown dynamics. These algorithms eliminate the need for the LS method, relax feasibility conditions, and offer superior performance in terms of both data requirements and computational efficiency. Furthermore, by avoiding the numerical calculation of high-order derivatives or integrals of input-output signals during data collection, numerical errors are significantly reduced. Theoretically equivalent to model-based state feedback algorithms, the proposed algorithms guarantee stability, convergence, and optimality.
	   Compared with \cite{SAAA-CT-output,Deng-CT-output,Huang-newLQR}, the proposed algorithms do not require additional rank conditions on high-dimensional regression matrices, relying solely on the fundamental PE condition. Consequently, they are effective and feasible for both single-output and multi-output scenarios. In terms of data demand, the proposed algorithms only require the column number of the full row rank substitute state data matrix to be no less than its row number, reducing data requirements. Regarding computational efficiency, the projection operation removes unnecessary redundant information from the data matrix, and the proposed iterative equations can be solved directly, thereby lowering the computational burden.
	   In comparison to \cite{Deng-CT-output}, the proposed iterative equations only solve for the cost parameter matrix, not the feedback gain. Moreover, this work is not a direct extension of the state feedback data-enabled PI algorithm \cite{CT-PCPE}; benefiting from the transformation into an equivalent new LQR problem, the proposed PI iterative equation utilizes standard Sylvester equations (instead of the Sylvester transpose equations in \cite{CT-PCPE}), which exhibits superior stability properties.

	   \item This paper utilizes the proposed full row rank substitute state data matrix to provides constructive conditions for continuous-time trajectory generation, establishing a counterpart to the discrete-time Willems' Fundamental Lemma. The resulting trajectories are exact, in contrast to the approximate methods based on orthogonal basis expansions in \cite{ortho1,ortho2}. Additionally, for multi-output problems, the proposed method effectively avoids the unsolvability issue of the ordinary differential equations that restrict trajectory generation in \cite{In-out}, and it eliminates the need for numerical calculation of high-order derivatives.        
	\end{itemize}

	\textbf{Notation.}
	$\mathbb{R}$, $\mathbb{R}_{+}$, $\mathbb{N}$, $\mathbb{N}_{++}$, 	$\mathbb{S}^n$ and $\mathbb{S}^n_{++}$ ($\mathbb{S}^n_+$) denote the set of real numbers, the set of nonnegative real numbers, the set of natural numbers, the set of positive integers, the set of $n\times n$ symmetric matrices and the set of $n\times n$ positive (semi-)definite symmetric matrices, respectively. 
	$A^{\frac{1}{2}}$ means the unique symmetric positive semi-definite square root of $A\in\mathbb{S}_{+}^{n}$. 
	$I_n$ is defined as an $n\times n$ identity matrix, and $\mathbf{0}$ denotes a zero vector or matrix with appropriate dimension. 
	Let $\Vert\cdot\Vert$ denote the Euclidean vector norm or matrix spectral norm. For a matrix $A$, $A^\top$, $A^{\dagger}$, $\rank(A)$, and $\mathrm{im}(A)$ denote its transpose, Moore-Penrose pseudo-inverse, rank and image space, respectively. 
	For a square matrix $B$, $\mathrm{det}(B)$ and $\mathrm{adj}(B)$ denote its determinant and adjugate matrix, respectively. 
	Define $\mathrm{diag}(A,B)$ as the block diagonal matrix with main matrix blocks $A$ and $B$. 
	Let $s$, $\mathcal{L}(\cdot)$ and $\mathcal{L}^{-1}(\cdot)$ represent the Laplace variable, the Laplace transform and the inverse Laplace transform, respectively.
	For matrices $A$, $B$, and $C$ of appropriate dimensions, define
	$\mathcal{R}_N(A,B):=[A^{N-1}B,\cdots,AB,B]$, $\mathcal{O}_N(A,C):=[C^\top,(CA)^\top,\cdots,(CA^{N-1})^\top]^\top$,
	$$\mathcal{T}_N(A,B,C):=\begin{bmatrix}
		\mathbf{0}&&&\\
		CB&\mathbf{0}&&\\
		\vdots&\ddots&\ddots&\\
		CA^{N-2}B&\cdots&CB&\mathbf{0}
	\end{bmatrix},$$
	and the subscript $N$ will be omitted when the context is clear.
	For $A=[a_1,\cdots,a_m]\in\mathbb{R}^{n\times m}$, define $\mathrm{vec}(A):=[a_1^\top,\cdots, a_m^\top]^\top\in\mathbb{R}^{mn}$. The notation $\otimes$ is defined as the Kronecker product. 
	For a discrete-time signal sequence $\{u_t\}_{t\in\mathbb{N}}$, define its stacked window vector as $u_{[i,j]}:=[u_i^\top,u_{i+1}^\top,\cdots,u_{j}^\top]^\top$ with $i<j\in\mathbb{N}$, and a $N$-th order Hankel matrix is given by $\mathcal{H}_{N}(u_{[0,T+N-2]}):=[u_{[0,N-1]},u_{[1,N]},\cdots,u_{[T-1,T+N-2]}]$ with $N,T\in\mathbb{N}_{++}$.
	For a continuous-time signal $\{u_t\}_{t\in\mathbb{R}_{+}}$ that is sufficiently smooth, $u_t^{(i)}$ denotes the $i$-th order derivative of $u_t$; define the sampling matrix as $H(\{u_t\}_{t=t_1,\cdots,t_T}):=[u_{t_1},u_{t_2},\cdots,u_{t_T}]$ with $t_1<\cdots<t_{T}\in\mathbb{R}_{+}$, and the $N$-th order derivative sampling matrix is given by $H_N(\{u_t\}_{t=t_1,\cdots,t_T}):=[H(\{u_t\}_{t=t_1,\cdots,t_T})^\top,H(\{u_t^{(1)}\}_{t=t_1,\cdots,t_T})^\top,\cdots,H(\{u_t^{(N-1)}\}_{t=t_1,\cdots,t_T})^\top]^\top$; let $\Delta t>0$ represent the time delay step size, the delayed stacked window vector at time~$t$~is defined as $u_{(t,\Delta t,N)}:=[u_t^\top,\cdots,u_{t+(N-1)\Delta t}^\top]^\top$, and the $N$-th order delayed Hankel matrix at time~$t$~is given by $H_N(u_{(t,\Delta t, N+T-1)}):=[u_{(t,\Delta t,N)},u_{(t+\Delta t,\Delta t,N)},\cdots,u_{(t+(T-1)\Delta t,\Delta t,N)}]$ with $N, T\in\mathbb{N}_{++}$.
	
	\section{Problem Formulation and Preliminaries}\label{sec2}
	This section presents the problem setup and reviews existing research relevant to the subsequent analysis.
	\subsection{Problem Formulation}\label{sec2_1}
	Consider a continuous-time linear time-invariant system described by
	\begin{equation}\label{system}
		\begin{aligned}
			\dot{x}_{t}&=Ax_{t}+Bu_{t},\quad y_{t}=Cx_{t},
		\end{aligned}
	\end{equation}
	where $x_t\in\mathbb{R}^{n}$ is the unmeasurable state, $u_t\in\mathbb{R}^{m}$ and $y_t\in\mathbb{R}^{p}$ are the measurable input and output, respectively, with $p\leq n$. Without loss of generality, the output matrix $C$ is assumed to be of full row rank. Let $\mathcal{U}_{ad}$ denote the set of admissible controllers.
	
	Define the infinite-horizon quadratic cost functional as
	\begin{equation}\label{J}
		\begin{aligned}
			J(x_{0},u)&=\int_{0}^{\infty}\left(y_{t}^\top Qy_{t}+u_{t}^\top Ru_{t}\right)dt,
		\end{aligned}
	\end{equation}
	where $Q\in\mathbb{S}^{p}_{+}$ and $R\in\mathbb{S}^{m}_{++}$ are constant weighting matrices. Furthermore, define the state weighting matrix as $Q_x:=C^\top QC\in\mathbb{S}^{n}_{+}$.
	
	This paper aims to design the optimal feedback controller for the continuous-time LQR problem with unmeasurable states and unknown model parameters. The formal problem statement is given as follows.
	\begin{problem}\label{P1}
		Considering system (\ref{system}), where $A$, $B$ and $C$ are unknown, and states $\{x_t\}$ are unmeasurable, find an optimal control policy $u^*$ such that
		\begin{equation}\label{u*}
			u^*=\arg\min_{u\in\mathcal{U}_{ad}} J(x_{0},u).
		\end{equation}
	\end{problem}
	This paper adopts the following standard assumptions for solving the unmeasurable-state LQR problem using iterative algorithms.
	\begin{assumption}[Controllability and Observability.]\label{as1}
		 The pair $(A, B)$ is controllable, and the pairs $(A, C)$, $(A, Q_x^{\frac{1}{2}})$ are observable.
	\end{assumption}
	
	\begin{remark}
		In general, the stabilizability of $(A,B)$ is sufficient to guarantee the existence of the optimal LQR controller; combined with the observability of $(A,C)$, it also ensures that the states of system (\ref{system}) can be uniquely reconstructed from input-output data \cite{Deng-CT-output}. However, this paper aims to utilize data matrices to develop efficient model-free algorithms. As discussed later, the desirable properties of data matrices required for theoretical analysis are more readily derived under the controllability assumption. Therefore, consistent with classical literature \cite{De-Formulas,Efficient-Q}, we further assume that $(A,B)$ is controllable. Additionally, the observability of $(A,Q_x^{\frac{1}{2}})$ is a necessary prerequisite for the convergence of the iterative algorithms \cite{Jiang-2012,Jiang-iter}.
	\end{remark}
	
	\subsection{Model-Based State Feedback LQR}\label{sec2_2}
	For a known system with measurable states, the optimal LQR controller is $u_t=-K_x^*x_t$, where
	\begin{equation}\label{K*}
		K_x^*=R^{-1}B^\top P_x^*,
	\end{equation}
	and $P_x^*\in\mathbb{S}^n_{++}$ is the unique solution to the following algebraic Riccati equation (ARE)
	\begin{equation}\label{ARE}
		A^\top P^*_x+P^*_xA+Q_x-P_x^*BR^{-1}B^\top P^*_x=\mathbf{0}.
	\end{equation}
	If $K_x$ is a stabilizing state feedback gain such that $A-BK_x$ is Hurwitz stable, then $J(x_0,-K_xx)=x_0^\top P_x x_0$, where $P_x\in\mathbb{S}^n_{++}$ is the unique solution to the following Lyapunov equation
	\begin{equation}\label{lyap}
		(A-BK_x)^\top P_x+P_x(A-BK_x)+Q_x+K_x^\top RK_x=\mathbf{0}.
	\end{equation}
	If an initial stabilizing state feedback gain $K_x^0$ is selected and iterations are performed on \eqref{lyap} and \eqref{K*} until convergence, the classical Kleinman's algorithm is obtained, which possesses desirable stability and convergence properties \cite{model-based-PI}. Conversely, if the ARE \eqref{ARE} is recursively updated with a suitable step size, the classical model-based VI algorithm is obtained \cite{Jiang-2016}.

	 \subsection{Willems' Fundamental Lemma for Discrete-Time Systems}\label{sec2_3}
	\begin{definition}[Discrete-Time PE \cite{Efficient-Q}]\label{DT-PE}
		A discrete-time signal sequence $\{u_t\}_{t\in\mathbb{N}}$ is PE of order $N$ if the Hankel matrix $\mathcal{H}_{N}(u_{[0,N+T-2]})$ has full row rank, where $T\geq mN$.
	\end{definition}

	Leveraging the favorable structures of the Hankel matrix and discrete-time linear time-invariant systems, the following important conclusions regarding the properties of data matrices and system trajectory generation can be derived.
	
	\begin{lemma}[Willems' Fundamental Lemma \cite{Willems}]\label{willems}
		Consider a controllable discrete-time linear time-invariant system $x_{t+1}=Ax_t+Bu_t$, $y_t=Cx_t$. If a PE input signal $\{u_t\}_{t\in\mathbb{N}}$ of order $n+N$ is applied, generating the corresponding state and output sequences $\{x_t\}_{t\in\mathbb{N}}$ and $\{y_t\}_{t\in\mathbb{N}}$, then the following assertions hold.
		\begin{itemize}
			\item[\textup{(a)}] 	$\rank\left(\begin{bmatrix}
				\mathcal{H}_{N}(u_{[0,T+N-2]})\\\mathcal{H}_{1}(x_{[0,T-1]})
			\end{bmatrix}\right)=mN+n$.
			\item[\textup{(b)}] 
			Let $N\geq \ell_{ob}$, where $\ell_{ob}$ is the observability index of the linear system. The sequence $[\bar{u}_{[0,N-1]}^\top,\bar{y}_{[0,N-1]}^\top]^\top$ is an input-output trajectory of length $N$ for the discrete-time linear system, if and only if there exists a vector $\alpha\in\mathbb{R}^{T}$ such that
			\begin{equation}\label{alp}
				\begin{bmatrix}
					\mathcal{H}_N(u_{[0,T+N-2]})\\\mathcal{H}_N(y_{[0,T+N-2]})
				\end{bmatrix}\alpha=\begin{bmatrix}
					\bar{u}_{[0,N-1]}\\\bar{y}_{[0,N-1]}
				\end{bmatrix}.
			\end{equation}
		\end{itemize}
	\end{lemma}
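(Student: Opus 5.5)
We prove assertion (a) by the classical left-kernel argument, and then obtain assertion (b) from (a) together with the observability-based reconstruction of the initial state.

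\textbf{Assertion (a).} Suppose a row vector $[\eta^\top,\xi^\top]$, with $\eta=[\eta_0^\top,\dots,\eta_{N-1}^\top]^\top\in\mathbb{R}^{mN}$ and $\xi\in\mathbb{R}^n$, annihilates $\left[\mathcal{H}_N(u_{[0,T+N-2]})^\top,\ \mathcal{H}_1(x_{[0,T-1]})^\top\right]^\top$. We will show that $\eta=\mathbf{0}$ and $\xi=\mathbf{0}$.

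\begin{enumerate}
\item Iterating $x_{t+1}=Ax_t+Bu_t$ expresses $x_{t+k}$ through $x_t$ and $u_{[t,t+k-1]}$. This lets us build $n+1$ shifted annihilators of the deeper Hankel matrix $\left[\mathcal{H}_{N+n}(u)^\top,\ \mathcal{H}_1(x)^\top\right]^\top$, restricted to $T-n$ columns. The $k$-th one, for $k=0,\dots,n$, is
\[
w_k=\left[\mathbf{0}_{1\times mk},\ \eta^\top,\ \mathbf{0}_{1\times m(n-k)}\right]
\]
on the input block. On the state block it carries $\xi^\top A^k$, and the input coefficients are corrected by $\xi^\top A^{k-1-j}B$ for $j<k$.
\item By Cayley--Hamilton, $A^n=-\sum_{k<n}c_kA^k$. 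Hence the combination $\sum_{k=0}^n c_k w_k$ with $c_n=1$ has zero state block.
\item Its input block must therefore lie in the left kernel of $\mathcal{H}_{N+n}(u)$. Since $u$ is PE of order $n+N$, this block vanishes.
\item Reading off the blocks from the last one upward shows that $\eta$ generates the zero polynomial and that $\xi^\top A^{j}B=\mathbf{0}$ for all $j$.
\item Controllability of $(A,B)$ then gives $\xi=\mathbf{0}$, and then $\eta=\mathbf{0}$.
\end{enumerate}

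The rank is thus $mN+n$. For this to make sense the column count must be large enough; $T\ge m(N+n)+n$ suffices, which is implied by Definition~\ref{DT-PE} applied at order $N+n$.

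\textbf{Assertion (b).} This follows from (a).
\begin{itemize}
\item \emph{Sufficiency.} Each column of the stacked Hankel matrix is a system trajectory, and linearity makes $\sum_i\alpha_i\,(\text{column }i)$ a trajectory as well.
\item \emph{Necessity.} Let $(\bar u,\bar y)$ be a trajectory with initial state $\bar x_0$. By (a), there exists $\alpha$ with
\[
\begin{bmatrix}\mathcal{H}_N(u)\\ \mathcal{H}_1(x)\end{bmatrix}\alpha=\begin{bmatrix}\bar u_{[0,N-1]}\\ \bar x_0\end{bmatrix}.
\]
Every output window satisfies $y_{[t,t+N-1]}=\mathcal{O}_N x_t+\mathcal{T}_N u_{[t,t+N-1]}$, so
\[
\mathcal{H}_N(y)\alpha=\mathcal{O}_N\bar x_0+\mathcal{T}_N\bar u_{[0,N-1]}=\bar y_{[0,N-1]}.
\]
\end{itemize}
The condition $N\ge\ell_{ob}$ is only needed to state the result purely in input--output terms: it makes $\mathcal{O}_N$ injective, so $\bar x_0$ is uniquely determined by the input--output window.

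The main obstacle is the index bookkeeping in step 1 of (a): writing the shifted vectors $w_k$ with their exact correction terms so that the Cayley--Hamilton combination cancels the state block exactly. We would handle this by expressing everything through the polynomial $\eta(z)=\sum_j\eta_j^\top z^j$, which shortens the argument.
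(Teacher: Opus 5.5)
The paper gives no proof of this lemma; it is quoted from \cite{Willems}, so there is no in-paper argument to compare against. Your proof is correct and is the standard state-space argument (the one given by van Waarde et al.). You build the shifted left-kernel vectors $w_0,\dots,w_n$ and use Cayley--Hamilton to cancel the state block. Persistency of excitation of order $N+n$ on the same data window $u_{[0,T+N-2]}$ then kills the input block. A triangular read-off follows: the blocks $n,\dots,n+N-1$ involve only $\eta$ and give $\eta=\mathbf{0}$, after which the blocks $0,\dots,n-1$ give $\xi^\top A^jB=\mathbf{0}$. So the ``and then $\eta=\mathbf{0}$'' in your step 5 is redundant, and the order you describe in step 4 is the one that works. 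Your derivation of (b) from (a) via $\mathcal{H}_N(y_{[0,T+N-2]})=\mathcal{O}_N\mathcal{H}_1(x_{[0,T-1]})+\mathcal{T}_N\mathcal{H}_N(u_{[0,T+N-2]})$ is exactly right. One correction to your closing remark: the hypothesis $N\ge\ell_{ob}$ plays no role at all. Your necessity argument only needs some initial state $\bar x_0$ compatible with the trajectory, not a unique one, and the statement is already purely input--output. So (b) holds for every $N$ under PE of order $N+n$, and that hypothesis in the quoted statement is simply superfluous.
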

	A corollary of assertion (a) is given in \cite{De-Formulas}:
	\begin{equation}\label{IK}
		\mathrm{im}\left(\begin{bmatrix}
			K_x\\I_n
		\end{bmatrix}\right)\in\mathrm{im}\left(\begin{bmatrix}
			\mathcal{H}_1(u_{[0,T-1]})\\\mathcal{H}_1(x_{[0,T-1]})
		\end{bmatrix}\right).
	\end{equation} 
	Thus, assertion (a) is often used to parameterize closed-loop systems and state feedback control gains in model-free optimal control problems; assertion (b) is commonly employed for model-free trajectory generation in model predictive control \cite{DeePc}.

	\subsection{Data Characteristics of Continuous-Time Systems}\label{sec2_4}
	Owing to the structural similarity between the differential operator of continuous-time linear systems and the shift operator of discrete-time linear systems, the full row rank property of continuous-time signal matrices composed of inputs, their high-order derivatives, and states can be obtained analogously \cite{CT-PE}. 
	However, due to the essential differences between these operators, the corresponding results for system trajectory generation in continuous time remain elusive. 
	\begin{definition}[Continuous-Time PE \cite{CT-PE}]\label{CT-PE}
		 Let $\mathbb{I}\subseteq\mathbb{R}_{+}$ be an open time interval. A continuous-time signal $\{u_t\}_{t\in\mathbb{I}}$ is PE of order $N$ if $u_t$ is $N-1$ times continuously differentiable on $\mathbb{I}$, and $\nu^\top[u_t^\top,u_t^{(1)\top},\cdots,u_t^{(N-1)\top}]^\top=\mathbf{0}$ holds for all $t\in\mathbb{I}$ if and only if $\nu=\mathbf{0}$.
	\end{definition}
	
	\begin{lemma}[\cite{CT-PE}]\label{dot-ux}
		Consider the controllable continuous-time linear time-invariant system (\ref{system}). If the input $\{u_t\}_{t\in\mathbb{I}}$ is PE of order $n+N$ and $\{x_t\}_{t\in\mathbb{I}}$ is the corresponding state signal, then $\nu^\top[u_t^\top,u_t^{(1)\top},\cdots,u_t^{(N-1)\top},x_t^\top]^\top=\mathbf{0}$ holds for all $t\in\mathbb{I}$ if and only if $\nu=\mathbf{0}$.
	\end{lemma}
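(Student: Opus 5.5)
We have to prove Lemma \ref{dot-ux}: if $\{u_t\}_{t\in\mathbb{I}}$ is PE of order $n+N$, then no nonzero $\nu$ annihilates $[u_t^\top,\dots,u_t^{(N-1)\top},x_t^\top]^\top$ on all of $\mathbb{I}$. The plan mirrors the classical proof of Lemma \ref{willems}(a), with the differential operator in place of the shift. The sufficiency direction ($\nu=\mathbf 0$ implies the identity) is trivial, so I focus on necessity.

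Write $\nu=[\eta_0^\top,\dots,\eta_{N-1}^\top,\xi^\top]^\top$ and suppose
\[
\xi^\top x_t+\sum_{i=0}^{N-1}\eta_i^\top u_t^{(i)}=0 \quad\text{for all } t\in\mathbb{I}.
\]
PE of order $n+N$ makes $u$ at least $(n+N-1)$ times continuously differentiable. Since $\dot x=Ax+Bu$, the state $x$ is then $n+N$ times differentiable, and the identity can be differentiated $n$ times on the open interval $\mathbb{I}$. Using $x_t^{(k)}=A^kx_t+\sum_{j=0}^{k-1}A^{k-1-j}Bu_t^{(j)}$, the $k$-th derivative reads
\[
\xi^\top A^k x_t+\sum_{j=0}^{k-1}\xi^\top A^{k-1-j}Bu_t^{(j)}+\sum_{i=0}^{N-1}\eta_i^\top u_t^{(i+k)}=0,\qquad k=0,\dots,n.
\]

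Next I eliminate $x_t$ with Cayley--Hamilton. Let $\det(sI-A)=s^n+a_{n-1}s^{n-1}+\dots+a_0$ and set $a_n=1$. Multiplying the $k$-th equation by $a_k$ and summing over $k=0,\dots,n$ kills the $x_t$ terms, because $\sum_k a_kA^k=\mathbf 0$. What remains is
\[
\gamma^\top\,[u_t^\top,u_t^{(1)\top},\dots,u_t^{(n+N-1)\top}]^\top=0\quad\text{for all } t\in\mathbb{I},
\]
for a coefficient vector $\gamma$ that is linear in $(\xi,\eta)$. PE of order $n+N$ then forces $\gamma=\mathbf 0$.

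The main obstacle is deducing $\nu=\mathbf 0$ from $\gamma=\mathbf 0$. I will read off the coefficients from the top order down.
\begin{itemize}
\item The coefficient of $u^{(n+N-1)}$ is $a_n\eta_{N-1}=\eta_{N-1}$, so $\eta_{N-1}=0$.
\item Descending through $u^{(n+N-2)},\dots,u^{(n)}$, the only contributions at these orders come from the $\eta$ terms: the $\xi$ terms involve only $u^{(j)}$ with $j\le n-1$. This forces $\eta_{N-2}=\dots=\eta_0=0$ successively.
\item With all $\eta_i=0$, the coefficient of $u^{(j)}$ for $j=0,\dots,n-1$ is $\sum_{k=j+1}^{n}a_k\,\xi^\top A^{k-1-j}B$.
\end{itemize}

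Setting these to zero from $j=n-1$ downward gives, by triangular induction, $\xi^\top B=0$, then $\xi^\top AB=0$, and so on up to $\xi^\top A^{n-1}B=0$. The induction works because the leading coefficient is $a_n=1$ at each step. Hence $\xi^\top\mathcal{R}_n(A,B)=0$, and controllability of $(A,B)$ gives $\xi=0$. Therefore $\nu=\mathbf 0$.

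The care needed is in checking this triangular structure, and in making sure the differentiability available from PE order $n+N$ suffices for $n$ differentiations. It does: after $n$ differentiations the highest input derivative appearing is $u^{(N-1+n)}$, and $u$ is $(n+N-1)$ times continuously differentiable.
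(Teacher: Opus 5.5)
Your proof is correct. The $n$ differentiations are justified by the $C^{n+N-1}$ regularity that PE of order $n+N$ gives on the open interval $\mathbb{I}$, and the Cayley--Hamilton combination lands exactly in the span of $u,\dots,u^{(n+N-1)}$. Both triangular inductions go through because the pivot coefficient is $a_n=1$: first on the $\eta_i$ at orders $n,\dots,n+N-1$, where $\xi$ does not contribute, then on $\xi^\top A^{k}B$ at orders $0,\dots,n-1$.

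The paper gives no proof of its own; it imports the lemma from \cite{CT-PE}. Your argument is the standard route, the derivative counterpart of the Cayley--Hamilton/left-kernel proof of Lemma \ref{willems}(a).
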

	
	The work in \cite{CT-PE} further shows that if the signal is sampled at arbitrary fixed time instants $t_1<t_2<\cdots<t_T\in\mathbb{I}$ with $T\geq mN+n$, then under PE inputs of order $n+N$, the input-derivative-state data matrix has full row rank, i.e.,
	\begin{equation}\label{Hux}
		\mathrm{rank}\left(\begin{bmatrix}
			H_{N}(\{u_t\}_{t=t_1,\cdots,t_T})\\H_{1}(\{x_t\}_{t=t_1,\cdots,t_T})
		\end{bmatrix}\right)=mN+n.
	\end{equation}
    For the data matrix composed of inputs, outputs, and their high-order derivatives, under PE inputs of order $n+N$, it follows from 
	\begin{equation}\label{uy-ux}
		\begin{bmatrix}
			H_{N}(\{u_t\}_{t=t_1,\cdots,t_T})\\H_{N}(\{y_t\}_{t=t_1,\cdots,t_T})
		\end{bmatrix}=\begin{bmatrix}
			I_{mN}&\mathbf{0}\\\mathcal{T}(A,B,C)&\mathcal{O}(A,C)
		\end{bmatrix}\begin{bmatrix}
			H_{N}(\{u_t\}_{t=t_1,\cdots,t_T})\\H_{1}(\{x_t\}_{t=t_1,\cdots,t_T})
		\end{bmatrix},
	\end{equation}
	and the observability of system (\ref{system}) that
	\begin{equation}\label{rank-uy}
		\mathrm{rank}\left(\begin{bmatrix}
			H_{N}(\{u_t\}_{t=t_1,\cdots,t_T})\\H_{N}(\{y_t\}_{t=t_1,\cdots,t_T})
		\end{bmatrix}\right)=mN+n.
	\end{equation}
	For simplicity, uniform sampling with sampling interval $\Delta t$ is adopted hereinafter.

\section{An Effective Substitution State and Equivalent LQR Problem}\label{sec3}
This section addresses two main problems: first, how to design an effective substitute state in a model-free manner, such that it maintains an injective linear relationship with the true unmeasurable state and the data matrix formed by its samples is of full row rank; second, how to utilize this substitute state to convert the information-limited original LQR Problem \ref{P1} into an equivalent new LQR problem with fewer unknowns.

\subsection{Data Characteristics of Effictive Substitute State}\label{sec3_1}
When the states $\{x_t\}$ of system (\ref{system}) are unmeasurable, an $n$-th order filter of the input-output signals can be used as a substitute state based on a Luenberger observer \cite{Deng-CT-output}. However, if the goal is to parameterize feedback gains using the data matrix formed by such a substitute state under PE inputs, the row rank property of this data matrix must be re-examined. This subsection addresses this issue and proposes an effective substitute state construction method that strictly guarantees the resulting data matrix is of full row rank.

Specifically, the estimated state $\hat{x}_t$ obtained via a Luenberger observer satisfies the system
\begin{equation}\label{hatx}
	\dot{\hat{x}}_t=(A-LC)\hat{x}_t+Bu_t+Ly_t,
\end{equation}
where $L$ is the Luenberger observer gain, and $\hat{x}_0=\mathbf{0}$ is typically chosen. The observation error $\epsilon_t:=x_t-\hat{x}_t$ satisfies the system
\begin{equation}\label{eps}
	\dot{\epsilon}_t=(A-LC)\epsilon_t.
\end{equation}
Since $\epsilon_t=e^{(A-LC)t}\epsilon_0$ with $\epsilon_0\neq\mathbf{0}$ and $A-LC$ is Hurwitz stable, $x_t=\hat{x}_t$ holds only as $t\rightarrow \infty$. Furthermore, since the system matrices $A-LC$, $B$ and $L$ in (\ref{hatx}) are unknown, the estimated state $\hat{x}_t$ cannot be obtained in a model-free manner. Therefore, the work in \cite{Deng-CT-output} jointly considers systems (\ref{hatx}) and (\ref{eps}), obtaining model-free substitute vectors for $\hat{x}_t$ and $\epsilon_t$ via the Laplace operator. This yields a substitute vector $\eta_t$ for $x_t=\hat{x}_t+\epsilon_t$, as defined in Theorem \ref{thm1}, which can be interpreted as an $n$-th order filter of the input-output signals. Theorem \ref{thm1} further establishes that, under the controllability assumption, the linear mapping matrix $S$ from $\eta_{t}$ to $x_t$ (referred to as the state parameterization matrix) is of full row rank. To present a concise formal statement of these results in Theorem \ref{thm1}, the following notations are first defined:
$$\Lambda(s):=\mathrm{det}(sI_n-(A-LC)):=s^n+a_{n-1}s^{n-1}+\cdots+a_1s+a_0,$$
$$\mathrm{adj}(sI_n-(A-LC)):=D_{n-1}s^{n-1}+\cdots+D_{1}s^{1}+D_0,$$
and let $a_n=1$. From the relationship between the determinant and the adjugate matrix, it follows that $D_i=\sum_{j=1}^{n-i}a_{n-j+1}(A-LC)^{n-i-j}$ for $i=0,\cdots,n-1$, and define $\mathcal{D}:=[D_0,D_1,\cdots,D_{n-1}]$. For $i=1,\cdots,m$ and $j=1,\cdots,p$, let $B_i$ and $L_j$ denote the $i$-th column of $B$ and the $j$-th column of $L$, respectively, and further define $S^u_i=\mathcal{D}(I_n\otimes B_i)\in\mathbb{R}^{n\times n}$, $S^y_j=\mathcal{D}(I_n\otimes L_j)\in\mathbb{R}^{n\times n}$; let $U_i(s)$ and $Y_j(s)$ represent the Laplace transforms of the $i$-th component of $u_t$ and the $j$-th component of $y_t$, respectively, and further define
$$\eta_t^{u^i}:=\mathcal{L}^{-1}\left(\left[\frac{U_i(s)}{\Lambda(s)},\frac{sU_i(s)}{\Lambda(s)},\cdots,\frac{s^{n-1}U_i(s)}{\Lambda(s)}\right]^{\top}\right)\in\mathbb{R}^{n},$$
$$\eta_t^{y^j}:=\mathcal{L}^{-1}\left(\left[\frac{Y_j(s)}{\Lambda(s)},\frac{sY_j(s)}{\Lambda(s)},\cdots,\frac{s^{n-1}Y_j(s)}{\Lambda(s)}\right]^{\top}\right)\in\mathbb{R}^{n}.$$
Define $S^{u}:=[S_1^{u},\cdots,S_m^{u}]$, $S^{y}:=[S_1^{y},\cdots,S_p^{y}]$, $\eta_t^{u}:=[(\eta_t^{u^1})^\top,\cdots,(\eta_t^{u^m})^\top]^\top$, $\eta_t^{y}:=[(\eta_t^{y^1})^\top,\cdots,(\eta_t^{y^p})^\top]^\top$. Let $A_{\epsilon}$ be an arbitrary square matrix with the same eigenvalues as $A-LC$, and define $\mathcal{D}_{\epsilon}$ analogously to $\mathcal{D}$ by replacing $A-LC$ with $A_{\epsilon}$. Define
$$A_s:=\left[\begin{array}{c:c}
	\mathbf{0}&I_{n-1}\\\hdashline
	-a_0&\begin{matrix}
		-a_1&\cdots&-a_{n-1}
	\end{matrix}
\end{array}\right],\quad b_s:=\left[\begin{array}{c}
	\mathbf{0}_{(n-1)\times1}\\\hdashline1
\end{array}\right], $$
$\mathcal{A}_s:=\mathrm{diag}(I_{m+p}\otimes A_s,A_{\epsilon})$, $\mathcal{B}_{s}:=\mathrm{diag}(I_{m+p}\otimes b_s,\mathbf{0}_{n\times n})$.
	
\begin{theorem}\label{thm1}
	Consider system (\ref{system}). The following statements hold.
	\begin{itemize}
		\item [\textup{(a)}] The state $x_t$ of system (\ref{system}) can be represented by the input-output $n$-th order filter vector $\eta_t\in\mathbb{R}^{(m+p+1)n}$, i.e.,
		\begin{equation}\label{eta}
			x_t=[S^u,S^y,S^{\epsilon}][(\eta_t^u)^\top,(\eta_t^y)^\top,(\eta_t^{\epsilon})^\top]^\top:=S\eta_t,
		\end{equation}
		where $\eta_t^{\epsilon}\in\mathbb{R}^{n}$ is defined by the following system,
		\begin{equation}\label{eta-eps}
			\dot{\eta}_t^{\epsilon}=A_{\epsilon}\eta_t^{\epsilon},\quad \eta_0^{\epsilon}\neq\mathbf{0},
		\end{equation}
		and $S^{\epsilon}:=S^{\epsilon_x}(S^{\epsilon_{\eta}})^{-1}$, $S^{\epsilon_x}:=\mathcal{D}(I_n\otimes \epsilon_0)$, $S^{\epsilon_{\eta}}:=\mathcal{D_{\epsilon}}(I_n\otimes \eta_0^{\epsilon})$.
		The substitute state $\eta_t$ can be generated in a model-free manner via the following system.
		\begin{equation}\label{eta-sys}
			\dot{\eta}_t=\mathcal{A}_s\eta_t+\mathcal{B}_s[u_t^\top,y_t^\top, \mathbf{0}_{1\times n}]^\top,\quad \eta_0=[\mathbf{0}_{1\times (m+p)n},(\eta_0^{\epsilon})^\top]^\top.
		\end{equation}
		\item[\textup{(b)}] Under the controllability assumption on $(A,B)$, the state parameterization matrix $S$ has full row rank.
	\end{itemize}
\end{theorem}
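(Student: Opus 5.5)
The plan is to prove part (a) by working in the Laplace domain, splitting $x_t=\hat{x}_t+\epsilon_t$. Part (b) then reduces to a Krylov-subspace argument on the columns of $[S^u,S^y]$.

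\emph{Observer part of (a).} Since $\hat{x}_0=\mathbf{0}$, (\ref{hatx}) gives
\[
\hat{X}(s)=\frac{\mathrm{adj}(sI_n-(A-LC))}{\Lambda(s)}\bigl(BU(s)+LY(s)\bigr)=\sum_{k=0}^{n-1}\frac{s^kD_k}{\Lambda(s)}\Bigl(\sum_{i=1}^m B_iU_i(s)+\sum_{j=1}^pL_jY_j(s)\Bigr).
\]
Regrouping term by term, $\sum_k D_kB_i\,s^kU_i(s)/\Lambda(s)=\mathcal{D}(I_n\otimes B_i)\,\mathcal{L}(\eta^{u^i}_t)$, and similarly for each $y_j$. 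This gives $\hat{x}_t=S^u\eta^u_t+S^y\eta^y_t$. Next, $[1,s,\dots,s^{n-1}]^\top/\Lambda(s)=(sI_n-A_s)^{-1}b_s$ holds for the companion pair $(A_s,b_s)$. Hence each $\eta^{u^i}_t$ (resp.\ $\eta^{y^j}_t$) is generated by $\dot\eta=A_s\eta+b_su_{i,t}$ (resp.\ $+b_sy_{j,t}$) with zero initial condition. This is exactly the first $(m+p)n$ rows of (\ref{eta-sys}).

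\emph{Error part of (a).} By the same identity,
\[
\mathcal{L}(\epsilon_t)=\frac{\mathrm{adj}(sI_n-(A-LC))}{\Lambda(s)}\epsilon_0=S^{\epsilon_x}\zeta(s),\qquad \zeta(s):=[1,s,\dots,s^{n-1}]^\top/\Lambda(s).
\]
Since $A_\epsilon$ has the same characteristic polynomial $\Lambda$, we also get $\mathcal{L}(\eta^\epsilon_t)=S^{\epsilon_\eta}\zeta(s)$. Provided $S^{\epsilon_\eta}$ is invertible, it follows that $\epsilon_t=S^{\epsilon_x}(S^{\epsilon_\eta})^{-1}\eta^\epsilon_t=S^\epsilon\eta^\epsilon_t$, and adding the two parts yields (\ref{eta}). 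To justify invertibility, note that $D_{n-1-k}=(A_\epsilon)^k+(\text{lower powers})$ is unitriangular in the powers of $A_\epsilon$. Therefore $\mathrm{span}\{D_k\eta_0^\epsilon\}$ is the Krylov space of $(A_\epsilon,\eta_0^\epsilon)$, so $S^{\epsilon_\eta}$ is invertible exactly when $\eta^\epsilon_0$ is a cyclic vector of $A_\epsilon$. I would make this explicit as the interpretation of the requirement $\eta_0^\epsilon\neq\mathbf{0}$, i.e.\ $\eta^\epsilon_0$ is chosen cyclic. For instance, take $A_\epsilon=A_s$ and $\eta^\epsilon_0=b_s$. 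This is the technical point I expect to need the most care in stating, since a merely nonzero $\eta^\epsilon_0$ is not enough in general.

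\emph{Part (b).} Here $\mathrm{im}(S)\supseteq\mathrm{im}([S^u,S^y])$, and the columns of the latter are $D_kB_i$ and $D_kL_j$ for $k=0,\dots,n-1$. By the same unitriangular relation between $\{D_k\}$ and $\{(A-LC)^k\}$, this image equals $\mathrm{im}\,\mathcal{R}_n(A-LC,[B,L])$. The key observation, and the main obstacle, is that $S^u$ alone need not have full row rank, because $(A-LC,B)$ may fail to be controllable; the output-injection columns $S^y$ must also be used. Writing $A-LC=A+[B,L]\,[\mathbf{0};-C]$ shows that $A-LC$ is a state-feedback modification of $(A,[B,L])$. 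Feedback preserves controllability, and $(A,[B,L])$ is controllable because $(A,B)$ is, by Assumption~\ref{as1}. Hence $\mathrm{rank}\,\mathcal{R}_n(A-LC,[B,L])=n$, so $[S^u,S^y]$, and therefore $S$, has full row rank $n$.
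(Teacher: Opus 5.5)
Your proof is correct and follows the route the paper indicates. The paper gives no proof of Theorem~\ref{thm1} itself: it defers (a) to the Laplace-domain decomposition $x_t=\hat{x}_t+\epsilon_t$ of \cite{Deng-CT-output} and (b) to the discrete-time controllability argument of \cite{DD-PIVI}. Your steps supply these ingredients directly, namely the companion-form identity, and the identification of $\mathrm{im}([S^u,S^y])$ with the reachable space of $(A-LC,[B,L])$ combined with feedback invariance of controllability. Your observation that $S^{\epsilon_{\eta}}$ is invertible precisely when $\eta_0^{\epsilon}$ is a cyclic vector of $A_{\epsilon}$, so that $\eta_0^{\epsilon}\neq\mathbf{0}$ alone is not enough, correctly sharpens what the paper tacitly assumes in the remark after the theorem and in the proof of Lemma~\ref{le1}.
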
	
	
\begin{remark}
	A detailed proof of assertion (a) in Theorem \ref{thm1} is provided in \cite{Deng-CT-output}. By appropriately selecting the structure of $A_{\epsilon}$, $S^{\epsilon_{\eta}}$ can be made invertible. Note that generating $\eta_{t}$ via equation (\ref{eta-sys}) does not require knowledge of $A-LC$; instead, by simply assigning stable eigenvalues to $A-LC$, the block diagonal matrices $\mathcal{A}_s$ and $\mathcal{B}_s$ can be computed without any knowledge of system (\ref{system}).
\end{remark}

\begin{remark}	
	By the formal correspondence between the continuous-time Laplace operator and the discrete-time Z-transform operator, the continuous-time state parameterization matrix $S$ takes the same form as its discrete-time counterpart. Furthermore, since the controllability criteria are identical, the proof technique developed for discrete-time systems in \cite{DD-PIVI} can be directly applied to assertion (b) of Theorem \ref{thm1}.
	Reference \cite{Huang-newLQR} presents similar results for continuous-time systems, but does not involve the replication of the observation error system (\ref{eta-sys}). If $(A,B)$ is uncontrollable, \cite{State-Para,Deng-CT-output} show that $S$ can still be guaranteed to be of full row rank by either assigning eigenvalues of $A-LC$ disjoint from those of $A$ to ensure the controllability of $(A-LC,L)$, or by assigning distinct eigenvalues for $A-LC$ to ensure the controllability of $(A-LC,\epsilon_0)$.
\end{remark}

Note that the input to system (\ref{eta-sys}) comprises the input and output of the original system (\ref{system}),  together with zero vectors. Consequently, even if the input to system (\ref{system}) satisfies the PE condition, the input to system (\ref{eta-sys}) generally does not, due to the decaying excitation of the output and the presence of zero vectors. This prevents the direct application of the results in Section \ref{sec2_4} to analyze the row rank property of the data matrix $E_0:=[\eta_t,\cdots,\eta_{t+(T-1)\Delta t}]$, which is critical for characterizing the closed-loop systems and controllers, as well as for designing model-free iterative equations independent of LS. To resolve this issue, Lemma \ref{le1} employs $y_t=CS\eta_t$ to transform system (\ref{eta-sys}) into an isosystem of (\ref{system}) --- sharing the same input-output behavior but with state $\eta_t$. Then by analyzing the space spanned by $\eta_t$ within this isosystem, the row rank property of $E_0$ can be established.
Herein, the space spanned by $\eta_{t}$ is defined as the space formed by all possible $\eta_t$ generated under PE inputs, i.e., 
$\mathcal{V}_t := \left\{ \eta_t \mid \eta_t \text{ solves system } \eqref{eta-sys} \text{ for some } 2n\text{-th order PE inputs } u_t \text{ over the interval } [0, t] \right\}$. In the subsequent analysis, the space spanned by the state vector of any non-autonomous system is defined analogously, while that for an autonomous system does not require the PE condition on
$u_t$.

	\begin{lemma}\label{le1}
	Consider the substitute state $\eta_{t}$ generated by (\ref{eta-sys}). The following assertions hold.
	\begin{itemize}
		\item[\textup{(a)}] $\eta_t$ satisfies the isosystem given by
		\begin{equation}\label{C-sys}
			\begin{split}
				\dot{\eta}_{t}=(\mathcal{A}_s+\mathcal{B}_s^yCS)\eta_t+\mathcal{B}_s^uu_t, \quad y_t=CS\eta_t,
			\end{split}
		\end{equation}
		where $\mathcal{B}_s^u$ and $\mathcal{B}_s^y$ denote the matrices formed by the first $m$ columns and columns $m+1$ to $m+p$ of $\mathcal{B}_s$, respectively.
		\item[\textup{(b)}] Under Assumption \ref{as1}, if the input to system (\ref{system}) satisfies the $2n$-th order PE condition, the dimension of the space spanned by $\eta_t$ is $(m+2)n$. Moreover, the rank of the data matrix $E_0:=[\eta_t,\cdots,\eta_{t+(T-1)\Delta t}]$ is $(m+2)n$, and $E_0$ has full row rank if and only if $p=1$.
	\end{itemize}
\end{lemma}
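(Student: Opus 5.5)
The proof splits into the easy assertion (a) and the rank count in (b). I would organise (b) around the isosystem (\ref{C-sys}) and the continuous-time PE results of Section \ref{sec2_4}.

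\emph{Assertion (a).} Partition $\mathcal{B}_s=[\mathcal{B}_s^u,\mathcal{B}_s^y,\mathcal{B}_s^{\epsilon}]$ with $\mathcal{B}_s^{\epsilon}=\mathbf{0}$. Then (\ref{eta-sys}) reads $\dot\eta_t=\mathcal{A}_s\eta_t+\mathcal{B}_s^uu_t+\mathcal{B}_s^yy_t$. By Theorem \ref{thm1}(a), $y_t=Cx_t=CS\eta_t$. Substituting this for $y_t$ gives (\ref{C-sys}) immediately. The only thing to note is that the initial condition in (\ref{eta-sys}) is kept unchanged, so the trajectories coincide.

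\emph{Assertion (b), upper bound.} I would show that $\mathcal{V}_t$ lies in a fixed subspace of dimension $(m+2)n$. The idea is to write the full state of (\ref{C-sys}) as a linear image of a lower-dimensional state driven by $u$:
\begin{equation*}
\zeta_t:=[(\eta_t^u)^\top,\ x_t^\top,\ (\eta_t^{\epsilon})^\top]^\top\in\mathbb{R}^{(m+2)n}.
\end{equation*}
\begin{itemize}
\item Work in the Laplace domain with $Y(s)=C(sI_n-A)^{-1}(BU(s)+x_0)$, and use the zero initial conditions of $\eta^u$ and $\eta^y$.
\item Each entry $s^kY_j(s)/\Lambda(s)$ of $\eta^y$ is then a strictly proper combination of $U(s)/\Lambda(s)$-type filters and of the $x_0$-response.
\item Each $x_0$-generated term is matched by the observer identity $x=\hat x+\epsilon$, whose $\epsilon$-part is replicated by $\eta^{\epsilon}$ through $S^{\epsilon}$.
\end{itemize}
This should give a constant matrix $G$ with $\eta_t=G\zeta_t$, and hence $\dim\mathcal{V}_t\le(m+2)n$. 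Equivalently, one can argue that the pair $(\mathcal{A}_s+\mathcal{B}_s^yCS,[\mathcal{B}_s^u,\eta_0])$ has a reachable subspace of dimension at most $(m+2)n$, by a Kalman decomposition of (\ref{C-sys}). In that decomposition, the $pn$ coordinates of $\eta^y$ collapse onto the $n$ coordinates of the $y$-generating dynamics.

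\emph{Assertion (b), lower bound.} The pair $(A,B)$ is controllable, and $A-LC$ has its eigenvalues assigned stably. I would argue that the $u$-driven part of $\zeta$, namely $(\eta^u,x)$, is controllable. For $\eta^u$ this holds because $(I_m\otimes A_s,I_m\otimes b_s)$ is in companion form. Coupling with $x$ keeps controllability because $S$ has full row rank, by Theorem \ref{thm1}(b). The $\eta^{\epsilon}$-part is the autonomous response $e^{A_\epsilon t}\eta_0^{\epsilon}$. Its time samples span $\mathbb{R}^n$ when $(A_\epsilon,\eta^\epsilon_0)$ is cyclic, which holds because $S^{\epsilon_\eta}$ is invertible.

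Lemma \ref{dot-ux} is then applied to the controllable $u$-driven subsystem, which has order at most $2n$ since each $\eta^{u^i}$ block and $x$ have order $n$; this is why $2n$-th order PE is assumed. It shows that no nonzero $\nu$ annihilates $\zeta_t$ on the interval. Transferring through the injective map $G$ gives $\dim\mathcal{V}_t=(m+2)n$. Uniform sampling with $T\ge(m+2)n$ then gives $\rank(E_0)=(m+2)n$, exactly as (\ref{Hux}) follows from Lemma \ref{dot-ux}.

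Finally, $E_0$ has $(m+p+1)n$ rows. It therefore has full row rank if and only if $(m+p+1)n=(m+2)n$, that is, $p=1$.

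\emph{Main obstacle.} I expect the hardest step to be the precise construction of $G$ and the proof that it is injective on $\mathcal{V}_t$. This requires careful bookkeeping of the adjugate coefficients $\mathcal{D}$ and of the initial-condition terms. I would first try to establish it via the transfer-function identities above, before falling back on the Kalman decomposition of (\ref{C-sys}).
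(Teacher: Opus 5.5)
Your route differs from the paper's. The paper never uses filtered inputs as core coordinates. It passes to the raw jets via $[u_t;\dots;u_t^{(n-1)};y_t;\dots;y_t^{(n-1)};\eta^\epsilon_t]=\mathcal{M}[u_t;\dots;u_t^{(n-1)};\hat x_t;\eta^\epsilon_t]$ and applies Lemma~\ref{dot-ux} to the original $n$-th order plant. Your approach has a genuine gap.

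\textbf{The failing step.} The claim ``coupling with $x$ keeps controllability because $S$ has full row rank'' is false. Full row rank of $S$ is irrelevant here. The pair $(\mathrm{diag}(I_m\otimes A_s,A),[I_m\otimes b_s;B])$ driving $(\eta^u,x)$ is controllable if and only if $\sigma(A)\cap\sigma(A-LC)=\emptyset$. At a common eigenvalue $\mu$, take left eigenvectors $w$ of $A$ and $r$ of $A_s$ (so $r^\top b_s\neq 0$), and set $\alpha=-B^\top w/(r^\top b_s)$. Then $(\alpha\otimes r,w)$ violates the PBH test. The lemma assumes nothing about these spectra, and since $A$ is unknown the designer cannot enforce disjointness.

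\textbf{The upper bound fails too.} When disjointness fails, your $G$ does not exist either. Take $n=m=p=1$, $A=-1$, $B=C=1$, $L=0$, $\Lambda(s)=s+1$.
\begin{itemize}
\item $S=[1,0,x_0/\eta^\epsilon_0]$ has full row rank.
\item Yet $x_t=\eta^u_t+(x_0/\eta^\epsilon_0)\eta^\epsilon_t$, so $\zeta_t$ spans only a plane.
\item Meanwhile $\dot\eta^y_t=-\eta^y_t+\eta^u_t+(x_0/\eta^\epsilon_0)\eta^\epsilon_t$ produces a double-pole mode, and $\eta_t$ spans $\mathbb{R}^3$, so the lemma itself still holds.
\end{itemize}
Hence no constant $G$ with $\eta_t=G\zeta_t$ exists. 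In general the $x$-coefficient $M$ of $G$ must solve $(I_p\otimes A_s)M-MA=-(I_p\otimes b_s)C$, which is exactly where disjointness enters.

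\textbf{The PE count is also off.} The subsystem $(\eta^u,x)$ has order $(m+1)n$, not at most $2n$. Applying Lemma~\ref{dot-ux} as stated to it would need excitation beyond order $2n$ when $m\ge 2$.

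\textbf{A repair.} The following needs no spectral condition and gives both bounds at once: work with annihilators and pull back to the raw jets, as the paper does.
\begin{itemize}
\item Work with $g_t$ and split $\nu=(\nu^u,\nu^y,\nu^\epsilon)$. Applying $\Lambda(\frac{d}{dt})$ maps the filter states to $[u_t;\dots;u_t^{(n-1)}]$ and $[y_t;\dots;y_t^{(n-1)}]=\mathcal{T}[u_t;\dots;u_t^{(n-1)}]+\mathcal{O}x_t$. It annihilates $\eta^\epsilon_t$ by Cayley--Hamilton, since $\Lambda(A_\epsilon)=\mathbf{0}$.
\item By Lemma~\ref{dot-ux} with $N=n$ (exactly $2n$-th order PE), $\Lambda(\frac{d}{dt})(\nu^\top g_t)\equiv 0$ holds if and only if $\mathcal{O}^\top\nu^y=\mathbf{0}$ and $\nu^u=-\mathcal{T}^\top\nu^y$. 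With $\nu^\epsilon$ free, this is a family of dimension $(p-1)n+n=pn$.
\item For $\nu$ in this family, $f_t:=\nu^\top g_t$ solves $\Lambda(\frac{d}{dt})f=0$. So $f\equiv 0$ amounts to $n$ linear conditions on its jet at some $t_0$.
\item These conditions are independent. On the subfamily $\nu^u=\nu^y=\mathbf{0}$ they read $\nu^{\epsilon\top}e^{A_\epsilon t_0}[\eta^\epsilon_0,A_\epsilon\eta^\epsilon_0,\dots,A_\epsilon^{n-1}\eta^\epsilon_0]=\mathbf{0}$, and this matrix is invertible by cyclicity.
\end{itemize}
Hence the annihilator has dimension exactly $(p-1)n$, and the span of $\eta_t$ has dimension $(m+2)n$.
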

	
\begin{proof}
	Assertion (a) follows immediately by substituting $y_t=CS\eta_t$ into the input vector of system (\ref{eta-sys}). We now prove assertion (b). Note that while the substitute state $\eta_{t}$ in (\ref{eta-sys}) is generated by filtering each component of the input and output individually, we herein consider filtering the entire vectors of the input and output for theoretical analysis. That is, define
	$$g_{t}:=F^{\eta}\eta_{t}=\left[\mathcal{L}^{-1}\left(\frac{U(s)}{\Lambda(s)}^\top,\frac{sU(s)}{\Lambda(s)}^\top,\cdots,\frac{s^{n-1}U(s)}{\Lambda(s)}^\top,\frac{Y(s)}{\Lambda(s)}^\top,\frac{sY(s)}{\Lambda(s)}^\top,\cdots,\frac{s^{n-1}Y(s)}{\Lambda(s)}^\top\right),\eta_t^{\epsilon\top}\right]^{\top},$$
    where $F^{\eta}$ is a row permutation matrix. The dimensions of the spaces spanned by $\eta_{t}$ and $g_{t}$ are therefore equal. To analyze the dimension of the space spanned by $\eta_t$, it suffices to consider that of $g_t$. Similarly, $g_t$ is partitioned into input, output, and observation error components as $g_t=[(g_t^u)^\top,(g_t^y)^\top,(\eta_t^{\epsilon})^\top]^\top$. Let $G^u(s):=\mathcal{L}(g_{t}^u)$; it follows that
	\begin{equation}\label{G}
		\Lambda(s)G^u(s)=[U(s)^\top,(sU(s))^\top,\cdots,(s^{n-1}U(s))^\top]^\top.
	\end{equation}
	Taking the inverse Laplace transform of both sides of equation (\ref{G}) yields
	$$a_0g_t^u+a_1(g_t^{u})^{(1)}+\cdots+a_{n-1}(g_t^{u})^{(n-1)}+(g_t^{u})^{(n)}=[(u_t)^\top,(u_t^{(1)})^\top,\cdots,(u_t^{(n-1)})^\top]^\top,$$
	Since $g_{t}^{u}$ is a block vector, let $g_{t}^{u}=[(\xi_0)^\top,\cdots,(\xi_{n-1})^\top]^\top$ with $\xi_j=\mathcal{L}^{-1}\left(\frac{s^jU(s)}{\Lambda(s)}\right)\in\mathbb{R}^{m}$ for $j=0,\cdots,n-1$. It follows that
	\begin{equation}\label{xi-u}
		a_0\xi^{(0)}_j+a_1\xi^{(1)}_j+\cdots+a_{n-1}\xi^{(n-1)}_j+\xi^{(n)}_j=u_t^{(j)},\quad j=0,\cdots,n-1.
	\end{equation}
	By the continuous differentiability of the input signal, it holds that $\xi_{j+1}=\xi_j^{(1)}$ for $j=0,1,\cdots,n-2$. Let $z_t:=\xi_0$; then $g_t^u=[(z_t)^\top,(z_t^{(1)})^\top,\cdots,(z_t^{(n-1)})^\top]^\top$. Since the mapping $u_t\mapsto z_t=\mathcal{L}^{-1}\left(\frac{U(s)}{\Lambda(s)}\right)$ is linear and invertible on the space of sufficiently smooth trajectories generated by PE signals, the dimensions of the spaces spanned by $g_t^{u}$ and $[(u_t)^\top,(u_t^{(1)})^\top,\cdots,(u_t^{(n-1)})^\top]^\top$ are equal. Similarly, the dimensions of the spaces spanned by $g_t^y$ and $[(y_t)^\top,(y_t^{(1)})^\top,\cdots,(y_t^{(n-1)})^\top]^\top$ are equal.
	In conclusion, to analyze the dimension of the space spanned by $g_t$, it suffices to consider that of $[(u_t)^\top,\cdots,(u_t^{(n-1)})^\top,(y_t)^\top,\cdots,(y_t^{(n-1)})^\top,(\eta^\epsilon_t)^\top]^\top$. Note that the following identity holds
	$$\begin{bmatrix}
		\begin{array}{c}
			u_t\\\vdots\\u_t^{(n-1)}\\\hdashline y_t\\\vdots\\y_t^{(n-1)}\\\hdashline\eta_t^{\epsilon}
		\end{array}
	\end{bmatrix}=\underbrace{\begin{bmatrix}
			\begin{array}{c:c:c}
				I_{mn}&\mathbf{0}&\mathbf{0}\\\hdashline
				\mathcal{T}(A,B,C)&\mathcal{O}(A,C)&\mathcal{O}(A,C)S^{\epsilon}\\\hdashline
				\mathbf{0}&\mathbf{0}&I_n
			\end{array}
	\end{bmatrix}}_{\mathcal{M}}\begin{bmatrix}
		\begin{array}{c}
			u_t\\\vdots\\u_t^{(n-1)}\\\hdashline \hat{x}_{t}\\\hdashline\eta_t^{\epsilon}
		\end{array}
	\end{bmatrix}.$$
    Under the controllability assumption for system (\ref{system}) and by Lemma \ref{dot-ux}, the vector $[(u_t)^\top,\cdots,(u_t^{(n-1)})^\top,x_t^\top]$ spans $\mathbb{R}^{mn+n}$ for inputs satisfying the $2n$-th order PE condition. As for the space spanned by $\eta^\epsilon_t$, since we have chosen $\mathcal{D}_{\epsilon}(I_n\otimes \eta_0^{\epsilon})$ to be invertible, the structure of $\mathcal{D}_{\epsilon}$ implies that the matrix $[A_{\epsilon}^{n-1}\eta_0^{\epsilon},\cdots,A_{\epsilon}\eta_0^{\epsilon},\eta_0^{\epsilon}]$ is invertible. Consequently, under the autonomous system (\ref{eta-eps}), the space spanned by $\eta_t^\epsilon$ is $\mathrm{im}([A_{\epsilon}^{n-1}\eta_0^{\epsilon},\cdots,A_{\epsilon}\eta_0^{\epsilon},\eta_0^{\epsilon}])=\mathbb{R}^{n}$.
	Since $\eta_t^{\epsilon}$ is independent of the inputs $\{u_t\}$ and the initial state $x_0$, $\eta_{t}^{\epsilon}$ is independent of $[(u_t)^\top,\cdots,(u_t^{(n-1)})^\top,(x_t)^\top]^\top$. Accordingly, the space spanned by $[(u_t)^\top,\cdots,(u_t^{(n-1)})^\top,(x_t)^\top,(\eta_t^\epsilon)^\top]^\top$ is $\mathbb{R}^{(m+2)n}$. Since $\hat{x}_t=x_t-S^{\epsilon}\eta_t^{\epsilon}$, it holds that
	$$\begin{bmatrix}
		\begin{array}{c}
			u_t\\\vdots\\u_t^{(n-1)}\\\hdashline \hat{x}_t\\\hdashline\eta_t^{\epsilon}
		\end{array}
	\end{bmatrix}=\underbrace{\begin{bmatrix}
			\begin{array}{c:c:c}
				I_{mn}&\mathbf{0}&\mathbf{0}\\\hdashline
				\mathbf{0}&I_n&-S^{\epsilon}\\\hdashline
				\mathbf{0}&\mathbf{0}&I_n
			\end{array}
	\end{bmatrix}}_{\mathcal{I}}\begin{bmatrix}
		\begin{array}{c}
			u_t\\\vdots\\u_t^{(n-1)}\\\hdashline x_{t}\\\hdashline\eta_t^{\epsilon}
		\end{array}
	\end{bmatrix},$$
	where $\mathcal{I}$ is invertible. Thus, the span of $[(u_t)^\top,\cdots,(u_t^{(n-1)})^\top,(\hat{x}_t)^\top,(\eta_t^\epsilon)^\top]^\top$ is $\mathbb{R}^{(m+2)n}$, meaning this vector is reachable. Consequently, the span of $[(u_t)^\top,\cdots,(u_t^{(n-1)})^\top,(y_t)^\top,\cdots,(y_t^{(n-1)})^\top,(\eta^\epsilon_t)^\top]^\top$  is $\mathrm{im}(\mathcal{M})$. Furthermore, under the observability assumption for system (\ref{system}), $\mathcal{M}$ has full column rank with rank $(m+2)n$. Combining these results, the dimension of the span of $\eta_t$ is $(m+2)n$. Therefore, under sufficiently exciting input signals, the rank of the substitute state data matrix $E_0$ is $(m+2)n$. However, since the number of its rows is $(m+p+1)n$, these two numbers are equal if and only if $p=1$.	
\end{proof}

\begin{remark}
	The substitute state $\eta_{t}$ is still obtained in a model-free manner via equation (\ref{eta-sys}). The isosystem (\ref{C-sys}) is introduced primarily to exploit its structural consistency with the original system (\ref{system}), facilitating algorithm design. Note that the dynamic matrix $\mathcal{A}_s+\mathcal{B}_s^yCS$ of the isosystem is unknown, whereas the input matrix $\mathcal{B}_s^u$ is known.
\end{remark}

\begin{remark}[Relationships and Comparisons with \cite{In-out}]\label{re1}
	From the proof of Lemma \ref{le1}, equation (\ref{xi-u}) for $j=0$ becomes
	$$u_t=a_0z_t+a_1z_t^{(1)}+\cdots+a_{n-1}z_t^{n-1}+z_t^{(n)},$$
	where $z_t=\mathcal{L}^{-1}\left(\frac{U(s)}{\Lambda(s)}\right)$; similarly, let $\ell_t=\mathcal{L}^{-1}\left(\frac{Y(s)}{\Lambda(s)}\right)$, and then
	$$y_t=a_0\ell_t+a_1\ell_t^{(1)}+\cdots+a_{n-1}\ell_t^{n-1}+\ell_t^{(n)}.$$
	Thus, $[z_t^\top,\ell_t^\top]^\top$ can be regarded as an internal latent variable that generates the input-output of system (\ref{system}), which aligns with the conclusion in \cite[Section III.A]{In-out}. However, in contrast to \cite{In-out}, we explicitly derive the specific form and physical meaning of $[z_t^\top,\ell_t^\top]^\top$, which is no longer merely a symbolic quantity.
    Furthermore, we show that the jet vector formed from $[z_t^\top,\ell_t^\top]^\top$, namely
	$[(g_t^u)^\top,(g_t^y)^\top]^\top=[(z_t)^\top,\cdots,(z_t^{(n-1)})^\top,(\ell_t)^\top,\cdots,(\ell_t^{(n-1)})^\top]^\top$, spans a space of the same dimension as the input-output jet vector $[(u_t)^\top,\cdots,(u_t^{(n-1)})^\top,(y_t)^\top,\cdots,(y_t^{(n-1)})^\top]^\top$ employed in \cite{In-out}. Unlike the jet vector in \cite{In-out}, which requires high-order derivatives that must be approximated numerically, our jet vector $[(g_t^u)^\top,(g_t^y)^\top]^\top$ yields true values in a model-free manner via equation (\ref{eta-sys}) (noting that the permutation matrix $P^{\eta}$ is known).
    Additionally, 
    assertion (b) of Lemma \ref{le1} will serve to refine the continuous-time trajectory generation characterization of Willems’ Fundamental Lemma over that in \cite{In-out}.
    This topic will be analyzed in depth in Section \ref{sec5}.
\end{remark}

By Lemma \ref{le1}, for multi-output systems ($p>1$), the data matrix $E_0$ is not full row rank. Directly replacing the unknown full row rank true state matrix $X_0:=[x_t,x_{t+\Delta t},\cdots,x_{t+(T-1)\Delta t}]$ with a rank-deficient $E_0$ may fail to completely parameterize the closed-loop system and feedback controller. Furthermore, algorithms for solving Problem \ref{P1} may suffer from numerical difficulties, instability, or non-convergence. However, under sufficient PE conditions, the linearly independent rows of $E_0$ contain all the true system information. To ensure the resulting data matrix is full row rank, we perform a full QR decomposition on $E_0$ and use the first $(m+2)n$ rows of the transposed orthogonal matrix as a projection. Projecting $\eta_t$ via this matrix yields an improved substitute state $\phi_{t}$, whose corresponding data matrix is guaranteed to be full row rank and possesses an associated isosystem. This is formalized in Lemma \ref{le2}.

    \begin{lemma}\label{le2}
	If the data matrix $E_0:=[\eta_t,\cdots,\eta_{t+(T-1)\Delta t}]$ obtained from system (\ref{eta-sys}) undergoes a full QR decomposition such that $E_0=F[\Phi_0^\top,\mathbf{0}_{T\times (p-1)n}]^\top$ with orthogonal matrix $F\in\mathbb{R}^{(m+p+1)n\times (m+p+1)n}$, define $F_1$ as the matrix formed by the first $(m+2)n$ rows of $F^{\top}$, and let the improved substitute state be $\phi_t:=F_1\eta_{t}$. Then, the following assertions hold.
	\begin{itemize}
		\item[\textup{(a)}] $\phi_t$ satisfies the isosystem given by
		\begin{equation}\label{full-c-sys}
			\dot{\phi}_t=F_1(\mathcal{A}_s+\mathcal{B}_s^yCS)F_1^\top \phi_t+F_1\mathcal{B}_s^u u_t,\quad y_t=CSF_1^\top \phi_t.
		\end{equation}
		\item[\textup{(b)}] If the input to system (\ref{system}) satisfies the $2n$-th order PE condition, the space spanned by the improved substitute state $\phi_t$ is $\mathbb{R}^{(m+2)n}$, and the resulting data matrix $\Phi_0:=[\phi_t,\cdots,\phi_{t+(T-1)\Delta t}]$ is guaranteed to be full row rank.
	\end{itemize}
\end{lemma}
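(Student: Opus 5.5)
The plan is to show that every substitute state $\eta_\tau$ lies in the column space of the first $(m+2)n$ columns of $F$. Write $F=[F_1^\top,F_2^\top]$, where $F_2$ consists of the last $(p-1)n$ rows of $F^\top$. Then $F_1^\top F_1+F_2^\top F_2=I$, and the claim to establish is $F_2\eta_\tau=\mathbf{0}$. Once this holds, $\eta_\tau=F_1^\top F_1\eta_\tau=F_1^\top\phi_\tau$, and both assertions follow quickly.

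The first step establishes this invariance. From the QR factorization, $F^\top E_0=[\Phi_0^\top,\mathbf{0}]^\top$, so $F_2E_0=\mathbf{0}$ and $F_1E_0=\Phi_0$. Hence $\mathrm{im}(E_0)\subseteq\ker F_2=\mathrm{im}(F_1^\top)$, a subspace of dimension $(m+2)n$. By Lemma \ref{le1}(b), under $2n$-th order PE inputs, $\rank(E_0)=(m+2)n$, which equals the dimension of the whole space $\mathcal{V}$ spanned by all $\eta_\tau$. Since $\mathrm{im}(E_0)\subseteq\mathcal{V}$ and the two have equal dimension, $\mathrm{im}(E_0)=\mathcal{V}=\mathrm{im}(F_1^\top)$. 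Therefore every $\eta_\tau$ generated by (\ref{eta-sys}), including those along the same trajectory at other times, satisfies $\eta_\tau=F_1^\top\phi_\tau$.

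The main obstacle is the reliance on the $(m+2)n$-dimensional span being the same subspace for all times $\tau$, not just a subspace of the same dimension. The argument must show that the span is time-invariant. I would argue this from the proof of Lemma \ref{le1}: the span is $F^{\eta}$ applied to the image of a fixed matrix, namely the product of the filter-to-jet map with $\mathcal{M}$, and neither factor depends on $t$. So the span is a fixed subspace that contains all $\eta_\tau$ and equals $\mathrm{im}(E_0)$.

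For assertion (a), multiply the isosystem (\ref{C-sys}) on the left by $F_1$ to get $\dot\phi_t=F_1(\mathcal{A}_s+\mathcal{B}_s^yCS)\eta_t+F_1\mathcal{B}_s^uu_t$. Substituting $\eta_t=F_1^\top\phi_t$ then gives (\ref{full-c-sys}). The output equation likewise follows from $y_t=CS\eta_t=CSF_1^\top\phi_t$.

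For assertion (b), $\Phi_0=F_1E_0$ and $F_1^\top\Phi_0=E_0$, so $\rank(\Phi_0)=\rank(E_0)=(m+2)n$, which equals the row count of $\Phi_0$. Hence $\Phi_0$ has full row rank. The span of $\phi_\tau$ is $F_1\mathcal{V}=F_1\,\mathrm{im}(F_1^\top)=\mathbb{R}^{(m+2)n}$, because $F_1F_1^\top=I_{(m+2)n}$.
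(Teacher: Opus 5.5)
Your proposal is correct and follows essentially the paper's route: $F_2E_0=\mathbf{0}$ from the QR factorization, identification of the span of $\eta$ with $\mathrm{im}(F_1^\top)$ via Lemma~\ref{le1}(b), substitution of $\eta_t=F_1^\top\phi_t$ into (\ref{C-sys}), and $\rank(\Phi_0)=\rank(E_0)=(m+2)n$. It is in fact more careful than the paper, which simply asserts $\eta_t=F_1^\top\phi_t$ for all $t$ without distinguishing sampled times from arbitrary times.

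The one loose spot is your time-invariance justification. Lemma~\ref{le1} gives no pointwise ``filter-to-jet'' matrix; it only gives a dimension count via a map on trajectories. A clean fix is to use the filtered state $\zeta_t:=\mathcal{L}^{-1}(\mathcal{L}(x_t)/\Lambda(s))$, which obeys $\dot\zeta_t=A\zeta_t+Bz_t+x_0h_t$ with $h_t:=\mathcal{L}^{-1}(1/\Lambda(s))$. The jet of $h$ is $e^{A_st}b_s$, a fixed linear function of $\eta_t^{\epsilon}=e^{A_{\epsilon}t}\eta_0^{\epsilon}$, because $(A_s,b_s)$ and $(A_\epsilon,\eta_0^\epsilon)$ are cyclic with characteristic polynomial $\Lambda$ and hence similar. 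Therefore $g_t=\mathcal{M}'[(g_t^u)^\top,\zeta_t^\top,(\eta_t^{\epsilon})^\top]^\top$ for a fixed $\mathcal{M}'$ with $(m+2)n$ columns. Every $\eta_\tau$ then lies in the fixed subspace $(F^{\eta})^\top\mathrm{im}(\mathcal{M}')$, which is exactly what your dimension-matching step needs.
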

\begin{proof}     
	For assertion (a), let $F^{\top}=[F_1^\top,F_2^\top]^\top$ with $F_1\in\mathbb{R}^{(m+2)n\times (m+p+1)n}$ and $F_2\in\mathbb{R}^{(p-1)n\times (m+p+1)n}$. Then,
	$F=[F_1^\top,F_2^\top]$, and by the definitions of $F$ and $\phi_t$, we have $\eta_t=F[\phi_t^\top,\mathbf{0}]^\top=F_1^\top\phi_t$. Substituting this into equation (\ref{eta}) yields
	\begin{align*}
		\dot{\phi}_t&=F_1(\mathcal{A}_s+\mathcal{B}_s^yCS) \eta_t+F_1\mathcal{B}_s^u u_t=F_1(\mathcal{A}_s+\mathcal{B}_s^yCS) F[\phi_t^\top,\mathbf{0}]^\top+F_1\mathcal{B}_s^u u_t\\
		&=F_1(\mathcal{A}_s+\mathcal{B}_s^yCS) F_1^\top\phi_t+F_1\mathcal{B}_s^u u_t,
	\end{align*}
	and a similar derivation yields the equation for $y_t$.
	
	For assertion (b), 
	from assertion (b) of Lemma \ref{le1} and the relation $F^\top \eta_t=[\phi_t^\top,\mathbf{0}_{1\times (p-1)n}]^\top$, it follows that the space spanned by $\eta_t$ is contained in the subspace spanned by the first $(m+2)n$ columns of $F$ (i.e., $F_1$), with $\phi_t$ serving as the coordinate representation within this subspace. Consequently, under the $2n$-th order PE input, the space spanned by $\phi_t$ coincides with that of $\eta_t$, which is $\mathbb{R}^{(m+2)n}$. This further implies that $\Phi_0$ has full row rank.
\end{proof}

Under the projection matrix $F_1$, it follows that $x_t=SF_1^\top\phi_t$. Now we verify if the improved state parameterization matrix $SF_1^\top$ remains full row rank. Since $X_0=SE_0=S[F_1^\top,F_2^\top][\Phi_0^\top,\mathbf{0}]^\top=(SF_1^\top) \Phi_0$, and both $X_0$ and $\Phi_0$ are full row rank under the $2n$-th order PE condition, the improved state parameterization matrix $SF_1^\top$ is guaranteed to be full row rank.

 Define $U_0:=[u_{t},\cdots,u_{t+(T-1)\Delta t}]$. Theorem \ref{thm2} demonstrates the role of the full row rank substitute state data matrix $\Phi_0$ in the parameterization of the closed-loop systems and feedback control gains.
\begin{theorem}\label{thm2}
	If the input to system (\ref{system}) satisfies the $(2n+1)$-th order PE condition, the full row rank matrix $\Phi_0$ constructed according to Lemma \ref{le2} can be used to completely parameterize the closed-loop form of system (\ref{full-c-sys}) and the feedback controller $u_t=K\phi_t$, i.e.,
	\begin{equation}\label{para}
		\mathrm{im}\left(\begin{bmatrix}
			F_1\\K
		\end{bmatrix}\right)\subseteq\mathrm{im}\left(\begin{bmatrix}
		\Phi_0\\U_0
	\end{bmatrix}\right).
	\end{equation}
\end{theorem}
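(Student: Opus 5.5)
The plan is to show that the stacked data matrix $[\Phi_0^\top,U_0^\top]^\top\in\mathbb{R}^{((m+2)n+m)\times T}$ has full row rank. Then its image is all of $\mathbb{R}^{(m+2)n+m}$, and the inclusion \eqref{para} holds trivially for any gain $K$. I read the left-hand block in \eqref{para} as the coordinate map of $\phi_t$ (i.e.\ $I_{(m+2)n}$ stacked over $K$), so that the column dimensions match those of $[\Phi_0^\top,U_0^\top]^\top$. Consequently every closed-loop trajectory of \eqref{full-c-sys} under $u_t=K\phi_t$ can be written as $[\Phi_0^\top,U_0^\top]^\top G$ for some $G$. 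This is the continuous-time analogue of \eqref{IK}.

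\textbf{Step 1 (reduction to a jet vector).} Since $\phi_t=F_1\eta_t$ and the rows of $\eta_t$ lie in $\mathrm{im}(F_1^\top)$, the map $\eta_t\mapsto\phi_t$ is injective on the relevant space. As in Lemma \ref{le1}, I will therefore replace $\phi_t$ by the permuted vector $g_t=[(g_t^u)^\top,(g_t^y)^\top,(\eta_t^\epsilon)^\top]^\top$. The task becomes showing that $[g_t^\top,u_t^\top]^\top$ spans a space of dimension $(m+2)n+m$. From the proof of Lemma \ref{le1}, $g_t^u=[z_t^\top,\dots,(z_t^{(n-1)})^\top]^\top$, and by \eqref{xi-u} with $j=0$,
\[
u_t=a_0z_t+\cdots+a_{n-1}z_t^{(n-1)}+z_t^{(n)}.
\]
Hence $[(g_t^u)^\top,u_t^\top]^\top$ is related to the extended jet $[z_t^\top,\dots,(z_t^{(n)})^\top]^\top$ by an invertible block-triangular matrix. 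In turn, this jet is related to $[u_t^\top,\dots,(u_t^{(n)})^\top]^\top$ through the invertible filter $1/\Lambda(s)$, by the same argument used in Lemma \ref{le1}.

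\textbf{Step 2 (spanning via Lemma \ref{dot-ux}).} Apply Lemma \ref{dot-ux} with $N=n+1$; this is exactly where the $(2n+1)$-th order PE condition enters. It gives that $[u_t^\top,\dots,(u_t^{(n)})^\top,x_t^\top]^\top$ spans $\mathbb{R}^{m(n+1)+n}$. Since $\eta_t^\epsilon$ is generated autonomously and independently, appending it enlarges the span to $\mathbb{R}^{m(n+1)+2n}$, exactly as in Lemma \ref{le1}. I then write
\[
[u_t^\top,\dots,(u_t^{(n)})^\top,y_t^\top,\dots,(y_t^{(n-1)})^\top,(\eta_t^\epsilon)^\top]^\top
\]
as an extended version of $\mathcal{M}$ applied to $[u_t^\top,\dots,(u_t^{(n)})^\top,\hat x_t^\top,(\eta_t^\epsilon)^\top]^\top$. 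The extended matrix has an extra identity block for $u_t^{(n)}$ and a zero column in the $y$-rows, because $y^{(k)}$ for $k\le n-1$ involves only $u,\dots,u^{(n-2)}$. Observability gives full column rank $m(n+1)+2n=(m+2)n+m$, and combining with Step 1 yields the required span dimension.

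\textbf{Step 3 (from span to sampled data).} Finally I pass from the span statement to the rank of the sampled matrix $[\Phi_0^\top,U_0^\top]^\top$. The argument is the one behind \eqref{Hux} and Lemma \ref{le2}(b). Suppose a nonzero left null vector $\nu$ exists. Then $\nu^\top[\phi_\tau^\top,u_\tau^\top]^\top=0$ holds at all $T\ge(m+2)n+m$ sampling instants. For generic sampling this contradicts the fact that the signal is not identically annihilated on the interval.

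The main obstacle is Step 1/Step 2. One must show that $u_t$ contributes $m$ genuinely new directions beyond $\phi_t$. This hinges on $u_t$ depending on the top derivative $z_t^{(n)}$, which appears nowhere in $\phi_t$. It therefore requires the extra order of excitation, and care that the $y$-jet does not involve $u_t^{(n)}$.
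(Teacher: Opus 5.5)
Your route is essentially the paper's, which it leaves implicit by deferring to the discrete-time argument of \cite{DD-PIVI}: show that $[\Phi_0^\top,U_0^\top]^\top$ has full row rank $(m+2)n+m$ by rerunning the span argument of Lemma \ref{le1} with one extra derivative order (Lemma \ref{dot-ux} with $N=n+1$, which is where the $(2n+1)$-th order PE enters), so that \eqref{para} holds trivially once the $F_1$ block is read as $I_{(m+2)n}$. This is correct at the paper's level of rigor; the one soft spot, shared with the paper, is Step 3, since a fixed nonzero $\nu$ vanishing at finitely many samples does not contradict non-vanishing on the interval, so that step should instead say that one chooses (generic) sampling instants at which the spanning vectors are linearly independent.
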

The proof of this theorem follows similarly to the discrete-time case in \cite{DD-PIVI}. Furthermore, extending the discrete-time results confirms that for multi-output systems, $E_0$ cannot yield a representation analogous to (\ref{para}). Beyond its role in Theorem \ref{thm2}, if $\Phi_0$ is placed on the far right (or $\Phi_0^\top$ on the far left) in the model-free iteration equations, post-multiplying (or pre-multiplying) by its right inverse (or the transpose thereof) yields equations equivalent to the original ones. This facilitates the design of data-enabled algorithms, as will be demonstrated in Section \ref{sec4}.

\subsection{Equivalent LQR Problem}\label{sec3_2}
This subsection transforms the information-limited Problem~\ref{P1} into an equivalent new LQR problem with fewer unknowns. Using $x_t=SF_1^\top \phi_t$, the original cost functional (\ref{J}) can be rewritten as
\begin{equation}\label{full-J}
	J(\phi_0,K)=\int_{0}^{\infty}(\phi^\top_t Q_{\phi}\phi_{t}+ u_t^\top R u_t)dt,
\end{equation}
where $Q_{\phi}:=F_1S^\top C^\top QCSF_1^\top\in\mathbb{S}^{(m+2)n}_{+}$ is unknown, even though its partial structure $Q$ is known. Furthermore, since the improved substitute state $\phi_t$ and input $u_t$ satisfy the linear system (\ref{full-c-sys}), Problem \ref{P1} can be reformulated as Problem \ref{P2}.
\begin{problem}\label{P2}
	Consider the LQR problem defined by system (\ref{full-c-sys}) and cost functional (\ref{full-J}), where the dynamic matrix $F_1(\mathcal{A}_s+\mathcal{B}_s^yCS)F_1^\top$ and the weighting matrix $Q_{\phi}$ are unknown, while the partial structure $Q$ of $Q_{\phi}$, the input matrix $F_1\mathcal{B}_s^u$, the weighting matrix $R$, and the state $\phi_t$ are known. The objective is to find an optimal feedback control gain $K$ that minimizes the cost functional (\ref{full-J}) under the control policy $u_t=-K\phi_{t}$.
\end{problem}

Now we prove that Problem \ref{P1} is equivalent to Problem \ref{P2}. We first analyze the relationship between the parameter matrices of system (\ref{system}) and system (\ref{full-c-sys}), which is summarized in Lemma \ref{le3}.

\begin{lemma}\label{le3}
	Consider systems (\ref{system}) and (\ref{full-c-sys}). The following relationships hold
	\begin{equation}\label{AM}
		SF_1^\top F_1(\mathcal{A}_s+\mathcal{B}_s^y CS)F_1^\top=ASF_1^\top,\quad SF_1^\top F_1\mathcal{B}_s^u=B.
	\end{equation}
\end{lemma}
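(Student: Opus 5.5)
The plan is to differentiate the identity $x_t = SF_1^\top\phi_t$ along trajectories and compare with system (\ref{system}). A data-rank argument then lets us match coefficients.

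First, note that $\eta_t = F_1^\top\phi_t$, as shown in the proof of Lemma \ref{le2}(a). By Theorem \ref{thm1}(a), $x_t = S\eta_t = SF_1^\top\phi_t$ holds for all $t$ and for every input. Differentiating this identity and using (\ref{full-c-sys}) gives
$$\dot x_t = SF_1^\top F_1(\mathcal{A}_s+\mathcal{B}_s^yCS)F_1^\top\phi_t + SF_1^\top F_1\mathcal{B}_s^u u_t .$$
On the other hand, (\ref{system}) gives
$$\dot x_t = Ax_t + Bu_t = ASF_1^\top\phi_t + Bu_t .$$
Subtracting the two expressions yields
$$\bigl[SF_1^\top F_1(\mathcal{A}_s+\mathcal{B}_s^yCS)F_1^\top - ASF_1^\top\bigr]\phi_t + \bigl[SF_1^\top F_1\mathcal{B}_s^u - B\bigr]u_t = \mathbf{0}$$
for all $t$.

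Next, evaluate this identity at the sampling instants and stack the samples. This gives $[M_1,\,M_2]\,[\Phi_0^\top,\,U_0^\top]^\top=\mathbf{0}$, where $M_1$ and $M_2$ are the two bracketed matrices. Matching coefficients then requires $[\Phi_0^\top,U_0^\top]^\top$ to have full row rank $(m+2)n+m$. This is the \emph{main obstacle}, since Lemma \ref{le2}(b) only gives full row rank of $\Phi_0$. I would obtain it from the parameterization in Theorem \ref{thm2} under the $(2n+1)$-th order PE condition. Alternatively, I would argue directly at the level of spanned spaces, mirroring the proof of Lemma \ref{le1}(b). In that argument, the vector $[(u_t)^\top,\dots,(u_t^{(n)})^\top,x_t^\top,(\eta_t^\epsilon)^\top]^\top$ spans $\mathbb{R}^{m(n+1)+2n}$ by Lemma \ref{dot-ux} together with the independence of $\eta^\epsilon_t$. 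Moreover, $[\phi_t^\top,u_t^\top]^\top$ is obtained from it by an injective linear map: the $\phi_t$ part carries $u,\dots,u^{(n-1)},\hat x,\eta^\epsilon$ through the full-column-rank map $\mathcal{M}$ and the coordinate change $F_1$, and $u_t$ is an extra coordinate. Hence the joint vector spans $\mathbb{R}^{(m+2)n+m}$, and the sampled data matrix has full row rank.

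Given full row rank, right-multiplying by a right inverse gives $M_1=\mathbf{0}$ and $M_2=\mathbf{0}$, which is exactly (\ref{AM}). A cheaper alternative for the second relation is to argue pointwise. Because $\phi_t$ is driven by $u_t$ through a state equation, one can apply two inputs that agree up to time $t$ but have different values $u_t$ at $t$ (for example, inputs with a jump in $u$ at $t$). Then $\phi_t$ is unchanged while $u_t$ varies, which forces $M_2=\mathbf{0}$. Once $M_2=\mathbf{0}$, the first relation follows from $M_1\Phi_0=\mathbf{0}$ and the full row rank of $\Phi_0$ given by Lemma \ref{le2}(b) alone. I would present this last route, since it avoids the joint rank condition entirely.
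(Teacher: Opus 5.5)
Your proof is correct, but it takes a different route from the paper's.

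\textbf{The paper's route.} It argues algebraically. From $D_{n-1}=I_n$ and the companion/adjugate relations it shows $S\mathcal{B}_s^u=B$, $S\mathcal{B}_s^y=L$ and $S\mathcal{A}_s=(A-LC)S$, the last via $(A-LC)S^{\epsilon}=S^{\epsilon}A_{\epsilon}$. This gives the unprojected identity (\ref{SA}), valid on all of $\mathbb{R}^{(m+p+1)n}$ and independent of data. Data enter only afterwards. The projector $F_1^\top F_1$ can be inserted because $\mathcal{V}=\mathrm{im}(F_1^\top)$ contains the columns of $\mathcal{B}_s^u$ (impulse-response limit) and is $(\mathcal{A}_s+\mathcal{B}_s^yCS)$-invariant.

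\textbf{Your route.} You identify coefficients by comparing two state-space descriptions of the same trajectory. The instantaneous freedom of $u_t$ forces $M_2=\mathbf{0}$; your jump plays the role of the paper's impulse. The full row rank of $\Phi_0$ from Lemma \ref{le2}(b) then forces $M_1=\mathbf{0}$. This is shorter and avoids the adjugate bookkeeping entirely.

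\textbf{Details to state explicitly.} The test inputs must use the same $x_0$ and $\eta_0^{\epsilon}$, because $S^{\epsilon}$ depends on them. They must also still yield $\eta_t\in\mathrm{im}(F_1^\top)$, so that both $x_t=SF_1^\top\phi_t$ and (\ref{full-c-sys}) hold for them. This is the same premise $\mathcal{V}=\mathrm{im}(F_1^\top)$ the paper relies on, and it is needed because $F_1$ is computed from a single data trajectory. At the jump, use one-sided derivatives, or smooth steep ramps followed by a limit.

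\textbf{What your route does not give.} It only yields the projected identities. From data you can at most obtain $(S(\mathcal{A}_s+\mathcal{B}_s^yCS)-AS)F_1^\top=\mathbf{0}$, since $E_0$ is rank deficient when $p>1$. The proof of Lemma \ref{le5}, however, applies (\ref{SA}) to an arbitrary $v\in\mathbb{R}^{(m+p+1)n}$. If you adopt your proof, that step must either be redone on the projected pair (using the invariance of $\mathcal{V}$) or supplemented by the paper's algebraic derivation of (\ref{SA}).

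\textbf{The discarded alternative.} Your second sketch for the joint rank of $[\Phi_0^\top,U_0^\top]^\top$ is not sound as stated. $\phi_t$ is a filtered (convolution) quantity of past inputs and outputs. It is not a static linear function of the current jet $[u_t^\top,\dots,(u_t^{(n)})^\top,x_t^\top,(\eta_t^{\epsilon})^\top]^\top$, so no pointwise injective map exists. This does not affect the route you actually chose.
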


 \begin{proof}
	We first prove the second equality. From the definitions of $\mathcal{D}$, $S$, $S^u$, $S^u_i$ for $i=1,\cdots,m$, and $\mathcal{B}_s^{u}=[(I_{m}\otimes b_s)^\top, \mathbf{0}_{m\times (p+1)n}]^\top$ in Section \ref{sec3_1}, together with $D_{n-1}=I_n$, it holds that
	\begin{align}\label{BB}
		\nonumber S\mathcal{B}_s^u&=S^u(I_m\otimes b_s)=[S_1^ub_s,\cdots,S_m^ub_s]\\\nonumber
		&=[[D_0B_1,\cdots,D_{n-1}B_1]b_s,\cdots,[D_0B_m,\cdots,D_{n-1}B_m]b_s]\\\nonumber
		&=[D_{n-1}B_1,\cdots,D_{n-1}B_m]\\
		&=[B_1,\cdots,B_m]=B.
	\end{align}   
    Since $\phi_t=F_1\eta_t$ and $\eta_t=[F_1^\top,F_2^\top][\phi_t^\top,\mathbf{0}]^\top=F_1^\top\phi_t$, it follows that $F_1^\top F_1\eta_t=F_1^\top\phi_t=\eta_t$. Thus, $\eta_t$ always lies in the subspace $\mathrm{im}(F_1^\top F_1)$.
    Assume that isosystem (\ref{C-sys}) is initially relaxed. Applying a unit impulse to the $i$-th channel at $t=0$, i.e., $u_t=\delta_t e_i$ for $i=1,\cdots,m$, where $\delta_t$ is the Dirac delta function and $e_i$ is the $i$-th standard basis vector of $\mathbb{R}^m$, the zero-state response for any $t>0$ is $\eta_t=e^{(\mathcal{A}_s+\mathcal{B}_s^yCS)t}\mathcal{B}_s^ue_i\in\mathcal{V}$. Furthermore, since the span $\mathcal{V}$ of $\eta_{t}$ is a linear subspace and is closed under limits, it follows that
	$\lim\limits_{t\rightarrow 0_{+}}e^{(\mathcal{A}_s+\mathcal{B}_s^yCS)t}\mathcal{B}_s^ue_i=\mathcal{B}_s^ue_i\in\mathcal{V}$.
	Thus, $F_1^\top F_1\mathcal{B}_s^u=\mathcal{B}_s^u$. Premultiplying both sides by $S$ yields
	$SF_1^\top F_1\mathcal{B}_s^u=S\mathcal{B}_s^u=B$, which completes the proof of the second equality.
	
	For the first equality, following a derivation similar to that for (\ref{BB}), we obtain
	$S\mathcal{B}_y^u=L$.
	From the definition of $\mathcal{D}$ and the relations satisfied by its block matrices $D_{i}$,  namely $(A-LC)D_0=-a_0I_n$, $(A-LC)D_{i+1}=D_{i}-a_{i+1}D_{n-1}$ for $i=0,\cdots,n-3$, and $D_{n-1}=I_n$, it follows that for $i=1,\cdots,m$,
	\begin{align*}
		S_i^u A_s&=\mathcal{D}(I_n\otimes B_i)A_s=[D_0B_i,\cdots,D_{n-1}B_i]\mathcal{A}_s\\
		&=[-a_{0}D_{n-1},D_0-a_1D_{n-1},\cdots,D_{n-2}-a_{n-1}D_{n-1}](I_n\otimes B_i)\\
		&=(A-LC)[D_0,\cdots,D_{n-1}](I_n\otimes B_i)=(A-LC)S_i^u,
	\end{align*}
	Similarly, for $j=1,\cdots,p$, we have $S_j^y A_s=(A-LC)S_j^y$.
	Note that for $A-LC$, it follows from the relationship between the determinant and the adjugate matrix, the definition of $\mathcal{D}$, and $D_{n-1}=I_n$ that $(A-LC)\mathcal{D}=\mathcal{D}A_s$. Since $A-LC$ and $A_{\epsilon}$ share the same characteristic polynomial, replacing $A-LC$ with $A_{\epsilon}$ and $\mathcal{D}$ with $\mathcal{D}_{\epsilon}$ in the above analysis similarly yields $A_{\epsilon}\mathcal{D}_{\epsilon}=\mathcal{D}_{\epsilon}A_s$.
	Considering the coupling between the coefficient matrix of the adjugate  matrix and the initial state, it holds readily that
	$$(A-LC)\mathcal{D}(I_n\otimes \epsilon_0)=\mathcal{D}(I_n\otimes \epsilon_0)A_s,\quad A_{\epsilon}\mathcal{D}_{\epsilon}(I_n\otimes \eta_0^{\epsilon})=\mathcal{D}_{\epsilon}(I_n\otimes \eta_0^{\epsilon})A_s.$$
	Therefore, from the invertibility of $\mathcal{D}_{\epsilon}(I_n\otimes \eta_0^{\epsilon})$, it follows that
	\begin{align*}
		(A-LC)S^{\epsilon}&=(A-LC)\mathcal{D}(I_n\otimes \epsilon_0)(\mathcal{D}_{\epsilon}(I_n\otimes \eta_0^{\epsilon}))^{-1}\\
		&=\mathcal{D}(I_n\otimes \epsilon_0)A_s(\mathcal{D}_{\epsilon}(I_n\otimes \eta_0^{\epsilon}))^{-1}\\
		&=\mathcal{D}(I_n\otimes \epsilon_0)(\mathcal{D}_{\epsilon}(I_n\otimes \eta_0^{\epsilon}))^{-1}A_{\epsilon}\\
		&=S^{\epsilon}A_{\epsilon}.
	\end{align*}
    Hence,
	\begin{align*}
		S\mathcal{A}_s&=[S^u(I_m\otimes A_s),S^y(I_p\otimes A_s),S^{\epsilon}A_{\epsilon}]\\
		&=[(A-LC)S^u,(A-LC)S^y,(A-LC)S^{\epsilon}]=(A-LC)S.
	\end{align*}
	Synthesizing the above results, we have
	\begin{equation}\label{SA}
		S(\mathcal{A}_s+\mathcal{B}_s^yCS)=AS.
	\end{equation}
	Postmultiplying both sides by $F_1^\top$ yields $S(\mathcal{A}_s+\mathcal{B}_s^yCS)F_1^\top=ASF_1^\top$. The zero-input response of system (\ref{C-sys}) is $\eta_t=e^{(\mathcal{A}_s+\mathcal{B}_s^yCS)t}\eta_0\in\mathcal{V}$. If for any $v\in\mathcal{V}$, we set $\eta_0=v$, then $e^{(\mathcal{A}_s+\mathcal{B}_s^yCS)t}v\in\mathcal{V}$, implying that $\mathcal{V}$ is invariant under the flow $e^{(\mathcal{A}_s+\mathcal{B}_s^yCS)t}$. Furthermore, since $\mathcal{V}$ is a finite-dimensional linear subspace and is closed under differentiation, it follows that $\frac{d (e^{(\mathcal{A}_s+\mathcal{B}_s^yCS)t}v)}{dt}\big\vert_{t=0}=(\mathcal{A}_s+\mathcal{B}_s^yCS)v\in\mathcal{V}$, implying that the span $\mathcal{V}$ of $\eta_{t}$ is an $(\mathcal{A}_s+\mathcal{B}_s^yCS)$-invariant subspace. Furthermore, since $\mathcal{V}=\mathrm{im}(F_1^\top)$, for any $v\in\mathcal{V}$, there exists $w\in\mathbb{R}^{(m+2)n}$ such that $v=F_1^\top w$. By the $(\mathcal{A}_s+\mathcal{B}_s^yCS)$-invariance of $\mathcal{V}$, there exists a matrix	$M\in\mathbb{R}^{(m+2)n\times (m+2)n}$ such that
	$$(\mathcal{A}_s+\mathcal{B}_s^yCS)v=(\mathcal{A}_s+\mathcal{B}_s^yCS)F_1^\top w=F_1^\top(Mw),$$
	By the arbitrariness of $v$, it follows that $(\mathcal{A}_s+\mathcal{B}_s^yCS)F_1^\top =F_1^\top M$. Premultiplying both sides by $F_1^\top F_1$ yields
	$$F_1^\top F_1(\mathcal{A}_s+\mathcal{B}_s^yCS)F_1^\top=F_1^\top F_1 F_1^\top M=F_1^\top M=(\mathcal{A}_s+\mathcal{B}_s^yCS)F_1^\top.$$
	Premultiplying both sides by $S$ yields  $SF_1^\top F_1(\mathcal{A}_s+\mathcal{B}_s^yCS)F_1^\top=S(\mathcal{A}_s+\mathcal{B}_s^yCS)F_1^\top=ASF_1^\top$, which proves the first equality.
	
\end{proof}

The ARE satisfied by the LQR problem consisting of system (\ref{full-c-sys}) and cost functional (\ref{full-J}) is given by
$$F_1(\mathcal{A}_s+\mathcal{B}_s^yCS)^\top F_1^\top \Sigma+\Sigma F_1(\mathcal{A}_s+\mathcal{B}_s^yCS)F_1^\top+F_1S^\top C^\top QCSF_1^\top-\Sigma F_1\mathcal{B}_s^uR^{-1}(F_1\mathcal{B}_s^u)^\top \Sigma=\mathbf{0}.$$
Let $P_x=(F_1S^\top)^\dagger\Sigma(SF_1^\top)^\dagger $. Using (\ref{AM}), the above equation can be rewritten as
$$(SF_1^\top)^\top(A^\top P_x+P_xA+Q_x-P_xBR^{-1}B^\top P_x)SF_1^\top=\mathbf{0},$$
which is equivalent to the state feedback ARE (\ref{ARE}) by the full row rank property of $SF_1^\top$. Similarly, the optimal feedback gain for the LQR problem consisting of system (\ref{full-c-sys}) and cost functional (\ref{full-J}) is given by
$$K=R^{-1}(F_1\mathcal{B}_s^{u})^\top \Sigma=R^{-1}BP_xSF_1^\top=K_xSF_1^\top.$$
The Lyapunov equations corresponding to the two LQR problems satisfy a similar equivalence relation. Therefore, by the full row rank property of
$SF_1^\top$, the feedback controller $u_t=K\phi_t$ is performance equivalent to the state feedback controller $u_t=K_xx_t$; by the equivalence of the AREs, the optimal solution $u_t=K^*\phi_t$ to Problem \ref{P2} must be the optimal solution to Problem \ref{P1}.

\begin{remark}
	The work in \cite{Huang-newLQR} also converts the original LQR problem with unknown dynamics and unmeasurable states into an equivalent LQR problem with known input matrix and accessible states. However, in contrast to \cite{Huang-newLQR}, the new LQR problem constructed in this paper is strictly equivalent to the original problem, rather than being equivalent only in the steady-state sense. This is primarily reflected in the following aspects: by accounting for the replication of the observation error $\epsilon_t$ when constructing the substitute state, the linear injective relationship between the substitute state and the true state holds strictly, and the equivalence of the cost functionals is also strictly valid; moreover, the isosystem (\ref{full-c-sys}) in this paper is strictly equivalent to the original system (\ref{system}) without neglecting the unknown observation error $\epsilon_t$. In addition, a significant improvement is the use of the projection matrix $F_1$ to further eliminate redundant information from the input-output filtering vector $\eta_t$, which not only facilitates the design of efficient model-free algorithms but also reduces computational complexity, as will be demonstrated in Section \ref{sec4}.
\end{remark}

	\section{Efficient Off-Policy Value-Based Iteration Algorithms}\label{sec4}
	To design model-free algorithms for solving Problem \ref{P2}, this section first analyzes the stabilizability and detectability of the LQR problem defined by system (\ref{full-c-sys}) and cost functional (\ref{full-J}). Then, based on model-based results, efficient PI and VI algorithms are developed by utilizing the full row rank data matrices $\Phi_0$, $U_0$, $Y_0:=[y_{t},\cdots,y_{t+(T-1)\Delta t}]$, and $\Phi_1:=[\dot{\phi}_t,\cdots,\dot{\phi}_{t+(T-1)\Delta t}]=F_1[\dot{\eta}_t,\cdots,\dot{\eta}_{t+(T-1)\Delta t}]$ obtained from Lemma \ref{le2} under $2n$-th order PE inputs.
	
	\subsection{Regularity Condition}\label{sec4_1}
	\begin{lemma}\label{le4}
		Under Assumption \ref{as1}, if all eigenvalues of $A-LC$ are chosen to be in the open left-half plane, then the pair $(F_1(\mathcal{A}_s+\mathcal{B}_s^yCS)F_1^\top ,F_1\mathcal{B}_s^u)$ is stabilizable.
	\end{lemma}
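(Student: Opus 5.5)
The plan is to use the Popov--Belevitch--Hautus (PBH) test. I will show that no left eigenvector of $\mathcal{A}_\phi:=F_1(\mathcal{A}_s+\mathcal{B}_s^yCS)F_1^\top$ with an eigenvalue in the closed right-half plane can be orthogonal to $\mathcal{B}_\phi:=F_1\mathcal{B}_s^u$. The idea is to split the dynamics of (\ref{full-c-sys}) into two parts:
\begin{itemize}
\item a quotient part that reproduces $(A,B)$ through the full row rank map $SF_1^\top$;
\item an invariant "kernel" part on which the dynamics reduce to $\mathcal{A}_s$, which is Hurwitz.
\end{itemize}
Throughout, write $M:=\mathcal{A}_s+\mathcal{B}_s^yCS$ and $\Pi:=SF_1^\top$.

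\textbf{Preliminary identities.} Since $F$ is orthogonal, $F_1F_1^\top=I_{(m+2)n}$. From the proof of Lemma \ref{le3}, $\mathcal{V}=\mathrm{im}(F_1^\top)$ is $M$-invariant, so $MF_1^\top=F_1^\top\mathcal{A}_\phi$. Lemma \ref{le3} further gives
$$\Pi\mathcal{A}_\phi=A\Pi,\qquad \Pi\mathcal{B}_\phi=B,$$
and $\Pi$ has full row rank, as shown after Lemma \ref{le2}.

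\textbf{Step 1: the kernel is invariant and stable.} Let $\mathcal{N}:=\ker(\Pi)\subseteq\mathbb{R}^{(m+2)n}$. If $\Pi v=0$, then $\Pi\mathcal{A}_\phi v=A\Pi v=0$, so $\mathcal{N}$ is $\mathcal{A}_\phi$-invariant. For $v\in\mathcal{N}$ we have $CSF_1^\top v=0$, hence
$$F_1^\top\mathcal{A}_\phi v=MF_1^\top v=\mathcal{A}_sF_1^\top v.$$
Thus $F_1^\top$ maps $\mathcal{N}$ injectively onto an $\mathcal{A}_s$-invariant subspace and intertwines $\mathcal{A}_\phi|_{\mathcal{N}}$ with $\mathcal{A}_s$ restricted to that subspace. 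Consequently every eigenvalue of $\mathcal{A}_\phi|_{\mathcal{N}}$ is an eigenvalue of $\mathcal{A}_s=\mathrm{diag}(I_{m+p}\otimes A_s,A_\epsilon)$. The spectrum of $\mathcal{A}_s$ is that of $A-LC$, because $A_s$ is the companion matrix of $\Lambda(s)$ and $A_\epsilon$ shares the eigenvalues of $A-LC$. These eigenvalues lie in the open left-half plane by hypothesis.

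\textbf{Step 2: the PBH argument.} Suppose $w\neq\mathbf{0}$ satisfies $w^\top\mathcal{A}_\phi=\lambda w^\top$ with $\mathrm{Re}\,\lambda\ge0$ and $w^\top\mathcal{B}_\phi=\mathbf{0}$.
\begin{itemize}
\item Restricted to the invariant subspace $\mathcal{N}$, the functional $w^\top$ satisfies $w^\top(\mathcal{A}_\phi|_{\mathcal{N}}-\lambda I)=\mathbf{0}$. Since $\lambda$ is not an eigenvalue of $\mathcal{A}_\phi|_{\mathcal{N}}$ by Step 1, this forces $w^\top|_{\mathcal{N}}=\mathbf{0}$.
\item Hence $w\in\mathcal{N}^\perp=\mathrm{im}(\Pi^\top)$, so $w^\top=z^\top\Pi$ for some $z\neq\mathbf{0}$.
\item Then $z^\top A\Pi=z^\top\Pi\mathcal{A}_\phi=\lambda z^\top\Pi$. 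The full row rank of $\Pi$ gives $z^\top A=\lambda z^\top$.
\item Also $z^\top B=z^\top\Pi\mathcal{B}_\phi=w^\top\mathcal{B}_\phi=\mathbf{0}$.
\end{itemize}
This contradicts the controllability of $(A,B)$ in Assumption \ref{as1}, so the pair $(\mathcal{A}_\phi,\mathcal{B}_\phi)$ is stabilizable.

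\textbf{Main obstacle.} The delicate point is Step 1: correctly identifying the dynamics on $\ker(\Pi)$. The key steps are using $CS F_1^\top v=0$ to drop the output-injection term, and using the $M$-invariance of $\mathrm{im}(F_1^\top)$ to transfer eigenvalues to $\mathcal{A}_s$.

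The step "$w^\top|_{\mathcal{N}}=\mathbf{0}$" is cleanest in a basis adapted to $\mathcal{N}$. In such a basis $\mathcal{A}_\phi$ is block upper triangular:
\begin{itemize}
\item the first diagonal block represents $\mathcal{A}_\phi|_{\mathcal{N}}$;
\item the second diagonal block represents the induced map on the quotient $\mathbb{R}^{(m+2)n}/\mathcal{N}$.
\end{itemize}
Via $\Pi$, the quotient map is similar to $A$. The left eigenvector must then vanish on the first block because $\lambda$ is not an eigenvalue of that block.
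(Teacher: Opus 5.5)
Your proof is correct, but it takes a genuinely different route from the paper.

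\textbf{The paper's route.} The paper argues in two stages. First, it shows that the unprojected pair $(M,\mathcal{B}_s^u)$ is stabilizable by a trajectory argument: under $u_t=-K_xS\eta_t$ one has $\dot x_t=(A-BK_x)x_t$, and $\dot\eta_t=\mathcal{A}_s\eta_t+(\mathcal{B}_s^yC-\mathcal{B}_s^uK_x)x_t$ is a Hurwitz system driven by a decaying signal. Second, it passes to the projected pair in the orthogonal coordinates $F$. Invariance of $\mathrm{im}(F_1^\top)$, together with $\mathrm{im}(\mathcal{B}_s^u)\subseteq\mathrm{im}(F_1^\top)$, makes $F^\top MF$ block upper triangular with an uncontrollable lower block $F_2MF_2^\top$. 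Stabilizability forces that block to be Hurwitz, and the Hautus rank then splits.

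\textbf{Your route.} You apply PBH directly to the projected pair. You decompose along $\ker(SF_1^\top)$ inside the $\phi$-space, rather than along $\mathrm{im}(F_1^\top)\oplus\mathrm{im}(F_2^\top)$ in the $\eta$-space. The intertwinings from Lemma \ref{le3} push any bad left eigenvector down to $(A,B)$. On the kernel the output injection vanishes, so the dynamics are those of $\mathcal{A}_s$. This is essentially the dual of the paper's own detectability argument in Lemma \ref{le5}.

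\textbf{What each approach buys.} Your argument has several advantages:
\begin{enumerate}
\item It is purely algebraic and needs no cascade or exponential-convergence argument.
\item It never touches the complementary block $F_2MF_2^\top$.
\item It handles $\mathrm{Re}\,\lambda\ge 0$ explicitly; the paper's Hautus check is written only for $\mathrm{Re}\,\lambda>0$.
\item It uses only stabilizability of $(A,B)$ at the final step.
\item It identifies every uncontrollable mode of the projected pair as an eigenvalue of $A-LC$.
\end{enumerate}
The paper's route yields, as a by-product, stabilizability of the unprojected $\eta$-system with an explicit stabilizing gain $K_xS$. Its projection step is also a generic restriction-to-an-invariant-subspace argument.

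Both proofs rest on the same invariance of $\mathrm{im}(F_1^\top)$. Your extra ingredient, full row rank of $SF_1^\top$, actually follows from Lemma \ref{le3} alone. Indeed, $A^kB=SF_1^\top\mathcal{A}_\phi^k\mathcal{B}_\phi$ for all $k$, so $\mathrm{im}(SF_1^\top)$ contains the reachable space of the controllable pair $(A,B)$, which is all of $\mathbb{R}^n$.
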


    \begin{proof}
    	We first prove that the pair $(\mathcal{A}_s+\mathcal{B}_s^yCS,\mathcal{B}_s^u)$ is stabilizable before projection. From the controllability assumption of system (\ref{system}), there exists a state feedback control gain $K_x$ such that $A-BK_x$ is Hurwitz stable. Applying $u_t=-K\eta_t=-K_xS\eta_t$ to system (\ref{system}) yields
    	$$\dot{x}_t=Ax_t+Bu_t=Ax_t-BK_xS\eta_t=(A-BK_x)x_t.$$
    	Hence, $x_t$ converges exponentially to $\mathbf{0}$. Considering the isosystem (\ref{C-sys}), applying $u_t=-K\eta_t=-K_xS\eta_t$ to it yields
    	\begin{align*}
    		\dot{\eta}_t&=(\mathcal{A}_s+\mathcal{B}_s^yCS)\eta_t-\mathcal{B}_s^uK\eta_t=(\mathcal{A}_s+\mathcal{B}_s^yCS-\mathcal{B}_s^uK)\eta_t,\\
    		\dot{\eta}_t&=(\mathcal{A}_s+\mathcal{B}_s^yCS)\eta_t-\mathcal{B}_s^uK_xS\eta_t=\mathcal{A}_s\eta_t+(\mathcal{B}_s^yC-\mathcal{B}_s^uK_x)x_t.
    	\end{align*}
    	Since $x_t$ converges exponentially to zero and $A-LC$ is Hurwitz stable when the eigenvalues of $\mathcal{A}_s$ are chosen to be in the open left-half plane, it follows from \cite{Huang-exp} that $\eta_t$ also converges exponentially to $\mathbf{0}$. Thus, the closed-loop system matrix $\mathcal{A}_s+\mathcal{B}_s^yCS-\mathcal{B}_s^uK$ must be Hurwitz, which proves that the pair $(\mathcal{A}_s+\mathcal{B}_s^yCS,\mathcal{B}_s^u)$ is stabilizable. If the orthogonal matrix $F^\top=[F_1^\top,F_2^\top]^\top$ defined in Lemma \ref{le2} is used to perform a similarity transformation on system (\ref{C-sys}), where the state of the transformed system is $\mu_t=F^\top\eta_t$, then by the properties of similarity transformations, the pair $(F^\top(\mathcal{A}_s+\mathcal{B}_s^yCS)F,F^\top\mathcal{B}_s^u)$ is also stabilizable.
    	
    	Building on this, we next use the Hautus stabilizability criterion to prove that the pair $(F_1(\mathcal{A}_s+\mathcal{B}_s^yCS)F_1^\top ,F_1\mathcal{B}_s^u)$ is stabilizable, i.e., for all $\lambda$ with positive real parts, the matrix $[\lambda I_{(m+2)n}-F_1(\mathcal{A}_s+\mathcal{B}_s^yCS)F_1^\top,F_1\mathcal{B}_s^u]$ has full row rank. For brevity, denote $\mathcal{A}:=\mathcal{A}_s+\mathcal{B}_s^yCS$, so we have
    	\begin{equation}\label{FAF}
    		F^\top\mathcal{A}F=\begin{bmatrix}
    			F_1 \mathcal{A} F_1^\top & F_1\mathcal{A} F_2^\top\\ \mathbf{0}&F_2\mathcal{A}F_2^\top
    		\end{bmatrix},\quad F^\top\mathcal{B}_s^u=\begin{bmatrix}
    			F_1\mathcal{B}_s^u\\\mathbf{0}
    		\end{bmatrix},
    	\end{equation}
    	where the zero matrices arise because, the $\mathcal{A}$-invariance of the span $\mathcal{V}=\mathrm{im}(F_1^\top)$ of $\eta_t$ (established in the proof of Lemma \ref{le3}) implies $\mathcal{A}F_1^\top\in\mathcal{V}$, and thus $F_2\mathcal{A}F_1^\top=\mathbf{0}$ by the orthogonality of $F_2$ and $F_1^\top$; furthermore, since every column of $\mathcal{B}_s^u$ belongs to $\mathcal{V}=\mathrm{im}(F_1^\top)$ (also from the proof of Lemma \ref{le3}), it follows that $F_2\mathcal{B}_s^u=\mathbf{0}$. Therefore, we have
    	$$[\lambda I_{(m+p+1)n}-F^\top\mathcal{A}F,F^\top\mathcal{B}_s^u]=\begin{bmatrix}
    		\lambda I_{(m+2)n}- F_1 \mathcal{A} F_1^\top & -F_1\mathcal{A} F_2^\top& F_1\mathcal{B}_s^u\\ \mathbf{0}&\lambda I_{(p-1)n}- F_2\mathcal{A}F_2^\top&\mathbf{0}
    	\end{bmatrix},$$
    	Then,
    	$$\mathrm{rank}([\lambda I_{(m+p+1)n}-F^\top\mathcal{A}F,F^\top\mathcal{B}_s^u])=\mathrm{rank}([	\lambda I_{(m+2)n}- F_1 \mathcal{A} F_1^\top, F_1\mathcal{B}_s^u])+\mathrm{rank}(\lambda I_{(p-1)n}- F_2\mathcal{A}F_2^\top).$$
    	For the system after the similarity transformation, let $\mu_t=F^\top\eta_t=[(\mu_t^1)^\top,(\mu_t^2)^\top]^\top$ with $\mu_t^1\in\mathbb{R}^{(m+2)n}$ and $\mu_t^2\in\mathbb{R}^{(p-1)n}$. Then $\dot{\mu}_t^2=F_2\mathcal{A}F_2^\top\mu_t^2$ forms an autonomous subsystem. Since the pair $(F^\top(\mathcal{A}_s+\mathcal{B}_s^yCS)F,F^\top\mathcal{B}_s^u)$ is stabilizable, the subsystem $\dot{\mu}_t^2=F_2\mathcal{A}F_2^\top\mu_t^2$ cannot contain unstable modes, implying that $F_2\mathcal{A}F_2^\top$ is Hurwitz, i.e., $\mathrm{rank}(\lambda I_{(p-1)n}- F_2\mathcal{A}F_2^\top)=(p-1)n$ for all $\lambda$ with positive real parts. Then it holds that
    	$\mathrm{rank}([	\lambda I_{(m+2)n}- F_1 \mathcal{A} F_1^\top, F_1\mathcal{B}_s^u])=(m+2)n$ for all $\lambda$ with positive real parts, that is, $(F_1(\mathcal{A}_s+\mathcal{B}_s^yCS)F_1^\top ,F_1\mathcal{B}_s^u)$ is stabilizable.
    \end{proof}
	
	\begin{lemma}\label{le5}
		Under Assumption \ref{as1}, if all eigenvalues of $A-LC$ are chosen to be in the open left-half plane, then the pair $(F_1(\mathcal{A}_s+\mathcal{B}_s^yCS)F_1^\top,Q^{\frac{1}{2}}CSF_1^\top)$ is detectable.
	\end{lemma}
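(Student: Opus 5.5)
We will use the PBH (Hautus) test for detectability. Write $\mathcal{A}:=\mathcal{A}_s+\mathcal{B}_s^yCS$ and $\bar{A}:=F_1\mathcal{A}F_1^\top$. The goal is to show that no $w\neq\mathbf{0}$ and $\lambda$ with $\mathrm{Re}(\lambda)\geq 0$ can satisfy both $\bar{A}w=\lambda w$ and $Q^{\frac{1}{2}}CSF_1^\top w=\mathbf{0}$. The argument maps such a $w$ to an eigenvector of $A$ via $SF_1^\top$, then splits into two cases according to whether the image vanishes.

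\textbf{Step 1: transfer the eigenvector to $A$.} Suppose $\bar{A}w=\lambda w$ with $w\neq\mathbf{0}$, and set $v:=SF_1^\top w\in\mathbb{C}^n$. The first identity in (\ref{AM}) of Lemma \ref{le3} states $SF_1^\top\bar{A}=ASF_1^\top$. Applying it to $w$ gives $Av=SF_1^\top\bar{A}w=\lambda v$. The output condition reads $Q^{\frac{1}{2}}Cv=\mathbf{0}$, which is equivalent to $Q_x^{\frac{1}{2}}v=\mathbf{0}$, since $v^{*}Q_xv=\|Q^{\frac{1}{2}}Cv\|^2$.

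\textbf{Step 2: case $v\neq\mathbf{0}$.} Then $v$ is an eigenvector of $A$ lying in the kernel of $Q_x^{\frac{1}{2}}$. This contradicts the observability of $(A,Q_x^{\frac{1}{2}})$ in Assumption \ref{as1}, whatever the value of $\lambda$.

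\textbf{Step 3: case $v=\mathbf{0}$ (the main obstacle).} Here $w$ is an eigenvector of $\bar{A}$ lying in $\ker(SF_1^\top)$, and we must show $\mathrm{Re}(\lambda)<0$. The plan is to prove that $\ker(SF_1^\top)$ is $\bar{A}$-invariant and that $\bar{A}$ restricted to it is Hurwitz.

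For invariance, take $w\in\ker(SF_1^\top)$. By $SF_1^\top\bar{A}w=ASF_1^\top w=\mathbf{0}$, the vector $\bar{A}w$ is again in the kernel.

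For the spectrum, use $\mathcal{A}F_1^\top=F_1^\top\bar{A}$, which follows from the $\mathcal{A}$-invariance of $\mathrm{im}(F_1^\top)$ shown in the proof of Lemma \ref{le3}. Together with $S\mathcal{A}_s=(A-LC)S$ (also from that proof), this gives, for $w\in\ker(SF_1^\top)$,
$$F_1^\top\bar{A}w=\mathcal{A}F_1^\top w=\mathcal{A}_sF_1^\top w+\mathcal{B}_s^yCSF_1^\top w=\mathcal{A}_sF_1^\top w.$$
Hence, if $\bar{A}w=\lambda w$ with $w\neq\mathbf{0}$, then $\mathcal{A}_s(F_1^\top w)=\lambda F_1^\top w$. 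Since $F_1$ has orthonormal rows, $F_1^\top w\neq\mathbf{0}$, so $\lambda$ is an eigenvalue of $\mathcal{A}_s=\mathrm{diag}(I_{m+p}\otimes A_s,A_\epsilon)$.

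The eigenvalues of $A_s$ (a companion matrix of $\Lambda(s)$) and of $A_\epsilon$ coincide with those of $A-LC$, which are chosen in the open left-half plane. Therefore $\mathrm{Re}(\lambda)<0$.

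\textbf{Conclusion.} Combining the two cases, every unobservable mode of $(\bar{A},Q^{\frac{1}{2}}CSF_1^\top)$ is stable, so the pair is detectable by the Hautus criterion.

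The delicate point is Step 3: the substitute state carries extra filter dynamics that are invisible in $x_t$, and one must check they are exactly the stable observer/filter modes. The identity $\mathcal{A}F_1^\top=F_1^\top\bar{A}$ is what makes this work, and a complex eigenvector $w$ is handled identically since all maps are real-linear.
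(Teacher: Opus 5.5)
Your proof is correct and rests on the same ingredients as the paper's: the intertwining relation $S(\mathcal{A}_s+\mathcal{B}_s^yCS)=AS$ (used in its projected form (\ref{AM})), observability of $(A,Q^{\frac{1}{2}}C)$, the Hurwitz property of $\mathcal{A}_s$, and the invariance of $\mathrm{im}(F_1^\top)$ from the proof of Lemma \ref{le3}. The paper splits the argument into two steps. It first runs the Hautus test on the unprojected pair $(\mathcal{A}_s+\mathcal{B}_s^yCS,Q^{\frac{1}{2}}CS)$, and then transfers the result to the projected pair through the block-triangular form (\ref{FAF}), using the lifted vector $[w^\top,\mathbf{0}]^\top$, which is exactly your $F_1^\top w$ written in the original coordinates. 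Your version is therefore a streamlined merger of the same argument, and it also correctly tests $\mathrm{Re}(\lambda)\geq 0$ rather than only $\mathrm{Re}(\lambda)>0$.
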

	
	 \begin{proof}
		First, under Assumption \ref{as1}, we use the Hautus detectability criterion prove that the pair $(\mathcal{A}_s+\mathcal{B}_s^yCS,Q^{\frac{1}{2}}CS)$ is detectable before projection. Using proof by contradiction, suppose that for some $\lambda$ with positive real part, there exists a non-zero vector $v\in\mathbb{R}^{(m+p+1)n}$ such that
		$$\begin{bmatrix}
			\lambda I_{(m+p+1)n}-(\mathcal{A}_s+\mathcal{B}_s^yCS)\\Q^{\frac{1}{2}}CS
		\end{bmatrix}v=\mathbf{0}.$$
		From equation (\ref{SA}) in the proof of Lemma \ref{le3}, we have $S(\mathcal{A}_s+\mathcal{B}_s^yCS)=AS$. Post-multiplying both sides of this equation by the vector $v$ yields $S(\mathcal{A}_s+\mathcal{B}_s^yCS)v=ASv$. According to the assumption, $S(\mathcal{A}_s+\mathcal{B}_s^yCS)v=A(Sv)=\lambda (Sv)$ which implies that $\lambda$ is an eigenvalue of matrix $A$ associated with the eigenvector $Sv$. Furthermore, since $Q^{\frac{1}{2}}C(Sv)=\mathbf{0}$, the observability of $(A,Q^{\frac{1}{2}}C)$ in Assumption \ref{as1} implies $Sv=\mathbf{0}$. Consequently, $(\mathcal{A}_s+\mathcal{B}_s^yCS)v=\mathcal{A}_sv=\lambda v$, meaning that $\lambda$ is an eigenvalue of  $\mathcal{A}_s$ associated with the eigenvector $v$. However, since all eigenvalues of $A-LC$ are chosen to be in the open left-half plane, $\mathcal{A}_s$ is Hurwitz stable, and thus all its eigenvalues must have negative real parts, leading to a contradiction. Therefore, the pair $(\mathcal{A}_s+\mathcal{B}_s^yCS,Q^{\frac{1}{2}}CS)$ is detectable. It follows that the pair $(F^\top(\mathcal{A}_s+\mathcal{B}_s^yCS)F,Q^{\frac{1}{2}}CSF)$ is also detectable under the similarity transformation $\mu_t=F^\top\eta_t$.

		Building on this, we next use the Hautus detectability criterion and proof by contradiction to show that the pair $(F_1(\mathcal{A}_s+\mathcal{B}_s^yCS)F_1^\top,Q^{\frac{1}{2}}CSF_1^\top)$ is detectable. Suppose that for some $\lambda$ with positive real part, there exists a non-zero vector $w\in\mathbb{R}^{(m+2)n}$ such that
		$$\begin{bmatrix}
			\lambda I_{(m+2)n}-F_1(\mathcal{A}_s+\mathcal{B}_s^yCS)F_1^\top\\Q^{\frac{1}{2}}CSF_1^\top
		\end{bmatrix}w=\mathbf{0}.$$
		 For brevity, denote $\mathcal{A}:=\mathcal{A}_s+\mathcal{B}_s^yCS$. Since $Q^{\frac{1}{2}}CSF=[Q^{\frac{1}{2}}CSF_1^\top,Q^{\frac{1}{2}}CSF_2^\top]$, combining this with equation (\ref{FAF}) yields
		$$\begin{bmatrix}
			\lambda I_{(m+p+1)n}-F^\top\mathcal{A}F\\Q^{\frac{1}{2}}CSF
		\end{bmatrix}=\begin{bmatrix}
			\lambda I_{(m+2)n}-F_1\mathcal{A}F_1^\top& -F_1\mathcal{A}F_2^\top\\
			\mathbf{0}&\lambda I_{(p-1)n}-F_2\mathcal{A}F_2^\top\\
			Q^{\frac{1}{2}}CSF_1^\top&Q^{\frac{1}{2}}CSF_2^\top
		\end{bmatrix}.$$
		Thus,
		$$\begin{bmatrix}
			\lambda I_{(m+p+1)n}-F^\top\mathcal{A}F\\Q^{\frac{1}{2}}CSF
		\end{bmatrix}\begin{bmatrix}
			w\\\mathbf{0}_{(p-1)n}
		\end{bmatrix}=\begin{bmatrix}
			\lambda I_{(m+2)n}-F_1\mathcal{A}F_1^\top\\
			\mathbf{0}\\
			Q^{\frac{1}{2}}CSF_1^\top
		\end{bmatrix}w=\mathbf{0}.$$
	    This contradicts the detectability of the pair $(F^\top(\mathcal{A}_s+\mathcal{B}_s^yCS)F,Q^{\frac{1}{2}}CSF)$, which proves that the pair $(F_1(\mathcal{A}_s+\mathcal{B}_s^yCS)F_1^\top,Q^{\frac{1}{2}}CSF_1^\top)$ is detectable.
	\end{proof}

    \begin{remark}
    	In analyzing the stabilizability and detectability of the unprojected LQR problem, our approach differs from that in \cite{Huang-newLQR} in two key aspects. First, we do not need to consider the influence of the observation error $\epsilon_t$ when establishing stabilizability. Second, regarding detectability, while \cite{Huang-newLQR} requires $Q\succ0$ and relies on constructing a specific observation gain, we provide a rigorous proof via the Hautus detectability criterion; by leveraging the observability of the original system and the matrix relationship between the isosystem and the original system, our method only requires $Q\succeq0$.
    \end{remark}

    Consequently, based on the results in the appendix of \cite{Huang-newLQR}, the stabilizability and detectability of the transformed new LQR problem guarantee the existence of convergent model-based iterative algorithms. This provides the theoretical foundation for designing model-free algorithms to solve Problem \ref{P2}.

   \subsection{Policy Iteration Algorithm}\label{sec4_2}
   Considering the equivalent transformed LQR problem, define $\mathcal{A}_F:=F_1(\mathcal{A}_s+\mathcal{B}_s^yCS)F_1^\top$ and $\mathcal{B}_F:=F_1\mathcal{B}_s^u$. Then, under the feedback control $u_t=-K\phi_t$, the Lyapunov equation satisfied by this LQR problem is
   \begin{equation}\label{full-lyap}
   	(\mathcal{A}_F-\mathcal{B}_FK)^\top \Sigma+\Sigma (\mathcal{A}_F-\mathcal{B}_FK)+Q_{\phi}+K^\top RK=\mathbf{0},
   \end{equation}
   which admits a unique optimal solution $\Sigma\in\mathbb{S}^{(m+2)n}_{+}$ under Lemmas \ref{le4} and \ref{le5}. Furthermore, considering the cost functional (\ref{full-J}), the optimal feedback control gain is
   \begin{equation}\label{full-K}
   	K=R^{-1}\mathcal{B}_F^{\top}\Sigma.
   \end{equation}
   Considering the setup of Problem \ref{P2}, since $C$ and $S$ are unknown matrices, $\mathcal{A}_F$ and portions of the structure of $Q_{\phi}$ are unknown. Nevertheless, since $\mathcal{B}_F$ is known, we can left-multiply both sides of equation (\ref{full-lyap}) by $\Phi_0^\top$ and right-multiply by $\Phi_0$, then add and subtract the term $U_0^\top \mathcal{B}_F^\top \Sigma\Phi_0+\Phi_0^\top \Sigma\mathcal{B}_FU_0$. Rearranging yields
   \begin{align*}
   	&(\mathcal{A}_F\Phi_0+\mathcal{B}_FU_0-\mathcal{B}_F(K\Phi_0+U_0))^\top\Sigma\Phi_0
   	+\Phi_0^\top\Sigma (\mathcal{A}_F\Phi_0+\mathcal{B}_FU_0-\mathcal{B}_F(K\Phi_0+U_0))\\
   	&+\Phi^\top Q_{\phi}\Phi_0+\Phi_0^\top K^\top RK\Phi_0=\mathbf{0}.
   \end{align*}
   Noting that $\Phi_1=\mathcal{A}_F\Phi_0+\mathcal{B}_FU_0$ and $\Phi_0^\top Q_{\phi}\Phi_0=\Phi_0^\top F_1S^\top C^\top QCSF_1^\top\Phi_0=X_0^\top C^\top QCX_0=Y_0^\top Q Y_0 $, the above equation simplifies to
   \begin{equation}\label{PI-eval}
   	(\Phi_1-\mathcal{B}_F(K\Phi_0+U_0))^\top\Sigma\Phi_0
   	+\Phi_0^\top\Sigma (\Phi_1-\mathcal{B}_F(K\Phi_0+U_0))
   	+Y_0^\top QY_0+\Phi_0^\top K^\top RK\Phi_0=\mathbf{0}.
   \end{equation}
   If the current feedback control gain $K$ is known, the only unknown in equation (\ref{PI-eval}) is $\Sigma$. Thus, equation (\ref{PI-eval}) can serve as the policy evaluation step in model-free PI. Furthermore, since $\mathcal{B}_F$ is known, equation (\ref{full-K}) can still be used as the policy improvement step. The complete algorithm is presented in Algorithm \ref{al1}. Since equation (\ref{PI-eval}) is a generalized Sylvester equation (if $T=(m+2)n$ is chosen, it reduces to a Lyapunov equation due to the full row rank property of $\Phi_0$), it can be solved directly, eliminating the need to convert the iterative equation into an LS problem for numerical solution or to consider the solvability of such an LS problem. Since Algorithm \ref{al1} only requires collecting data once prior to iteration and uses the same data for every iteration, the algorithm is off-policy.
   
   \begin{algorithm}
   	\caption{Model-Free Policy Iteration (PI)}
   	\begin{algorithmic}
   		\State \textbf{Data Collection:} Apply a $2n$-th order PE input to system (\ref{system}), and collect the corresponding input $u_t$ and output $y_t$ trajectories to form $U_0$ and $Y_0$. Concatenate $u_t$, $y_t$, and a vector of zeros into an extended vector, and use it as the input for system (\ref{eta-sys}) to collect the corresponding state $\eta_t$ and state derivative $\dot{\eta}_t$, forming $E_0$ and $E_1$. Calculate the projection matrix $F_1$, and construct the full row rank data matrices $\Phi_0=F_1E_0$, $\Phi_1=F_1E_1$, and $\mathcal{B}_F=F_1\mathcal{B}_s$.
   		\State \textbf{Initialization:} Select a precision constant $\epsilon$. Set $i=0$ and choose an initial stabilizing control gain $K^0$ such that $\mathcal{A}_F-\mathcal{B}_FK^0$ is Hurwitz stable.
   		\For{$i=0,1,2,\cdots,$}
   		\State (i) Policy evaluation: calculate $\Sigma^i$ by solving
   		\begin{equation}\label{PIi}
   			\begin{split}
   				&(\Phi_1-\mathcal{B}_F(K^i\Phi_0+U_0))^\top\Sigma^i\Phi_0
   				+\Phi_0^\top\Sigma^i (\Phi_1-\mathcal{B}_F(K^i\Phi_0+U_0))\\
   				&+Y_0^\top QY_0+\Phi_0^\top (K^i)^\top RK^i\Phi_0=\mathbf{0}.
   			\end{split}
   		\end{equation}
   		\State (ii) Policy improvement: update the control gain as
   		$K^{i+1}=R^{-1}\mathcal{B}_F^\top \Sigma^i$.
   		\If{$\Vert K^{i+1}-K^{i}\Vert\leq\epsilon$}
   		\State~~~~\Return $K_{PI}^*=K^{i+1}$.
   		\EndIf
   		\EndFor
   	\end{algorithmic}
   	\label{al1}
   \end{algorithm}

    \begin{remark}[Initial Stabilizing Control Gain]
    Since the states of system (\ref{full-c-sys}) are available, we can utilize continuous-time state feedback stabilization results to provide an initial stabilizing gain for Algorithm \ref{al1}. Two primary methods are presented below. \textup{(a)} Dead-beat method \cite{CT-DD-PI}. Let $\mathcal{G}^1$ and $\mathcal{G}^2$ be the right inverse and a basis matrix for the null space of $\Phi_0$, respectively. Design $K^{db}$ such that $\Phi_1(\mathcal{G}^1-\mathcal{G}^2K^{db})$ is Hurwitz stable. Then $K^0=-U_0(\mathcal{G}^1-\mathcal{G}^2K^{db})$ serves as an initial stabilizing control gain for Algorithm \ref{al1}. 
   	\textup{(b)} Linear matrix inequalities method \cite{DD-LMIs}. If there exists a matrix $\Gamma\in\mathbb{R}^{T\times(m+2)n}$ such that $\Phi_0 \Gamma\succ0$ and $\Phi_1\Gamma+\Gamma\Phi_1^\top\prec 0$, then $K^0=-U_0\Gamma(\Phi_0\Gamma)^{-1}$ serves as an initial stabilizing control gain for Algorithm \ref{al1}.
   \end{remark}

   \begin{proposition}
   	Under Assumption \ref{as1}, select the eigenvalues of $A-LC$ to lie in the open left-half plane. If the initial control gain $K^0$ is stabilizing, the sequence of control gains $\{K^i\}$ generated by Algorithm \ref{al1} is stabilizing, and the sequence of cost parameter matrices $\{\Sigma^i\}$ is monotonically non-increasing. Ultimately, both $\{\Sigma^i\}$ and $\{K^i\}$ converge to the optimal values $\Sigma^*$ and $K^*$, respectively.
   \end{proposition}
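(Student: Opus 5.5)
The plan is to show that Algorithm \ref{al1} coincides step by step with the model-based Kleinman iteration applied to the transformed LQR problem (\ref{full-c-sys})--(\ref{full-J}). The stability and convergence properties then follow from the classical theory, which is available here because of Lemmas \ref{le4} and \ref{le5}.

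\emph{Equivalence of the evaluation step.} Suppose $K^i$ is stabilizing, i.e., $\mathcal{A}_F-\mathcal{B}_FK^i$ is Hurwitz. First substitute $\Phi_1=\mathcal{A}_F\Phi_0+\mathcal{B}_FU_0$ and $Y_0^\top QY_0=\Phi_0^\top Q_\phi\Phi_0$ into (\ref{PIi}). Reversing the add-and-subtract manipulation used to derive (\ref{PI-eval}), the equation becomes
\[
\Phi_0^\top\big[(\mathcal{A}_F-\mathcal{B}_FK^i)^\top\Sigma^i+\Sigma^i(\mathcal{A}_F-\mathcal{B}_FK^i)+Q_\phi+(K^i)^\top RK^i\big]\Phi_0=\mathbf{0}.
\]
By Lemma \ref{le2}, $\Phi_0$ has full row rank. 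Pre-multiplying by $(\Phi_0^\dagger)^\top$ and post-multiplying by the right inverse $\Phi_0^\dagger$ therefore shows that (\ref{PIi}) is equivalent to the Lyapunov equation (\ref{full-lyap}) with $K=K^i$. That equation has a unique solution because $\mathcal{A}_F-\mathcal{B}_FK^i$ is Hurwitz, so the data-based evaluation step has the same unique solution $\Sigma^i$ as the model-based one. I will need to state the equivalence in both directions: any solution of (\ref{PIi}) solves (\ref{full-lyap}) and conversely. Uniqueness of the solution of the generalized Sylvester equation then follows. The improvement step $K^{i+1}=R^{-1}\mathcal{B}_F^\top\Sigma^i$ is literally the model-based one, since $\mathcal{B}_F$ is known.

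\emph{Inheriting Kleinman's properties.} Once the equivalence holds, $\{K^i,\Sigma^i\}$ is exactly the sequence produced by Kleinman's algorithm \cite{model-based-PI} for the pair $(\mathcal{A}_F,\mathcal{B}_F)$ with weights $Q_\phi=(Q^{\frac12}CSF_1^\top)^\top(Q^{\frac12}CSF_1^\top)$ and $R$. I will then invoke the standard argument under stabilizability and detectability rather than controllability and observability, as in the appendix of \cite{Huang-newLQR}; Lemmas \ref{le4} and \ref{le5} supply exactly these hypotheses. The key points of that argument are as follows. Writing the Lyapunov equation for $\Sigma^i$ in terms of $\mathcal{A}_F-\mathcal{B}_FK^{i+1}$ and completing the square gives
\[
(\mathcal{A}_F-\mathcal{B}_FK^{i+1})^\top\Sigma^i+\Sigma^i(\mathcal{A}_F-\mathcal{B}_FK^{i+1})+Q_\phi+(K^{i+1})^\top RK^{i+1}+(K^{i+1}-K^i)^\top R(K^{i+1}-K^i)=\mathbf{0}.
\]
Stability of $\mathcal{A}_F-\mathcal{B}_FK^{i+1}$ then comes from a Lyapunov argument that uses detectability of $(\mathcal{A}_F,Q^{\frac12}CSF_1^\top)$ to rule out marginal or unstable modes. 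Subtracting the equation for $\Sigma^{i+1}$ from the display yields $\Sigma^i-\Sigma^{i+1}\succeq0$. The sequence is bounded below by $\mathbf{0}$, so $\Sigma^i\to\Sigma^\infty$, and passing to the limit shows that $\Sigma^\infty$ solves the ARE of Problem \ref{P2}. The stabilizing solution of that ARE is unique under stabilizability and detectability, which gives $\Sigma^\infty=\Sigma^*$ and $K^i\to K^*$. By the equivalence established after Lemma \ref{le3}, $K^*=K_x^*SF_1^\top$.

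\emph{Main obstacle.} The delicate step is the stability of $K^{i+1}$ when only detectability holds and $Q_\phi$ is merely positive semidefinite. The standard argument goes as follows. Take an eigenpair $(\lambda,v)$ of $\mathcal{A}_F-\mathcal{B}_FK^{i+1}$ with $\mathrm{Re}\lambda\ge0$. Then the display gives $2\mathrm{Re}\lambda\, v^*\Sigma^iv+\Vert Q^{\frac12}CSF_1^\top v\Vert^2+\Vert R^{\frac12}K^{i+1}v\Vert^2+\Vert R^{\frac12}(K^{i+1}-K^i)v\Vert^2=0$, where $\Sigma^i\succeq0$, so every term vanishes. This forces $K^{i+1}v=K^iv=\mathbf{0}$ and $Q^{\frac12}CSF_1^\top v=\mathbf{0}$. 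Hence $\mathcal{A}_Fv=\lambda v$ with $v$ unobservable, and detectability then gives a contradiction. This argument requires some care and is where I expect most of the work.
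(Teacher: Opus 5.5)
Your proposal is correct and follows the same route as the paper: full row rank of $\Phi_0$ makes (\ref{PIi}) equivalent to the model-based Lyapunov equation (\ref{full-lyap}), so Algorithm \ref{al1} is Kleinman's iteration for the transformed problem, and its stability and convergence follow from the stabilizability and detectability in Lemmas \ref{le4} and \ref{le5}, which the paper simply cites from \cite{Huang-newLQR}. The only difference is that you sketch that cited argument yourself, and the sketch is sound; when identifying the limit, note that $\Sigma^\infty\succeq 0$, so uniqueness of the positive semidefinite ARE solution (or your eigenvector argument applied to the limiting equation) gives $\Sigma^\infty=\Sigma^*$.
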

   \begin{proof}
   	By the full row rank property of $\Phi_0$ and the preceding analysis, iterative equation (\ref{PIi}) is equivalent to iterating on the model-based Lyapunov equation (\ref{full-lyap}). Thus, Algorithm \ref{al1} corresponds to Kleinman's algorithm \cite{model-based-PI} for the equivalent new LQR problem. Since \cite{Huang-newLQR} establishes that model-based PI remains stable and convergent under only stabilizability and detectability, Lemmas \ref{le4} and \ref{le5} imply that Algorithm \ref{al1} generates a sequence of stabilizing control gains and monotonically non-increasing cost parameter matrices, both converging to their respective optimal solutions.
   \end{proof}

   \subsection{Value Iteration Algorithm}\label{sec4_3}
   The VI method dispenses with the need for an initial stabilizing gain. Consider the following ARE equation associated with the consistent system (\ref{full-c-sys}) and cost functional (\ref{full-J}),
   $$\mathcal{A}_F^\top \Sigma +\Sigma \mathcal{A}_F+Q_{\phi}-\Sigma \mathcal{B}_F R^{-1}\mathcal{B}_F\Sigma=\mathbf{0}.$$
   If the current $\Sigma$ is known, $\Sigma \mathcal{B}_F R^{-1}\mathcal{B}_F\Sigma$ is computable since $\mathcal{B}_F$ is available, leaving only the first three terms to be addressed. Treating $\Omega:=\mathcal{A}_F^\top \Sigma +\Sigma \mathcal{A}_F+Q_{\phi}$ as a single unknown entity, we follow the same approach as in PI: premultiplying $\Omega$ by $\Phi_0^\top$ and postmultiplying by $\Phi_0$, then substituting the expression for $Q_{\phi}$ yields
   $$\Phi_0^\top\Omega\Phi_0=\Phi_0^\top \mathcal{A}_F^\top \Sigma\Phi_0 +\Phi_0^\top\Sigma \mathcal{A}_F\Phi_0+Y_0^\top QY_0.$$
   Adding the known terms $U_0^\top\mathcal{B}_F^\top \Sigma \Phi_0+\Phi_0^\top \Sigma\mathcal{B}_F U_0$ to both sides yields
   $$\Phi_0^\top\Omega\Phi_0+U_0^\top\mathcal{B}_F^\top \Sigma \Phi_0+\Phi_0^\top \Sigma\mathcal{B}_F U_0=(\mathcal{A}_F\Phi_0+\mathcal{B}_FU_0)^\top \Sigma\Phi_0 +\Phi_0^\top\Sigma (\mathcal{A}_F\Phi_0+\mathcal{B}_FU_0)+Y_0^\top QY_0.$$
   Substituting $\Phi_1=\mathcal{A}_F\Phi_0+\mathcal{B}_FU_0$ into the above equation yields
   \begin{equation}
   	\Phi_0^\top\Omega\Phi_0=(\Phi_1-\mathcal{B}_F U_0)^\top \Sigma\Phi_0 +\Phi_0^\top\Sigma (\Phi_1-\mathcal{B}_F U_0)+Y_0^\top QY_0,
   \end{equation}
   in which $\Omega$ is the sole unknown when $\Sigma$ is known. By the full row rank property of $\Phi_0$, this equation is directly solvable. Thus, $\Omega-\Sigma \mathcal{B}_F R^{-1}\mathcal{B}_F\Sigma$ serves as the increment in VI. Upon convergence of $\Sigma$, the approximate optimal feedback gain is computed via (\ref{full-K}). The complete off-policy algorithm is presented in Algorithm \ref{al2}. Consistent with classical continuous-time VI, to prevent excessive growth of the cost parameter matrix $\Sigma$, we define a sequence of constraint sets $\{\mathcal{G}_j\}_{j=0}^{\infty}$ for the increment matrix and a step-size sequence $\{\delta_i\}_{i=0}^{\infty}$ satisfying the following condition.
   \begin{equation}\label{Set-Step}
   	\mathcal{G}_{j}\subset\mathcal{G}_{j+1},\quad \lim\limits_{j\rightarrow\infty}\mathcal{G}_{j}=\mathbb{S}_{+}^{(m+2)n},\quad \sum_{i=0}^{\infty}\delta_i=\infty,\quad \sum_{i=0}^{\infty}\delta_i^2<\infty.
   \end{equation}
   
   \begin{algorithm}
   	\caption{Model-Free Value Iteration (VI)}
   	\begin{algorithmic}
   		\State \textbf{Data Collection:} Apply a $2n$-th order PE input to system (\ref{system}), and collect the corresponding input $u_t$ and output $y_t$ trajectories to form $U_0$ and $Y_0$. Concatenate $u_t$, $y_t$, and a vector of zeros into an extended vector, and use it as the input for system (\ref{eta-sys}) to collect the corresponding state $\eta_t$ and state derivative $\dot{\eta}_t$, forming $E_0$ and $E_1$. Calculate the projection matrix $F_1$, and construct the full row rank data matrices $\Phi_0=F_1E_0$, $\Phi_1=F_1E_1$, and $\mathcal{B}_F=F_1\mathcal{B}_s$.
   		\State \textbf{Initialization:} Select a precision constant $\epsilon$. Set $i=0$, $j=0$, and arbitrarily select an initial cost parameter matrix $\Sigma^0\in\mathbb{S}^{(m+2)n}_{+}$. Choose the sequence of sets $\{\mathcal{G}_j\}_{j=0}^{\infty}$ and step-size sequence $\{\delta_i\}_{i=0}^{\infty}$ that satisfy equation (\ref{Set-Step}).
   		\For{$i=0,1,2,\cdots,$}
   		\State (i) Calculate $\Omega^i$ by
   		\begin{equation}\label{VIi}
   			\Phi_0^\top\Omega^i\Phi_0=(\Phi_1-\mathcal{B}_F U_0)^\top \Sigma^i\Phi_0 +\Phi_0^\top\Sigma^i (\Phi_1-\mathcal{B}_F U_0)+Y_0^\top QY_0.
   		\end{equation}
   		\State (ii) Update the cost parameter matrix $\Sigma^{i+1}$ by $\Sigma^{i+1}=\Sigma^{i}+\delta_i(\Omega^i-\Sigma^i \mathcal{B}_F R^{-1}\mathcal{B}_F^\top\Sigma^i)$.
   		\If{$\Sigma^{i+1}\notin\mathcal{F}_{j}$}
   		\State $\Sigma^{i+1}=\Sigma^0$, $j=j+1$.
   		\EndIf
   		\If{$\Vert \Sigma^{i+1}-\Sigma^{i}\Vert/\delta_i\leq\epsilon$}
   		\State \Return $K_{VI}^*=R^{-1}\mathcal{B}_F^\top\Sigma^{i+1}$.
   		\EndIf
   		\EndFor
   	\end{algorithmic}
   	\label{al2}
   \end{algorithm}
   
   \begin{proposition}
   	Under Assumption \ref{as1}, select the eigenvalues of $A-LC$ to lie in the open left-half plane. If the initial cost parameter matrix $\Sigma^0$ is chosen to be positive semi-definite, the sequence $\{\Sigma^i\}$ generated by Algorithm \ref{al2} converges to the optimal solution $\Sigma^*$.
   \end{proposition}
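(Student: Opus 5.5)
The plan is to reduce Algorithm \ref{al2} to the classical model-based continuous-time VI of \cite{Jiang-2016}, applied to the equivalent LQR problem defined by system (\ref{full-c-sys}) and cost functional (\ref{full-J}). Convergence will then follow from existing results, using the regularity properties in Lemmas \ref{le4} and \ref{le5}. This mirrors the proof of the PI proposition. The argument has three steps: show that the data-driven update reproduces the model-based update exactly, verify the hypotheses of the model-based convergence theorem, and invoke that theorem.

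\textbf{Step 1: the data update equals the model-based update.} Fix $\Sigma^i\in\mathbb{S}^{(m+2)n}$ and let $\Omega^i_{\mathrm{mb}}:=\mathcal{A}_F^\top\Sigma^i+\Sigma^i\mathcal{A}_F+Q_\phi$.
\begin{itemize}
\item \emph{$\Omega^i_{\mathrm{mb}}$ solves (\ref{VIi}).} Use $\Phi_1=\mathcal{A}_F\Phi_0+\mathcal{B}_FU_0$ and $\Phi_0^\top Q_\phi\Phi_0=Y_0^\top QY_0$ (both derived in Section \ref{sec4_2}). The right-hand side of (\ref{VIi}) then equals $\Phi_0^\top\Omega^i_{\mathrm{mb}}\Phi_0$.
\item \emph{The solution of (\ref{VIi}) is unique.} By Lemma \ref{le2}(b), $\Phi_0$ has full row rank, so it has a right inverse $\Phi_0^\dagger$ with $\Phi_0\Phi_0^\dagger=I$. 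If $\Phi_0^\top\Omega\Phi_0=\Phi_0^\top\Omega'\Phi_0$, pre-multiply by $(\Phi_0^\dagger)^\top$ and post-multiply by $\Phi_0^\dagger$ to get $\Omega=\Omega'$. The explicit formula is $\Omega^i=(\Phi_0^\dagger)^\top(\cdot)\Phi_0^\dagger$, so $\Omega^i$ is symmetric and computable.
\end{itemize}
Hence step (ii) of Algorithm \ref{al2} is exactly
\[
\Sigma^{i+1}=\Sigma^i+\delta_i\bigl(\mathcal{A}_F^\top\Sigma^i+\Sigma^i\mathcal{A}_F+Q_\phi-\Sigma^i\mathcal{B}_FR^{-1}\mathcal{B}_F^\top\Sigma^i\bigr),
\]
with the same projection/reset rule onto the sets $\mathcal{G}_j$. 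The $\mathcal{F}_j$ in the pseudocode should be read as $\mathcal{G}_j$. This is precisely the model-based VI of \cite{Jiang-2016} for the pair $(\mathcal{A}_F,\mathcal{B}_F)$ with weights $(Q_\phi,R)$.

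\textbf{Step 2: hypotheses of the model-based theorem.} The convergence theorem of \cite{Jiang-2016} (see also \cite{Jiang-iter}) requires:
\begin{itemize}
\item stabilizability of $(\mathcal{A}_F,\mathcal{B}_F)$;
\item detectability (or observability) of $(\mathcal{A}_F,Q_\phi^{1/2})$;
\item a step size and nested sets satisfying (\ref{Set-Step});
\item $\Sigma^0\succeq0$.
\end{itemize}
Stabilizability is Lemma \ref{le4}. For detectability, note $Q_\phi=(Q^{1/2}CSF_1^\top)^\top(Q^{1/2}CSF_1^\top)$, so $\ker Q_\phi^{1/2}=\ker Q^{1/2}CSF_1^\top$. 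Detectability of $(\mathcal{A}_F,Q_\phi^{1/2})$ is therefore equivalent to Lemma \ref{le5} via the Hautus test. The remaining two hypotheses hold by construction. Under these conditions the ARE has a unique stabilizing solution $\Sigma^*\succeq0$, and $\Sigma^*$ is the maximal solution.

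\textbf{Step 3: conclusion, and the main obstacle.} Invoking the theorem gives $\Sigma^i\to\Sigma^*$. By the ARE equivalence established after Lemma \ref{le3}, $K=R^{-1}\mathcal{B}_F^\top\Sigma^*$ is the optimal gain for Problem \ref{P1}.

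The main obstacle is this invocation. The result of \cite{Jiang-2016} is stated for controllable/observable pairs in the original state coordinates, whereas here only stabilizability and detectability hold. This is because the projected system may contain stable uncontrollable modes, which come from the filter dynamics $\mathcal{A}_s$. I would first appeal to the appendix of \cite{Huang-newLQR}, which the paper already cites as giving convergent model-based iterations under stabilizability and detectability alone.

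If that appeal does not directly cover VI, I would use a fallback argument. It rests on the ODE/stochastic-approximation reasoning behind \cite{Jiang-2016}: the Riccati differential equation $\dot\Sigma=\mathcal{A}_F^\top\Sigma+\Sigma\mathcal{A}_F+Q_\phi-\Sigma\mathcal{B}_FR^{-1}\mathcal{B}_F^\top\Sigma$ converges to $\Sigma^*$ from every $\Sigma(0)\succeq0$ under stabilizability and detectability, which is a classical result. The discrete updates with step sizes satisfying (\ref{Set-Step}) track this flow, and the sets $\mathcal{G}_j$ guarantee boundedness after finitely many resets.
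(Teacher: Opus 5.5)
Your proposal is correct and follows the paper's own argument. The full row rank of $\Phi_0$ makes (\ref{VIi}) return exactly $\Omega^i=\mathcal{A}_F^\top\Sigma^i+\Sigma^i\mathcal{A}_F+Q_\phi$, so Algorithm \ref{al2} is model-based VI for the transformed problem, and convergence is imported from \cite{Huang-newLQR} via the stabilizability and detectability of Lemmas \ref{le4} and \ref{le5}. Your fallback ODE sketch is not needed. If you did pursue it, its ``finitely many resets'' step would need real work: $\Sigma^*=F_1S^\top P_x^*SF_1^\top$ has rank $n$, so it lies on the boundary of $\mathbb{S}_+^{(m+2)n}$, and near it small Euler steps are not automatically kept inside the sets $\mathcal{G}_j\subset\mathbb{S}_+^{(m+2)n}$.
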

   \begin{proof}
   	By the full row rank property of $\Phi_0$ and the preceding analysis, equation (\ref{VIi}) is equivalent to computing the model-based term $\Omega^{i}=\mathcal{A}_F^\top\Sigma^{i}+\Sigma^{i}\mathcal{A}_F+Q_{\phi}$.
   	Thus, Algorithm \ref{al2} corresponds to model-based VI \cite{Jiang-iter} for the equivalent new LQR problem. Since \cite{Huang-newLQR} establishes that model-based VI converges under only stabilizability and detectability, Lemmas \ref{le4} and \ref{le5} imply that the cost parameter matrices generated by Algorithm \ref{al2} converge to the respective optimal solution.
   \end{proof}
   
   	\begin{remark}
   	Noting that in classic literature on continuous-time direct data-driven control \cite{Jiang-2012,Jiang-2016,SAAA-CT-output}, integral forms of data are often used to avoid numerical differentiation of signals, we may similarly replace the data matrices $\Phi_0$, $U_0$, $Y_0$, and $\Phi_1$ with their corresponding integral forms $\tilde{\Phi}_0:=\int_{0}^{\Delta t} [\phi_{\tau},\cdots,\phi_{\tau+(T-1)\Delta t}]d\tau$, $\tilde{U}_0:=\int_{0}^{\Delta t} [u_{\tau},\cdots,u_{\tau+(T-1)\Delta t}]d\tau$, $\tilde{Y}_0:=\int_{0}^{\Delta t} [y_{\tau},\cdots,y_{\tau+(T-1)\Delta t}]d\tau$, and $\tilde{\Phi}_1:=[\phi_{\Delta t}-\phi_0,\cdots,\phi_{T\Delta t}-\phi_{(T-1)\Delta t}]$, as they still satisfy
   	$$\tilde{\Phi}_1=F_1(\mathcal{A}_s+\mathcal{B}_s^yCS)F_1^\top \tilde{\Phi}_0+F_1\mathcal{B}_s^u \tilde{U}_0,\quad \tilde{Y}_0=CSF_1^\top \tilde{\Phi}_0.$$
   	Nevertheless, this paper maintains the use of the derivative matrix $\Phi_1$. This is because $\dot{\eta}_t$ can be obtained directly from (\ref{eta-sys}) without model knowledge, and the projection matrix $F_1$ is constructed directly from the data matrix $E_0$. Consequently, $\dot{\phi}_t=F_1\dot{\eta}_{t}$ can be computed directly and accurately, eliminating the need for integral data matrices, which would introduce additional numerical errors and computational complexity.
   \end{remark}

   Now we elucidate the efficiency of Algorithms \ref{al1} and \ref{al2}. First, the proposed algorithms only require $\Phi_0$ to be of full row rank, which is guaranteed by a PE input and Lemma \ref{le2}. This renders the algorithms feasible for both single-output and multi-output scenarios, a distinct advantage over existing LS-based PI and VI algorithms. Works such as \cite{SAAA-CT-output,Deng-CT-output,Huang-newLQR} reformulate the iterative equation as an LS problem, which strictly requires the regression matrix, formed by the Kronecker product (or further integrals) of data, to be full row rank. We can use assertion (b) of Lemma \ref{le1} to demonstrate that this rank condition is difficult to achieve in multi-output scenarios. Specifically, ensuring a full row rank regression matrix generally requires the matrix $\Gamma_{\eta}:=[\mathrm{vec}(\eta_t\eta_t^\top),\cdots,\mathrm{vec}(\eta_{t+(T-1)\Delta t}\eta_{t+(T-1)\Delta t}^\top)]\in\mathbb{R}^{(m+p+1)^2n^2\times T}$ to be of full row rank. By assertion (b) of Lemma \ref{le1}, the span of $\eta_{t}$ is $\mathcal{V}=\mathrm{im}(F_1^\top)$. We thus treat the columns of $F_1^\top$ as a basis for $\mathcal{V}$ and set $\eta_{t}=F_1^\top b_t$, where $b_t\in\mathbb{R}^{(m+2)n}$ is a coefficient vector. Then,
   $$\mathrm{vec}(\eta_t\eta_{t}^\top)=\mathrm{vec}(F_1^\top b_tb_t^\top F_1)=(F_1^\top\otimes F_1^\top)\mathrm{vec}(b_tb_t^\top).$$
   Hence,
   $$\mathrm{rank}(\Gamma_{\eta})\leq\mathrm{rank}([\mathrm{vec}(b_tb_t^\top),\cdots,\mathrm{vec}(b_{t+(T-1)\Delta t}b_{t+(T-1)\Delta t}^\top)])\leq (m+2)^2n^2\leq (m+p+1)^2n^2,$$
   where the last inequality holds strictly when system (\ref{system}) is  multi-output. Consequently, $\Gamma_{\eta}$ cannot be of full row rank, implying the regression matrix does not have full row rank either, which may cause LS-based algorithms to fail. Second, the proposed algorithms involve fewer unknowns in their iterative equations. Since this paper reformulates the original LQR problem as an equivalent new LQR problem with accessible states, the unknown parameters in the iterative equations no longer include terms related to the control gain. Additionally, when $p>1$, projecting the input-output filtering vector $\eta_{t}$ using $F_1$ further eliminates redundant information, thereby reducing the number of unknowns in the iterative equations by $(p-1)n((2m+p+3)n+1)/2$. Third, the proposed algorithms require less data. LS-based algorithms necessitate the amount of data to exceed the number of unknowns; for instance, $T\geq (m+p+1)n((m+p+1)n+1)/2+(m+p+1)nm$ in \cite{Deng-CT-output} and $T\geq (m+p)n((m+p)n+1)/2$ in \cite{Huang-newLQR}. In contrast, the proposed algorithms only require the number of columns of $\Phi_0$ to be no less than the number of rows, i.e., $T\geq (m+2)n$.

   \section{Further Discussion}\label{sec5}
   Building upon the content of Sections \ref{sec2_4} and \ref{sec3_1}, this section briefly discusses the trajectory generation for continuous-time linear systems, the PE conditions required in this paper, the determination of the system order $n$, and potential policy optimization improvements enabled by the full row rank matrix $\Phi_0$.

   \textbf{Trajectory generation for continuous-time systems.} Existing research typically follows a fixed paradigm: it extends the weighting vector $\alpha$ in (\ref{alp}) to a time-varying vector $\alpha_t$, constrains $\alpha_t$ via ordinary differential equations with coefficients derived from input-state or input-output data matrices, and generates true trajectories by weighting sampled input-output data matrices with $\alpha_t$. However, current methods lack universality: the work in \cite{CT-PCPE} is limited to systems with measurable states; the approach in \cite{In-out} utilizes data matrices composed of inputs, outputs, and their derivatives, but as shown in equation (\ref{rank-uy}), such a matrix is not of full row rank when the output dimension $p>1$, whereby the corresponding ordinary differential equations cannot be directly solved, making the method invalid. To address this, based on assertion (b) of Lemma \ref{le2}, Proposition \ref{le8} provides constructive conditions for generating valid true trajectories of system (\ref{system}) in multi-output scenarios.
   This resolves the open problem in \cite{In-out}, requiring no state measurability and avoiding high-order numerical differentiation of continuous-time signals.

   	\begin{proposition}\label{le8}
   	Assume the input signal $\{u_t\}_{t\in\mathbb{R}_+}$ of system (\ref{system}) satisfies the $(2n+1)$-th order PE condition, and let $\{y_t\}_{t\in\mathbb{R}_{+}}$ be the corresponding output signal. Construct the full row rank substitute state data matrix $\Phi_0$ at time $t$ according to Lemma \ref{le2}. Then, the following assertions hold.
   	\begin{itemize}
   		\item[\textup{(a)}] If there exists a vector $\alpha_t\in\mathbb{R}^{T}$ such that $
   		\Phi_0\alpha_t^{(1)}=\mathbf{0}$ holds, then the continuously differentiable vector $[(\bar{u}_t)^\top,(\bar{y}_t)^\top]^\top$ satisfying equation (\ref{CT-H}) is a true trajectory of system (\ref{system}).
   		\begin{equation}\label{CT-H}
   			\begin{bmatrix}
   				H_{1}(u_{(t,\Delta t,T-1)})\\
   				H_{1}(y_{(t,\Delta t,T-1)})
   			\end{bmatrix}\alpha_t=\begin{bmatrix}
   				\bar{u}_t\\	\bar{y}_t
   			\end{bmatrix}.
   		\end{equation}
   		\item[\textup{(b)}] For a continuously differentiable trajectory $[(\bar{u}_t)^\top,(\bar{y}_t)^\top]^\top$ of system (\ref{system}), if there exists a vector $\alpha_t\in\mathbb{R}^{T}$ such that the following equation holds,
   		$$\begin{bmatrix}
   			H_{1}(u_{(t,\Delta t,T-1)})\\\Phi_0
   		\end{bmatrix}\alpha_t^{(1)}=-\begin{bmatrix}
   			H_{1}(u^{(1)}_{(t,\Delta t,T-1)})\\\mathbf{0}
   		\end{bmatrix}\alpha_t+\begin{bmatrix}
   			\bar{u}_t^{(1)}\\\mathbf{0}
   		\end{bmatrix},$$
   		then $\bar{y}_t$ can be derived from equation (\ref{CT-H}).
   	\end{itemize}		
   \end{proposition}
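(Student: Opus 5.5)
Both assertions rest on one device: a time-varying combination of the sampled (shifted) data columns produces a trajectory of the isosystem (\ref{full-c-sys}), and the constraint on $\alpha_t$ forces the weights to evolve consistently with that system. Write the data matrices on the delayed window starting at $t$ as $U_0(t)=H_1(u_{(t,\Delta t,T-1)})$, $Y_0(t)=H_1(y_{(t,\Delta t,T-1)})$ and $\Phi_0(t)$. Each column is a sample of a genuine trajectory of (\ref{full-c-sys}) with output map $y=CSF_1^\top\phi$, shifted by $k\Delta t$. Differentiating column-wise therefore gives $\dot{\Phi}_0=\mathcal{A}_F\Phi_0+\mathcal{B}_FU_0$ and $Y_0=CSF_1^\top\Phi_0$, where $\mathcal{A}_F,\mathcal{B}_F$ are as in Section \ref{sec4_2}. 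We also use that $F_1$ is fixed by the initial QR decomposition, since $\mathrm{im}(F_1^\top)$ is the invariant span $\mathcal{V}$ from Lemma \ref{le3}.

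For (a), define the candidate state $\bar\phi_t:=\Phi_0\alpha_t$, with $\bar u_t=U_0\alpha_t$ and $\bar y_t=Y_0\alpha_t$. The product rule together with $\Phi_0\alpha_t^{(1)}=\mathbf 0$ gives
\[
\dot{\bar\phi}_t=\dot\Phi_0\alpha_t+\Phi_0\alpha_t^{(1)}=\mathcal{A}_F\bar\phi_t+\mathcal{B}_F\bar u_t ,
\]
and $\bar y_t=CSF_1^\top\bar\phi_t$. So $(\bar u,\bar y)$ is an input-output trajectory of (\ref{full-c-sys}). By Lemma \ref{le3}, setting $\bar x_t:=SF_1^\top\bar\phi_t$ yields $\dot{\bar x}_t=A\bar x_t+B\bar u_t$ and $\bar y_t=C\bar x_t$, so $(\bar u,\bar y)$ is a true trajectory of (\ref{system}). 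The argument only needs $\alpha_t$ to be differentiable; the required equalities come from (\ref{AM}) and (\ref{SA}).

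For (b), the first block row of the stated ODE reads $U_0\alpha_t^{(1)}+\dot U_0\alpha_t=\bar u^{(1)}_t$. This is exactly $\frac{d}{dt}(U_0\alpha_t)=\bar u^{(1)}_t$, so if the initial weight is chosen with $U_0\alpha_{0}=\bar u_0$, then $U_0\alpha_t=\bar u_t$ for all $t$. The second block row is $\Phi_0\alpha_t^{(1)}=\mathbf 0$, so by (a) the pair $(\bar u_t,Y_0\alpha_t)$ is a trajectory of (\ref{system}) whose input is $\bar u$. Its internal state is $SF_1^\top\Phi_0\alpha_t$. The remaining step is to show $Y_0\alpha_t=\bar y_t$, and this needs the initial condition $\alpha_0$ to match the initial state of the given trajectory. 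We use Theorem \ref{thm2}, i.e.\ the full row rank of $[\Phi_0;U_0]$ under $(2n+1)$-th order PE, to choose $\alpha_0$ with $U_0\alpha_0=\bar u_0$ and $SF_1^\top\Phi_0\alpha_0=\bar x_0$. This is possible because $SF_1^\top$ has full row rank, so some $\bar\phi_0$ maps to $\bar x_0$. With matching input and initial state, uniqueness of solutions of (\ref{system}) gives $C\bar x_t=\bar y_t$, hence $\bar y_t=Y_0\alpha_t$, which is (\ref{CT-H}).

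The main obstacle is solvability of the ODE in (b) for $\alpha_t^{(1)}$. The coefficient matrix $[U_0;\Phi_0]$ has $(m+2)n+m$ rows and $T$ columns and, by Theorem \ref{thm2}, has full row rank. Hence one can take
\[
\alpha_t^{(1)}=[U_0;\Phi_0]^\dagger\bigl([\bar u_t^{(1)}-\dot U_0\alpha_t;\ \mathbf 0]\bigr).
\]
This is a linear ODE with continuous coefficients, so a global solution exists. I would verify that the full-rank property holds uniformly along the moving window $t$. Using Lemma \ref{le2}(b) at each $t$ with the fixed $F_1$ should suffice, because the window columns remain in $\mathcal{V}$, and this is the step I would check most carefully.
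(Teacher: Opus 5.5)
Your proof is correct and is essentially the paper's argument. Your $\bar x_t=SF_1^\top\Phi_0\alpha_t$ coincides with the paper's $H_{1}(x_{(t,\Delta t,T-1)})\alpha_t$; the paper differentiates this directly with the true dynamics, needing only $X_0=SF_1^\top\Phi_0$ rather than Lemma \ref{le3}, and your part (b) is the same input-matching, initial-state-matching and uniqueness argument. Your choice of $\alpha_{t_0}$ through the full row rank of $[U_0;\Phi_0]$ is more careful than the paper's $\alpha_{t_0}=(\Phi_0^{ini})^{\dagger}F_1\bar{\eta}_{t_0}$, which fixes the state part but does not by itself enforce $U_0\alpha_{t_0}=\bar u_{t_0}$, a condition that the first block row of the ODE requires.
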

   
   \begin{proof} 	
   	For assertion (a), defining $\bar{x}_t:=H_{1}(x_{(t,\Delta t,T-1)})\alpha_t$, the linearity of the delayed Hankel matrix yields
   	\begin{align*}
   		\dot{\bar{x}}_t&=\frac{d}{dt}(H_{1}(x_{(t,\Delta t,T-1)})\alpha_t)=H_{1}(\dot{x}_{(t,\Delta t,T-1)})\alpha_t+H_{1}(x_{(t,\Delta t,T-1)})\alpha^{(1)}_t\\
   		&=(AH_{1}(x_{(t,\Delta t,T-1)})+BH_{1}(u_{(t,\Delta t,T-1)}))\alpha_t+H_{1}(x_{(t,\Delta t,T-1)})\alpha^{(1)}_t\\
   		&=AH_{1}(x_{(t,\Delta t,T-1)})\alpha_t+BH_{1}(u_{(t,\Delta t,T-1)})\alpha_t+SF_1^\top\Phi_0\alpha^{(1)}_t\\
   		&=A\bar{x}_t+B\bar{u}_t.
   	\end{align*}
    Denote the substitute state data matrix at the initial time $t_0$ as $\Phi_0^{ini}$, and the initial weighting vector as $\alpha_{t_0}=(\Phi_0^{ini})^{\dagger}F_1\bar{\eta}_{t_0}$, where $\bar{\eta}_{t_0}$ is the filtering vector obtained by driving (\ref{eta-sys}) with $[(\bar{u}_{t_0})^\top,(\bar{y}_{t_0})^\top,\mathbf{0}]^\top$.
   	Consequently, under the initial state $\bar{x}_{t_0}=H_{1}(x_{(t_0,\Delta t,T-1)})\alpha_{t_0}=SF_1^\top\Phi_0^{ini}\alpha_{t_0}$ and input $\bar{u}_t$, $\bar{x}_t$ is the state trajectory of system (\ref{system}). Furthermore, since
   	\begin{align*}
   		\bar{y}_t=H_{1}(y_{(t,\Delta t,T-1)})\alpha_t=CH_{1}(x_{(t,\Delta t,T-1)})\alpha_t=C\bar{x}_t,
   	\end{align*}
   	$\bar{y}_t$ is the output trajectory of system (\ref{system}).
   	
   	For assertion (b), it is straightforward to show that $\bar{u}_t=H_{1}(u_{(t,\Delta t,T-1)})\alpha_t$. Furthermore, since $\Phi_0\alpha_t^{(1)}=\mathbf{0}$, we have
   	\begin{align*}
   		\dot{\bar{x}}_t&=A\bar{x}_t+B\bar{u}_t=A\bar{x}_t+B\bar{u}_t+SF_1^\top\Phi_0\alpha_t^{(1)}\\
   		&=A\bar{x}_t+H_{1}(u_{(t,\Delta t,T-1)})\alpha_t+H_{1}(x_{(t,\Delta t,T-1)})\alpha_t^{(1)}.
   	\end{align*}	
   	From the initial conditions, it follows that $\bar{x}_t=H_{1}(x_{(t,\Delta t,T-1)})\alpha_t$, which further implies that
   	\begin{align*}
   		\bar{y}_t=C\bar{x}_t=CH_{1}(x_{(t,\Delta t,T-1)})\alpha_t=H_{1}(y_{(t,\Delta t,T-1)})\alpha_t.
   	\end{align*}
   	This completes the proof.
   \end{proof}

  \textbf{Piecewise constant PE.}
  Section \ref{sec2_4} defines the continuous-time PE condition using high-order derivatives of signals to derive (\ref{Hux}). In contrast, the work in \cite{CT-PCPE} directly defines a continuous-time input as $N$-th order PE if it satisfies
  $$\mathrm{rank}\left(\begin{bmatrix}
  	H_{N}(u_{(t,\Delta,N+T-1)})\\H_{1}(u_{(t,\Delta,T-1)})
  \end{bmatrix}\right)=mN+n,$$
  and shows that piecewise constant PE inputs satisfy this definition. Piecewise constant PE inputs do not meet the definition adopted herein, but they relax continuity and smoothness requirements while still ensuring a similar full row rank property of the input-state matrix, which relies on delayed signals instead of derivatives. It can be verified via experiments that under piecewise constant PE, the row rank of matrix $E_0$ remains $(m+p+1)n$. Consequently, the full row rank data matrix $\Phi_0$ can still be constructed in the same manner and utilized for model-free optimal controller design. This suggests that the required excitation does not necessitate global high-order continuous differentiability, but only piecewise continuous differentiability. The theoretical analysis of such relaxed continuous-time PE conditions is a direction for future research.

\textbf{Dimension $n$ of the unmeasurable state $x_t$.} Algorithms \ref{al1} and \ref{al2} do not require prior knowledge of the state dimension $n$, as the full QR decomposition naturally reveals the full row rank property of the data matrix $\Phi_0$, confirming its row rank to be $(m+2)n$. Furthermore, the order $n$ of the linear system with unknown dynamics and unmeasurable states can be obtained by dividing the row rank of $\Phi_0$ by the known quantity $m+2$. Compared to calculating $n$ using equation (\ref{rank-uy}), this method avoids the need for numerical differentiation of the input-output signals.

\textbf{Data-enabled policy gradient.}
For the model-free discrete-time LQR problem with measurable states, the corollary (\ref{IK}) of Lemma \ref{willems} has been successfully used to design the data-enabled policy gradient algorithm \cite{Deepo}. Motivated by the structural analogy between the differential operator of continuous-time systems and the shift operator of discrete-time systems, together with the facts that $\phi_{t}$ serves as a substitute state for $x_t$ with an injective linear relationship and that the feedback control gain $K$ is intrinsically related to the full row rank input-substitute state data matrix (as shown in Theorem \ref{thm2} and analogous to (\ref{IK})), the results developed in this paper pave the way for extending model-free policy gradient algorithms to continuous-time settings.

\section{Numerical Experiments}\label{sec6}
This section validates the computational efficiency of Algorithms \ref{al1} and \ref{al2} for single-output problems, as well as their applicability to multi-output problems, through numerical experiments.

\subsection{Single Output Problem}\label{sec6_1}
This subsection initializes the MATLAB random number generator using the $rng(3)$ function. Randomly generate $50$ continuous-time single-input single-output linear time-invariant systems with dimensions $n=2$, $m=1$, $p=1$ that satisfy Assumption \ref{as1}. The dynamics matrix $A$ is a symmetric matrix with all eigenvalues within the interval $(-2,0)$, while the input matrix $B$ and output matrix $C$ are strictly positive matrices. Set $Q=1$ and $R=1$. 
For each randomly generated LQR problem, the eigenvalues of $A-LC$ are chosen as $-5$ and $-6$, and $A_{\epsilon}$ is selected as a controllable canonical form matrix with the same eigenvalues. Each element of
$x_0$ and $\eta_0^{\epsilon}$ independently follows a uniform distribution over the interval $[-1,1]$. 
The required data are collected using the PE input $u_t= -0.2+0.2(\sin(74\pi t)+1.5\sin(38\pi t)+2\sin(26\pi t)+2.5\sin(10\pi t))$ proposed in \cite{Deng-CT-output}. For the PI and VI algorithms that do not include replication of observation errors proposed in \cite{Huang-newLQR} (LS-Based Transformed PI and LS-Based Transformed VI), data collection commences at $t=12s$ with a sampling step of $\Delta t=0.04s$. For the VI algorithm proposed in \cite{Deng-CT-output} (LS-Based VI)  and the algorithms proposed in this paper, sampling begins at the initial time $t=0$ with a sampling step of $\Delta t=0.2s$.
For all algorithms, the precision threshold is set to $\epsilon=0.01$. The initial stabilizing controller $K^0$ for all PI algorithms is selected as the zero matrix of the corresponding dimension, with a maximum of $100$ iterations. The initial cost parameter matrix for all VI algorithms is selected as the identity matrix of the corresponding dimension, with a maximum of $3000$ iterations; the step size is defined as $\{\delta_i\}_{i=0}^{\infty}=10/(i+1000)$, and the sequence of constraint sets for the incremental matrix is $\{\mathcal{G}_j\}_{j=0}^{\infty}=\{\Sigma\in\mathbb{S}_{+}\big\vert \Vert \Sigma\Vert<10^5(j+1)\}$. 
Since a fixed random number generator and PE input are used in the comparative experiments, the regression matrices required by the LS-based algorithms may not necessarily be of full row rank. Therefore, Tables \ref{ta1} and \ref{ta2} only compare the average number of iterations, average computation time, and the average relative error of the resulting feedback control gain, defined as $\frac{\Vert K^{*}_{PI/VI}-K^*\Vert}{\Vert K^*\Vert}$, among the five algorithms for cases where the regression matrix has full row rank. Figure \ref{f1} illustrates the distribution of the non-zero minimum singular values of the obtained data matrices for the $50$ randomly generated LQR problems.

\begin{table}[h]
	\centering
	\begin{tabular}{|c|c|c|}
		\hline
		&LS-Based Transformed PI& Algorithm \ref{al1}\\\hline
		Number of full row rank data matrices / 50&$36$&$50$\\\hline
		Average number of iterations&$4.3889$&$3.700$\\\hline
		Average computation time (s)&$5.6941\times 10^{-4}$&$5.0123\times 10^{-4}$\\\hline
		Average relative error 	&$0.0393$&$6.1397\times 10^{-10}$\\\hline
	\end{tabular}
	\caption{Performance Comparison of PI Algorithms.}
	\label{ta1}
\end{table}

\begin{table}[h]
	\centering
	\begin{tabular}{|c|c|c|c|}
		\hline
		&LS-Based VI&LS-Based Transformed VI& Algorithm \ref{al2}\\\hline
		Number of full row rank data matrices / 50&$42$&$43$&$50$\\\hline
		Average number of iterations&$1678.0476$ &$277.6279$&$137.4000$\\\hline
		Average computation time (s)&$0.0973$ &$0.0189$ &$0.0025$\\\hline
		Average relative error 	&$0.0024$  &$0.2002$  &$1.0031\times 10^{-6}$\\\hline
	\end{tabular}
	\caption{Performance Comparison of VI Algorithms.}
	\label{ta2}
\end{table}

\begin{figure}[h]
	\centerline{\includegraphics[width=0.7\textwidth]{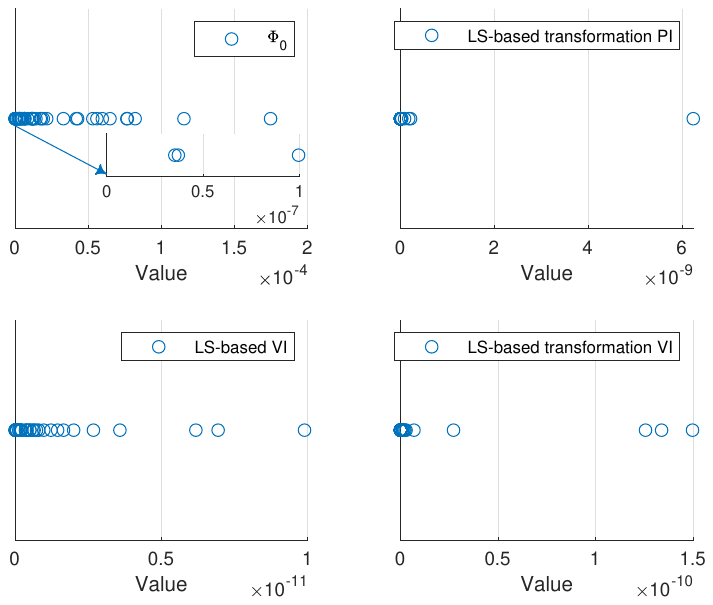}}
	\caption{Comparison of Minimum Singular Values of Data Matrices.}
	\label{f1}
\end{figure}

Based on the experimental results, during the data collection phase, Algorithms \ref{al1} and \ref{al2} allow sampling at any time instant, eliminating the need to wait for the system to reach steady state or to perform Kronecker product operations on the data, thereby reducing the time required for data collection. 
In terms of numerical stability, the data matrix $\Phi_0$
utilized by Algorithms \ref{al1} and \ref{al2} is guaranteed to be of full row rank, with the smallest minimum singular value across the $50$ tests found to be $3.5309\times 10^{8}$. In contrast, the three LS-based algorithms encounter cases where the data matrices are not of full row rank, which requires data regeneration by adjusting random number seeds or PE inputs. Furthermore, the minimum singular values of the full row rank regression matrices for these LS-based methods are all below the order of $10^{-8}$, which introduces computational instability. 
In terms of computational efficiency, Algorithm \ref{al1} demonstrates slight superiority over the LS-Based Transformed PI algorithm. Among the VI algorithms, Algorithm \ref{al2} exhibits significant advantages, requiring fewer iterations and less time to achieve the same convergence precision.
For the optimality of the resulting feedback control gains, Algorithms \ref{al1} and \ref{al2} yield significantly smaller average relative errors. Conversely, the algorithms proposed in \cite{Huang-newLQR} produce larger final relative errors in the tests on randomly generated LQR problems. Specifically, the LS-Based Transformed PI algorithm yields $5$ results with relative errors greater than $0.1$, and the LS-Based Transformed VI algorithm yields $13$ such results, while the remaining results have very small relative errors. This behavior occurs because these two algorithms do not adequately account for the influence of observation errors, which can lead to large errors if data collection is improper. Therefore, the algorithms proposed in this paper exhibit superior performance.

\subsection{Multiple Output Problem}\label{sec6_2}
Consider a continuous-time LQR problem with dimensions $n=4$, $m=2$, and $p=2$, where
\begin{equation}\label{MO}
	\begin{split}
		&	A=\begin{bmatrix}
			-1.0169&-1.4786&1.7280&0.2547\\
			1.5194&-0.5787&0.3642&0.1249\\
			-1.6774&0.4439&-0.6473&-0.3487\\
			-0.3387&-0.2713&0.1172&-0.7287
		\end{bmatrix},\quad B=\begin{bmatrix}
			-0.2938& 0\\
			-0.8479& 0.3075\\
			-1.1201&-1.2571\\
			0 &1
		\end{bmatrix}\\
		& C=\begin{bmatrix}
			1&1&0&0\\0&1&0&0
		\end{bmatrix},\quad Q=I_{2},\quad R=2I_{2}.
	\end{split}
\end{equation}
The eigenvalues of $A-LC$ are selected as $-2$, $-3$, $-4$, and $-5$, and $A_{\epsilon}$ is chosen as a controllable canonical form matrix with the same eigenvalues. Each element of $x_0$ and $\eta_0^{\epsilon}$ independently follows a uniform distribution over the interval $[-10,10]$. The required data are collected using the PE input $u_t=[u_t^1,u_t^2]^\top$, where $u_t^1=0.5\sin(2\pi t)+0.3\cos(4\pi t)+0.2\sin(6\pi t+\pi/4)+0.1\cos(10\pi t)$ and $u_t^2=0.4\cos(3\pi t)+0.25\sin(5\pi t) +0.15\cos(7\pi t-\pi/3) +0.05\sin(9\pi t)$, with a sampling step of $\Delta t=0.2s$. 
Without fixing the random number generator, $100$ data collection trials are conducted to satisfy the feasibility condition of the LS-based algorithms, i.e., the regression matrix being of full row rank. The results show that the average row rank of the regression matrices for the LS-Based VI algorithm proposed in \cite{Deng-CT-output} is
$115.84$ (with $250$ rows); for the LS-Based Transformed PI and LS-Based Transformed VI algorithms proposed in \cite{Huang-newLQR}, the average row ranks of their regression matrices are $20.03$ and $21.56$ (with $136$ rows), respectively. None of these three algorithms achieved a full row rank regression matrix in any of the $100$ trials. This indicates that the existing LS-based algorithms lack practical feasibility for multi-output problems.

Now we consider the algorithms proposed in this paper. The MATLAB random number generator is initialized using the $rng(3)$ function, and the precision threshold for Algorithms \ref{al1} and \ref{al2} is set to $\epsilon=0.01$.
Given that the dynamics matrix $A$ of the LQR problem (\ref{MO}) is Hurwitz stable, the initial stabilizing controller for Algorithm \ref{al1} can be selected as $K^0=\mathbf{0}_{m\times (m+2)n}$. Algorithm \ref{al1} terminates after $7$ iterations, with an iterative residual of $\Vert K^{7}-K^{6}\Vert=3.8273\times 10^{-6}$ at termination. The curve of the relative error $\frac{\Vert K^{i}-K^*\Vert}{\Vert K^*\Vert}$ against the number of iterations is shown in Figure \ref{f2}. The final relative error achieved is $\frac{\Vert K_{PI}^*-K^*\Vert}{\Vert K^*\Vert}=3.8273\times 10^{-9}$, and the calculated optimal cost is $\phi_0^\top \Sigma_{PI}^{*}\phi_0=11.4715$, with a difference of $8.6535\times 10^{-11}$ from the true optimal cost $x_0^\top P_x^* x_0=11.4715$. Therefore, the solution $K_{PI}^*$ obtained by Algorithm \ref{al1} is a nearly optimal solution to the LQR problem (\ref{MO}).

\begin{figure}[h]
	\centerline{\includegraphics[width=0.7\textwidth]{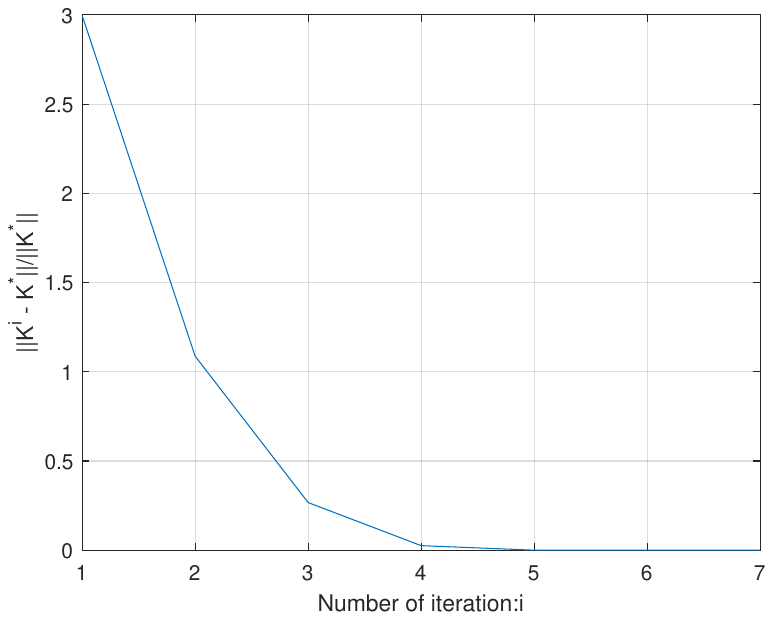}}
	\caption{Iteration curve of the relative error for Algorithm \ref{al1}.}
	\label{f2}
\end{figure}

If the element $-0.5787$ at the $(2,2)$ position of the dynamics matrix $A$ in the LQR problem (\ref{MO}) is modified to $0.5787$, the dynamics matrix $A$ is no longer Hurwitz stable, but Assumption \ref{as1} is still satisfied.
Algorithm \ref{al2} is applied to the modified LQR problem with the altered $A$. The initial cost parameter matrix is selected as $\Sigma^0=0.01I_{(m+2)n}$, the step size is set to $\{\delta_i\}_{i=0}^{\infty}=100/(i+1000)$, and the sequence of constraint sets for the incremental matrix is $\{\mathcal{G}_j\}_{j=0}^{\infty}=\{\Sigma\in\mathbb{S}_{+}^{(m+2)n}\big\vert \Vert \Sigma\Vert<10^5(j+1)\}$. Algorithm \ref{al2} terminates after $155$ iterations, with an iterative residual of $\Vert P^{155}-P^{154}\Vert/\delta_{154}=8.9369\times 10^{-3}$ at termination. The curve of the relative error $\frac{\Vert K^{i}-K^*\Vert}{\Vert K^*\Vert}$ against the number of iterations is shown in Figure \ref{f3}. The final relative error achieved is $\frac{\Vert K_{VI}^*-K^*\Vert}{\Vert K^*\Vert}=5.1723\times 10^{-8}$, and the calculated optimal cost is  $\phi_0^\top \Sigma_{VI}^{*}\phi_0=17.7086$, with a difference of $1.1727\times 10^{-5}$ from the true optimal cost $x_0^\top P_x^* x_0=17.7086$. Therefore, the solution $K_{VI}^*$ obtained by Algorithm \ref{al2} is a nearly optimal solution to the original problem.

\begin{figure}[h]
	\centerline{\includegraphics[width=0.7\textwidth]{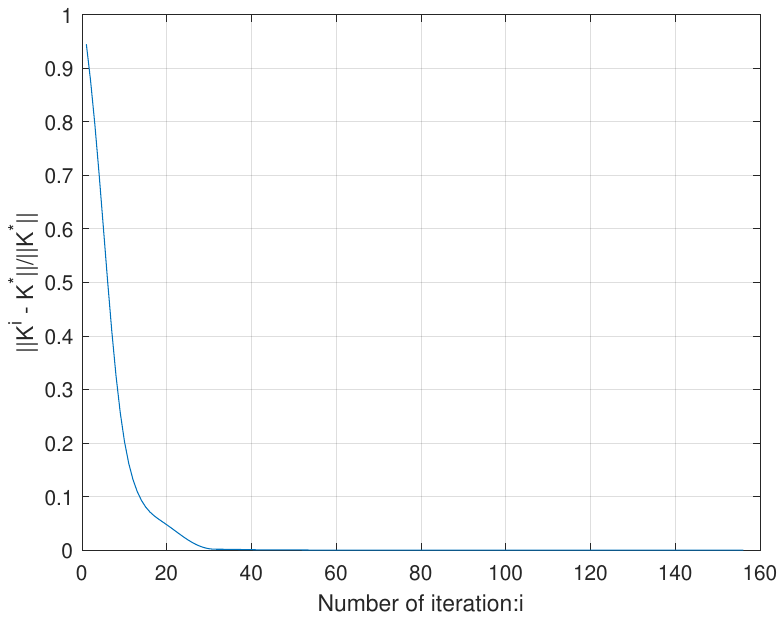}}
	\caption{Iteration curve of the relative error for Algorithm \ref{al2}.}
	\label{f3}
\end{figure}

The controllability assumption in Assumption \ref{as1} is imposed primarily to facilitate the derivation of the theoretical results in this paper. If system (\ref{system}) is stabilizable but not controllable, Algorithms \ref{al1} and \ref{al2} remain feasible provided that suitable user-defined parameters are chosen such that both the state parameterization matrix $S$ and the data matrix $\Phi_0$ have full row rank. If we set $A=\mathrm{diag}(-1,-2,-3,-4)$,
$$B=\begin{bmatrix}
	1&0&0&0\\0&1&1&0
\end{bmatrix}^\top,\quad C=\begin{bmatrix}
	1&0& 1&0\\ 0& 1& 0& 1
\end{bmatrix},$$
then the linear system satisfies observability and stabilizability but is uncontrollable. If the cost weighting matrices and user-defined parameters remain unchanged, the matrices $S$ and $\Phi_0$ obtained thereby still have full row rank.
Applying Algorithm \ref{al1} to this uncontrollable LQR problem, the algorithm terminates after $4$ iterations with an iterative residual of $\Vert K^{4}-K^{3}\Vert=8.347\times 10^{-6}$. The final relative error achieved is $\frac{\Vert K_{PI}^*-K^*\Vert}{\Vert K^*\Vert}=1.2689\times 10^{-8}$, and the calculated optimal cost is $\phi_0^\top \Sigma_{PI}^{*}\phi_0=5.8385$, differing from the true optimal cost $x_0^\top P_x^* x_0=5.8385$ by $5.4712\times 10^{-13}$. Applying Algorithm \ref{al2} to this uncontrollable LQR problem, the algorithm terminates after $53$ iterations with an iterative residual of $\Vert K^{53}-K^{52}\Vert=7.7395\times 10^{-3}$. The final relative error achieved is $\frac{\Vert K_{VI}^*-K^*\Vert}{\Vert K^*\Vert}=4.3403\times 10^{-7}$, and the calculated optimal cost is $\phi_0^\top \Sigma_{VI}^{*}\phi_0=5.8385$, differing from the true optimal cost $x_0^\top P_x^* x_0=5.8385$ by $3.6101\times 10^{-7}$. These results demonstrate that Algorithms \ref{al1} and \ref{al2} can still converge to a nearly optimal solution for uncontrollable LQR problems.

	\section{Conclusion}\label{sec7}		
In the continuous-time framework, this work proposes an effective substitute state construction method that guarantees the data matrix to be of full row rank. Based on this, the continuous-time LQR problem with unmeasurable states and unknown model parameters is equivalently transformed into a new LQR problem with the proposed substitute state as the state variable and a known input matrix. Efficient model-free PI and VI algorithms are then designed to solve for the optimal feedback controller, which eliminate hard-to-satisfy feasibility conditions, avoid reliance on LS solutions, and provide solid theoretical guarantees.

The data-driven theory proposed in this paper for unmeasurable state settings exhibits strong generalizability. Future work will focus on exploring its integration with other reinforcement learning algorithms and its application to other optimal control problems. On the other hand, while the data used herein are exact, real-world data is typically corrupted by noise, making algorithm design and theoretical analysis under noisy data a subject for future research. Furthermore, the substitute states constructed in relevant literature and this paper remain redundant representations of the true states, which increases the computational burden in high-dimensional systems. Thus, developing model-free solutions for high-dimensional optimal control problems with unmeasurable states is also a key focus for our future research.

	\bibliographystyle{unsrt}
	\bibliography{CT-output-ref}           

\end{document}